\author{
	Karl Bringmann\thanks{Saarland University and Max Planck Institute for Informatics, Saarland Informatics Campus, Saarbr\"ucken, Germany. \texttt{bringmann@cs.uni-saarland.de}}
	\quad
	Vasileios Nakos\thanks{Saarland University and Max Planck Institute for Informatics, Saarland Informatics Campus, Saarbr\"ucken, Germany. \texttt{vnakos@mpi-inf.mpg.de}}
}
\date{}
\title{Top-$k$-Convolution and \\the Quest for Near-Linear Output-Sensitive Subset Sum\thanks{This work is part of the project TIPEA that has received funding from the European Research Council (ERC) under the European Unions Horizon 2020 research and innovation programme (grant agreement No.~850979).}}
\newtheorem{theorem}{Theorem}[section]
\newtheorem{lemma}[theorem]{Lemma}
\newtheorem{definition}[theorem]{Definition}
\newtheorem{corollary}[theorem]{Corollary}
\newtheorem{observation}[theorem]{Observation}
\newtheorem{claim}[theorem]{Claim}
\newtheorem{question}[theorem]{Question}
\newcommand{\ostar}{\mathbin{\mathpalette\make@circled\star}}
\newcommand{\make@circled}[2]{%
  \ooalign{$\m@th#1\smallbigcirc{#1}$\cr\hidewidth$\m@th#1#2$\hidewidth\cr}%
}
\newcommand{\smallbigcirc}[1]{%
  \vcenter{\hbox{\scalebox{0.77778}{$\m@th#1\bigcirc$}}}%
}
\newcommand{\conv}{\star}
\newcommand{\convBool}{\ostar}
\newcommand{\wt}{\widetilde}
\newcommand{\eps}{\epsilon}
\renewcommand{\i}{\mathbf{i}}
\renewcommand{\varepsilon}{\epsilon}
\renewcommand{\tilde}{\wt}
\renewcommand{\eps}{\epsilon}
\newcommand{\out}{\mathrm{out}}
\DeclareMathOperator{\supp}{supp}
\DeclareMathOperator{\poly}{poly}
\DeclareMathOperator{\polylog}{polylog}
\newcommand{\Oh}{O}
\newcommand{\tOh}{\widetilde{O}}
\newcommand*{\xMin}{1}%
\newcommand*{\xMax}{9}%
\newcommand*{\yMin}{1}%
\newcommand*{\yMax}{9}%
\newcommand{\karl}[1]{\textbf{\color{red}[Karl: #1]}}
\newcommand*{\RN}[1]{\expandafter\@slowromancap\romannumeral #1@}
\newcommand{\define}[4][ignore]{%
  \ifstrequal{#1}{ignore}{}{
  \@namedef{thmtitle@#2}{#1}}%
  \@namedef{thm@#2}{#4}%
  \@namedef{thmtypen@#2}{lemma}%
  \newtheorem{thmtype@#2}[theorem]{#3}%
  \newtheorem*{thmtypealt@#2}{#3~\ref{#2}}%
}
\newcommand{\state}[1]{%
  \@namedef{curthm}{#1}
  \@ifundefined{thmtitle@#1}{
  \begin{thmtype@#1}
    }{
  \begin{thmtype@#1}[\@nameuse{thmtitle@#1}]
  }
    \label{#1}
    \@nameuse{thm@#1}
  \end{thmtype@#1}
  \@ifundefined{thmdone@#1}{
  \@namedef{thmdone@#1}{stated}%
  }{}
}
\newcommand{\restate}[1]{%
  \@namedef{curthm}{#1}
  \@ifundefined{thmtitle@#1}{
    \begin{thmtypealt@#1}
    }{
  \begin{thmtypealt@#1}[\@nameuse{thmtitle@#1}]
  }
    \@nameuse{thm@#1}
  \end{thmtypealt@#1}
  \@ifundefined{thmdone@#1}{
  \@namedef{thmdone@#1}{stated}%
  }{}
}
\newcommand{\thmlabel}[1]{
  \@ifundefined{thmdone@\@nameuse{curthm}}{\label{#1}
    }{\tag*{\eqref{#1}}}
}
\begin{document}

\begin{titlepage}
  \maketitle
  
\begin{abstract}

In the classical \textsc{SubsetSum} problem we are given a set $X$ and a target $t$, and the task is to decide whether there exists a subset of $X$ which sums to $t$. A recent line of research has resulted in $\tilde{O}(t)$-time algorithms, which are (near-)optimal under popular complexity-theoretic assumptions. On the other hand, the standard dynamic programming algorithm runs in time $O(n \cdot |\mathcal{S}(X,t)|)$, where $\mathcal{S}(X,t)$ is the set of all subset sums of $X$ that are smaller than~$t$. Furthermore, all known pseudopolynomial algorithms actually solve a stronger task, since they actually compute the whole set $\mathcal{S}(X,t)$.

As the aforementioned two running times are incomparable, in this paper we ask whether one can achieve the best of both worlds: running time $\tilde{O}(|\mathcal{S}(X,t)|)$. In particular, we ask whether $\mathcal{S}(X,t)$ can be computed in near-linear time in the output-size. 
Using a diverse toolkit containing techniques such as color coding, sparse recovery, and sumset estimates, we make considerable progress towards this question and design an algorithm running in time $\tilde{O}(|\mathcal{S}(X,t)|^{4/3})$. 

Central to our approach is the study of \emph{top-$k$-convolution}, a natural problem of independent interest: given sparse polynomials with non-negative coefficients, compute the lowest $k$ non-zero monomials of their product.
We design an algorithm running in time $\tOh(k^{4/3})$, by a combination of sparse convolution and sumset estimates considered in Additive Combinatorics. Moreover, we provide evidence that going beyond some of the barriers we have faced requires either an algorithmic breakthrough or possibly new techniques from Additive Combinatorics on how to pass from information on restricted sumsets to information on unrestricted sumsets.

\end{abstract}

  \thispagestyle{plain}
\end{titlepage}
\newpage
\tableofcontents
\thispagestyle{plain}
\newpage

\setcounter{page}{1}


\section{Introduction}

\subsection{Subset Sum}
\label{sec:introsubsetsum}

\textsc{SubsetSum} is a fundamental problem at the intersection of computer science, mathematical optimization, and operations research. 
In this problem, given a set $X$ of $n$ integers and a target~$t$, the task is to decide whether there exists a subset of $X$ that sums to $t$. The problem belongs to Karp's initial list of \textsc{NP}-complete problems~\cite{Karp72}, and it has given rise to a plethora of algorithmic techniques, see, e.g., the monographs~\cite{KPP04book,MT90book}. 
Apart from being a cornerstone in algorithm design, \textsc{SubsetSum} draws its importance from being a special case of many other problems, like \textsc{Knapsack} or \textsc{Integer Programming}. It has also played a role in cryptography, as Merkle and Hellman~\cite{MH78} based their cryptosystem on this problem, see also~\cite{Sha84,BO88,CR88,Odl90,IN96}. 

Several classic algorithms for \textsc{SubsetSum} are typically taught in undergraduate courses, including the meet-in-the-middle algorithm running in time $O(2^{n/2})$~\cite{HS74} and Bellman's dynamic programming algorithm running in pseudopolynomial time $O(n \cdot t)$~\cite{bellman1957dynamic}.

Surprisingly, after decades of research, major algorithmic advances were still discovered in the last 10 years, e.g.,~\cite{OB11,LMS11a,DKS12,Austrin13,GS15,AKKN15,AKKN16,LWWW16,
BGNV17,Ned17,Bring17,KoiliarisX17,JH19,ABHDD19,ABJTW19}. 
Among these developments, the most relevant for this paper are improvements over Bellman's $O(n \cdot t)$ algorithm: Koiliaris and Xu~\cite{KoiliarisX17} designed a deterministic algorithm running in time\footnote{By $\tilde{O}(T)$ we hide factors of the form $\polylog(T)$ as well as factors $\polylog(u)$, where $u$ is the universe size, and $\polylog(t)$, where $t$ is the target.} $\tilde{O}( \mathrm{min}\{ \sqrt{n} \cdot t, t^{4/3} \})$, and Bringmann~\cite{Bring17} devised a randomized algorithm running in time $\tilde{O}(t)$ (which was improved in terms of log factors in~\cite{JH19}). 
The running time $\tilde{O}(t)$ of the randomized algorithms is optimal 
under the Strong Exponential Time Hypothesis~\cite{ABHDD19} as well as under the \textsc{SetCover} Hypothesis~\cite{Cygan+16}. 

Thus, research on pseudopolynomial algorithms for \textsc{SubsetSum} with respect to parameter $t$ is more or less finished. However, it remains to study whether the recent improvements generalize to other parameters as well as to variants of \textsc{SubsetSum}. For instance, this has been done for the \textsc{ModularSubsetSum} problem in~\cite{ABJTW19}. 
In this paper, we start from the observation that Bellman's classic dynamic programming algorithm can be implemented to run in time $O(n \cdot |\mathcal{S}(X,t)|)$, where $\mathcal{S}(X,t)$ is the set of all subset sums of $X$ below $t$. 
Since $|\mathcal{S}(X,t)|$ can be much smaller than~$t$, so far the running times $\tOh(t)$ and $O( n \cdot |\mathcal{S}(X,t)|)$ are incomparable. Thus, despite the running time $\tOh(t)$ being matched by a conditional lower bound, in situations where $|\mathcal{S}(X,t)|$ is small Bellman's algorithm can outperform the recent improved algorithms. To obtain the best of both worlds, it would thus be desirable to consider $|\mathcal{S}(X,t)|$, rather than $t$, as the parameter to measure the computational complexity of the problem, and to similarly shave off the factor~$n$ from the running time of Bellman's algorithm. In particular, since all previous pseudopolynomial algorithms for \textsc{SubsetSum} produce all attainable subset sums smaller than $t$, a natural question is whether one can design a \emph{near-linear output-sensitive} algorithm.

\begin{question} \label{question:Q1}
Is there an algorithm that computes $S(X,t)$ in time $\tOh(|\mathcal{S}(X,t)|)$?
\end{question}


Our work struggles to make progress towards understanding \textsc{SubsetSum} under this new computational perspective, and it lead us to study a new type of sparse convolution problem.

\subsection[Top-$k$-Convolution]{Top-\boldmath$k$-Convolution}

Convolution and Boolean convolution are fundamental computational primitives that frequently arise in algorithm design, e.g., when combining solutions of two subproblems.
The \emph{Boolean convolution} $f \convBool g$ of vectors $f,g \in \{0,1\}^u$ is the vector with entries $(f \convBool g)_k = \bigvee_{0 \le i \le k} f_i \wedge g_{k-i}$ for $0 \le k < 2u$. This arises when we split a problem into two subproblems, so that the whole problem has a solution of size $k$ if and only if for some $i$ the left subproblem has a solution of size $i$ and the right subproblem has a solution of size $k-i$. Moreover, Boolean convolution is equivalent to \emph{sumset computation}, where we are given sets $A,B \subseteq \{0,1,\ldots,u-1\}$ and the task is to compute $A+B$, the set of all sums $a+b$ with $a \in A,\, b \in B$. 

The \emph{convolution} $f \conv g$ of vectors $f,g \in \mathbb{R}^u$ is the vector with entries $(f \conv g)_k = \sum_{i=0}^k f_i \cdot g_{k-i}$ for $0 \le k < 2u$. When we split a problem into two subproblems, and $f_i$ and $g_i$ count the number of size-$i$ solutions of the left and right subproblem, then $(f \conv g)_k$ counts the number of size-$k$ solutions of the whole problem. Moreover, convolution is equivalent to polynomial multiplication, where we are given the coefficients of two polynomials and want to compute the coefficients of their product.

Boolean convolution can be solved via convolution, and convolution can be solved in time $\Oh(u \log u)$ using Fast Fourier Transform (FFT). 
However, when the input vectors are sparse, we can ask for algorithms that compute convolutions much faster than performing FFT. 
Specifically, the ultimate goal is an algorithm running in near-linear output-sensitive time, i.e., in near-linear time in terms of the number of non-zero entries of $f \conv g$.
This practically and theoretically relevant problem is called sparse convolution (or sparse polynomial multiplication) and has been studied, e.g., in~\cite{CH02,roche2008adaptive,monagan2009parallel,van2012complexity,AR15,CL15,roche2018can,N19,BriNak19}.
The ultimate goal of near-linear output-sensitive time has been achieved by Cole and Hariharan~\cite{CH02} for non-negative vectors by a Las Vegas algorithm, in~\cite{N19} for general vectors by a Monte Carlo algorithm, and in~\cite{BriNak19} for non-negative vectors by an almost linear time deterministic algorithm. 

\medskip
In this paper we study a natural variant that we call \emph{Top-$k$-Convolution}: Given vectors $f,g \in \mathbb{R}^u$, compute the $k$ lowest non-zero entries of $f \conv g$. Formally, denoting by $i_1 < i_2 < \ldots < i_\ell$ all indices of non-zero entries of $f \conv g$, our goal is to compute the pairs $(i_1, (f \conv g)_{i_1}), \ldots, (i_k, (f \conv g)_{i_k})$. \emph{Boolean Top-$k$-Convolution} is defined analogously, i.e., the task is to compute the lowest $k$ indices $i_1<\ldots<i_k$ of 1-entries of $f \convBool g$. 

``Top-$k$'' problems, that ask for the $k$ best solutions, are well motivated from a practical perspective, e.g., for displaying search results.
Note that in the setting where we split a problem into two subproblems, Boolean Top-$k$-Convolution asks for the $k$ smallest solution sizes. In the polynomial multiplication setting, Top-$k$-Convolution asks for the $k$ lowest-degree monomials in the product of two given sparse polynomials. The problem is equivalent if we replace ``lowest'' by ``highest'' (by reversing $f$ and $g$), and thus Top-$k$-Convolution is also equivalent to computing the $k$ highest-degree monomials in the product of two given sparse polynomials. 
As an additional major application, we present a connection to \textsc{SubsetSum} in this paper.


Therefore, sparse Top-$k$-Convolution is a well-motivated problem with several applications, and it is surprising that to the best of our knowledge it has not been explicitly studied before. 
Throughout the paper we will assume that $f,g$ are non-negative vectors, since this is satisfied in many applications and the possible cancelations resulting from negative entries make the problem considerably harder (indeed, sparse convolution with negative entries has been solved much later~\cite{N19} than on non-negative vectors~\cite{CH02}).
Note that Top-$k$-Convolution on non-negative vectors can be solved naively in time $\Oh(k^2)$: The lowest $k$ non-zeros of $f \conv g$ are among the combinations of the lowest $k$ non-zeros of $f$ and the lowest $k$ non-zeros of $g$. This allows us to assume that $f$ and $g$ are $k$-sparse, and we can compute $f \conv g$ naively in time $\Oh(k^2)$ to obtain its lowest $k$ non-zeros.

\subsection{Our Contribution}
We initiate the study of \textsc{SubsetSum} with respect to the parameter $|\mathcal{S}(X,t)|$, making considerable progress in this direction. We show an ``output-sensitivity preserving'' reduction from \textsc{SubsetSum} to Top-$k$-Convolution, such that if the latter can be solved in near-linear output-sensitive time so can the former. Then, we investigate upper and lower bounds for Top-$k$-Convolution on non-negative vectors (as well as related restricted variants of sumset computation and sparse convolution). 
In particular, we present a randomized $\tOh(k^{4/3})$-time algorithm for Top-$k$-Convolution, resulting in time $\tOh(|\mathcal{S}(X,t)|^{4/3})$ for \textsc{SubsetSum}.
Our algorithms fall into a natural class that we call ``rectangle covering algorithms'', for which we show that they cannot yield near-linear time.
Our technical machinery draws from a wide range of techniques such as color coding, sparse convolutions, and sumset estimates from Additive Combinatorics.

\section{Results and Techniques}

\subsection{Preliminaries}

We write $\mathbb{N} = \{0,1,\ldots\}$ and $[n] = \{0,1,\ldots,n\}$ for any $n \in \mathbb{N}$.
For sets $A,B \subseteq \mathbb{N}$ define $A+B =  \{ a+b \mid a \in A,\, b \in B \}$. We also define $\max(A) = \max_{x \in A} x$, and similarly $\mathrm{min}(A)$. 
For any set $X \subseteq \mathbb{N}$ we define $\Sigma(X) = \sum_{x \in X} x$. For $t \in \mathbb{N}$ we define the set of all subset sums of $X$ below $t$ as
\[	\mathcal{S}(X,t) = \{ \Sigma(Y) \mid Y \subseteq X,\, \Sigma(Y) \le t \}.	\]
For integers $a,b$ we write $[a,b] = \{a, \ldots, b\}, (a,b) = \{a+1,\ldots, b-1\}$, and similarly for $(a,b]$ and $[a,b)$. 

For a vector $f \in \mathbb{R}^d$ we let $\|f\|_0$ be its number of non-zero entries. For a set $S \subseteq [d]$ we denote by $f_S$ the vector that is zeroed out outside of $S$, i.e., $(f_S)_i = f_i$ for $i \in S$ and $(f_S)_i = 0$ otherwise. We shall use $0$-indexed vectors throughout the paper. We also use a non-standard notation of $\tilde{O}(T)$ throughout the paper that hides factors of the form $\polylog(T)$ as well as $\polylog(u)$, where $u$ is the universe size, or $\polylog(t)$, where $t$ is the target.

We shall need the following result by Cole and Hariharan and an immediate corollary. 

\begin{theorem}[\cite{CH02}] \label{thm:cole_hariharan}
Given non-negative vectors $f,g$ of length $d$, we can compute $f \conv g$ in expected time $O\left( \|f \conv g\|_0 \cdot \log^2 d \right)$.
\end{theorem}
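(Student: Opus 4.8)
Since this result is due to Cole and Hariharan, a proof amounts to presenting (a streamlined version of) their sparse‑convolution algorithm; here is the plan. Write $k := \|f \conv g\|_0$ for the output size, and note that non‑negativity of $f,g$ guarantees that no coefficient of $f \conv g$ vanishes by cancellation, so $k$ genuinely equals the number of monomials we must discover (and each discovered monomial is a genuine, never‑to‑be‑revised answer). First I would remove the obstacle that $k$ is unknown: run the core routine below with guesses $K = 2, 4, 8, \ldots$, and after each run verify the candidate product (e.g.\ by a single randomized polynomial‑identity test against the hashed data, or by one extra hashed convolution), stopping at the first $K$ for which verification succeeds, which happens once $K = \Theta(k)$. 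The work is geometric in $K$, so this costs only a logarithmic overhead and preserves the expected output‑sensitive bound.

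\textbf{The core routine: hashing into $\Theta(k)$ buckets.} Fix a prime $p > 2d$, pick $a$ uniformly at random, and set $h(x) = ((a x) \bmod p) \bmod m$ with $m = \Theta(K)$. Build length‑$O(m)$ vectors by folding $f$ and $g$ along $h$ (adding $f_i$ into coordinate $h(i)$), together with the weighted foldings that add $i\, f_i$ and $i^2 f_i$, and similarly for $g$; then compute these $O(1)$ length‑$O(m)$ convolutions by FFT in time $O(m \log m)$. Because $h$ is linear up to the two wraparounds mod $p$ and mod $m$, each output pair $(\ell,(f\conv g)_\ell)$ contributes its value to one of a constant number of predictable buckets. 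A counting argument over the random choice of $a$ gives $\Pr_a[h(\ell_1)=h(\ell_2)] = O(1/m)$ for fixed $\ell_1 \neq \ell_2 < 2d$, so with $m$ a large constant times $k$ the expected number of colliding output‑index pairs is at most $k/10$; by Markov, a constant fraction of the (at most $k$) output indices are the unique contributor to their bucket. For such an \emph{isolated} index $\ell$, reading the three buckets yields $v = (f\conv g)_\ell$, $v_1 = \ell v$, $v_2 = \ell^2 v$, hence $\ell = v_1/v$ is recovered exactly, and the consistency test $v \cdot v_2 = v_1^2$ — backed by a second independent hash, or an exact re‑evaluation — lets us reject buckets that are in fact collisions.

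Then I would subtract the recovered monomials and recurse on the residual: each round discovers a constant fraction of the remaining non‑zeros in expectation, so $O(\log k)$ rounds suffice, each costing $O(k \cdot \polylog(d))$ (the $O(1)$ FFTs of length $O(k)$, plus $O(\log d)$‑word arithmetic for evaluating $h$), for a total of $O(k \log^2 d)$ expected time, matching the claim.

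\textbf{Main obstacle.} The delicate part is the isolation‑and‑recovery step: one must bound, over the choice of $a$, the expected number of colliding pairs among the $\le k^2$ candidates $(\ell_1,\ell_2)$ — using that $|\ell_1 - \ell_2| \le 2d$ has only $O(\log d / \log\log d)$ prime divisors, so few choices of $a$ force a collision mod $p$ and then mod $m$ — and one must argue that isolated buckets can be distinguished from colliding ones with only $O(1)$ auxiliary hashed convolutions, so that the whole procedure is both correct with good probability and charges its running time only against the true output size. This is precisely the technical core of Cole and Hariharan's argument, and everything else (the doubling search for $K$, the recursion on the residual, and the arithmetic bookkeeping) is routine.
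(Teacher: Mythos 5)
The paper does not actually prove this statement: it is cited verbatim from Cole and Hariharan~\cite{CH02}, so there is no ``paper's proof'' against which to compare your sketch. On its own terms, your sketch is a plausible reconstruction in the modern hash‑and‑recover style of sparse convolution, but it is not what Cole and Hariharan actually did --- their algorithm is a Las Vegas recursion over universe sizes with a verification mechanism, not a weighted‑moment decoder --- and your version has a concrete gap.

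The gap is the interaction between the nonlinear hash $h(x) = ((ax)\bmod p)\bmod m$ and the moment‑based decoding. For a fixed output index $\ell$, the mass $(f\conv g)_\ell = \sum_{i+j=\ell} f_i g_j$ does \emph{not} all land in a single bucket: different pairs $(i,j)$ with $i+j=\ell$ can have $h(i)+h(j)$ equal to any of the (up to four) values congruent to $h(\ell)$ or $h(\ell)+(p\bmod m)$ modulo $m$, depending on the two carries. Consequently a bucket can be ``isolated'' (receives contributions only from one output index $\ell$) and yet contain only a partial value $c < (f\conv g)_\ell$. Your consistency test $v\cdot v_2 = v_1^2$ cannot detect this, since it passes for any $(c,\ell c,\ell^2 c)$; you recover $\ell$ correctly but $v = c$ may be a strict under‑count. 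This breaks the subtract‑and‑recurse step: after subtracting a wrong partial value the residual does not vanish at $\ell$, and the claim that each round eliminates a constant fraction of non‑zeros no longer follows. Two standard fixes: (i) hash by reduction modulo a random prime $q$ and take \emph{cyclic} convolutions of length $q$, so that every pair $(i,j)$ with $i+j=\ell$ lands in the same bucket $\ell \bmod q$ and the moment decoder is sound; or (ii) use the hashed rounds only for \emph{support} recovery (where partial $v$ still gives the correct $\ell$), and compute exact values in a separate pass once the support is known. A second, more minor, issue is that your collision analysis conflates two different hash families: the ``few prime divisors of $\ell_1-\ell_2$'' argument belongs to the mod‑random‑prime hash, not to $((ax)\bmod p)\bmod m$, whose pairwise collision bound $O(1/m)$ is proved by a direct uniformity argument over $a$. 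Finally, note the theorem is stated as a Las Vegas bound; a doubling search whose stopping rule is a Monte Carlo identity test does not by itself give a Las Vegas algorithm, so the verification step needs more care than you allow it.
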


\begin{theorem} \label{thm:sparse_convolution}
Given $A,B \subseteq [u]$, we can compute $A+B$ in expected time $O\left( |A+B| \cdot \log^2 u \right)$.
\end{theorem}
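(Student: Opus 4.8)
The plan is to realize sumset computation as a special case of non-negative sparse convolution and then quote Theorem~\ref{thm:cole_hariharan}. First, I would encode $A$ and $B$ by their indicator vectors $f,g$ of length $d := u+1$, setting $f_i = 1$ if $i \in A$ and $f_i = 0$ otherwise, and likewise $g_i = 1$ if $i \in B$ and $g_i = 0$ otherwise. Both vectors are non-negative, so Theorem~\ref{thm:cole_hariharan} is applicable to them.

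The key observation is that $\supp(f \conv g) = A + B$. Indeed, by definition $(f \conv g)_k = \sum_{i=0}^{k} f_i g_{k-i}$ counts the number of pairs $(a,b)$ with $a \in A$, $b \in B$, and $a+b = k$. Since every summand is non-negative, this entry is strictly positive precisely when at least one such representation exists, i.e., precisely when $k \in A+B$; no cancellation can occur. In particular $\|f \conv g\|_0 = |A+B|$.

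It then remains to run the algorithm of Theorem~\ref{thm:cole_hariharan} on $f$ and $g$, obtaining $f \conv g$ — and hence, by collecting the indices of its non-zero entries, the set $A+B$ — in expected time $O\left(\|f \conv g\|_0 \cdot \log^2 d\right) = O\left(|A+B| \cdot \log^2 u\right)$, where we used $d = u+1 = O(u)$ so that $\log^2 d = O(\log^2 u)$, and where reading off the support costs only time linear in $|A+B|$. There is no genuine obstacle here; the single point that deserves a line of justification is that passing from $f \conv g$ to its support loses no information, which is exactly where non-negativity of the indicator vectors — the hypothesis required by Theorem~\ref{thm:cole_hariharan} — enters.
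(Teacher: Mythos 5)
Your proof is correct and matches the paper's one-line argument: encode $A,B$ as non-negative indicator vectors and invoke Theorem~\ref{thm:cole_hariharan}. The only (immaterial) difference is that the paper sets $d = 2u+1$ rather than $d = u+1$, which changes nothing since $\log^2 d = O(\log^2 u)$ either way.
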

\begin{proof}
  Let $f$ be the indicator vector of $A$, $g$ be the indicator vector of $B, d = 2u+1$ and use Theorem~\ref{thm:cole_hariharan}.
\end{proof}

\subsubsection{Problem Definitions}
Our work is concerned with the following problems.

\begin{definition}[\textsc{Subset Sum}]
Given a set $X \subseteq \mathbb{N}$ and a number $t$, compute $\mathcal{S}(X,t) = \{ \Sigma(Y) \mid Y \subseteq X,\, \Sigma(Y) \le t \}$. We measure the running time in terms of $|\mathcal{S}(X,t)|$. We write $n = |X|$.
\end{definition}

\begin{definition}[Top-$k$-Convolution]
Given two vectors $f,g \in \mathbb{R}^u$ and a parameter $k$, compute the first $k$ non-zero entries of $f \conv g$. In other words, find $(i_1, (f\conv g)_{i_1}), (i_2, (f\conv g)_{i_2}), \ldots (i_k, (f\conv g)_{i_k})$ where $i_1 < i_2 < \ldots < i_k$ are the smallest $k$ indices on which $ f \conv g$ is non-zero. We measure the running time in terms of $k$. We write $n = \|f\|_0$ and $m = \|g\|_0$.
\end{definition}

Top-$k$-convolution is equivalent to the following problem, as we will show in Lemma~\ref{lem:equ_topk_prefix}.

\begin{definition}[Prefix-Restricted Convolution]
Given a positive integer~$u$ and two vectors $f,g \in \mathbb{R}^u$, compute $(f \conv g)_{[u]}$, i.e., compute (a sparse representation of) the first $u$ entries of $f \conv g$. We measure the running time in terms of the output-size $\out = \| (f \conv g)_{[u]} \|_0$. We write $n = \|f\|_0$ and $m = \|g\|_0$.
\end{definition}




The following problem is a Boolean version of prefix-restriced convolution.

\begin{definition}[Prefix-restricted Sumset Computation]
Given $u \in \mathbb{N}$ and $A,B \subseteq [u]$, compute $(A+B)\cap [u]$. We measure the running time in terms of the output-size $\out = |(A+B)\cap [u]|$. We write $n = |A|$ and $m = |B|$.
\end{definition}




\subsubsection{Covering of Prefix-Restricted Sumsets}

For a set $A \subseteq \mathbb{N}$ of size $n$, we implicitly assume that $A$ is sorted and we denote its elements in sorted order as $A_1 < A_2 < \ldots < A_n$. Moreover, for $I \subseteq \{1,\ldots,n\}$ we write $A_I$ for $\{A_i \mid i \in I\}$. 

We define the notion of a covering of a restricted sumset $(A,B,[u])$. Intuitively, we want to cover $(A+B)\cap [u]$ by sumsets of the form $A_I+B_J$, that is, we want to have $(A+B)\cap [u] \subseteq \bigcup_{(I,J) \in \mathcal{C}} A_I + B_J$.

\begin{definition}
Let $u \in \mathbb{N}$ and $A,B \subseteq [u]$ with $n=|A|,\, m=|B|$.
A \textbf{covering} of $(A,B,[u])$ is a family $\cal C$ such that:
\begin{enumerate}
\item $\mathcal{C}$ consists of pairs $(I,J)$ where $I \subseteq \{1,\ldots,n\}$ and $J \subseteq \{1,\ldots,m\}$.
\item For any $1 \le i \le n,\, 1 \le j \le m$ with $A_i + B_j \in [u]$ there exists $(I,J) \in \mathcal{C}$ with $(i,j) \in I \times J$.
\end{enumerate}
We call a covering $\cal C$ \textbf{unique} if the pair in property 2.\ is unique, i.e., for any $1 \le i \le n,\, 1 \le j \le m$ with $A_i + B_j \in [u]$ there exists a \emph{unique} pair $(I,J) \in \mathcal{C}$ such that $(i,j) \in I \times J$.

We call a covering $\cal C$ a \textbf{rectangle covering} if for any $(I,J) \in \cal C$ the sets $I$ and $J$ are intervals, meaning that they consists of contiguous elements in the sorted order of $A$ and $B$, respectively.

The \textbf{cost} of a covering $\cal C$ is
	\[	\sum_{ (I,J) \in \mathcal{C} } \left| A_I + B_J \right|. 	\]
\end{definition}

This notion is useful because of the following fact.

\begin{observation} \label{obs:sumset_from_covering}
  Given a covering $\cal C$ of $(A,B,[u])$ of cost $c$, we can compute the prefix-restricted sumset $(A+B) \cap [u]$ in expected time $\tOh(c)$.
\end{observation}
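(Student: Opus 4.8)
The plan is to use each rectangle (more precisely, each pair $(I,J)\in\mathcal C$) as an instruction to compute a single sumset $A_I + B_J$ via the fast sparse-convolution primitive, and then take the union of all these sumsets and intersect with $[u]$. Concretely, for each $(I,J)\in\mathcal C$ I would form the sets $A_I$ and $B_J$ (reading off the relevant contiguous or non-contiguous entries of the sorted arrays $A,B$), invoke Theorem~\ref{thm:sparse_convolution} to compute $A_I+B_J$ in expected time $O(|A_I+B_J|\cdot\log^2 u)$, discard any element exceeding $u$, and insert the surviving elements into a global data structure (a hash set or, after the fact, a sorted list) representing $\bigcup_{(I,J)\in\mathcal C} (A_I+B_J)\cap[u]$. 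Summing the per-pair running times gives expected time $O\bigl(\sum_{(I,J)\in\mathcal C}|A_I+B_J|\cdot\log^2 u\bigr) = O(c\log^2 u) = \tOh(c)$, where I also absorb the $O(|A_I+B_J|)$ cost of the membership insertions into the same bound.

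The correctness part is the covering property. By definition of a covering, for every $1\le i\le n$ and $1\le j\le m$ with $A_i+B_j\in[u]$ there is some $(I,J)\in\mathcal C$ with $i\in I$, $j\in J$, hence $A_i+B_j\in A_I+B_J$, so every element of $(A+B)\cap[u]$ is produced by at least one of the sumsets we compute; conversely each $A_I+B_J$ is a subset of $A+B$, so after intersecting with $[u]$ we recover exactly $(A+B)\cap[u]$ and nothing spurious. Thus the union, restricted to $[u]$, equals $(A+B)\cap[u]$.

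The only slightly delicate point — and the main thing to be careful about — is that the cost $c=\sum_{(I,J)}|A_I+B_J|$ bounds the \emph{total} work including duplicates: the same sum $a+b$ may be generated by many pairs $(I,J)$, but each such generation is already paid for inside the term $|A_I+B_J|$, so deduplication via hashing costs only $O(1)$ expected per generated element and does not blow up the bound. A second minor point is that Theorem~\ref{thm:sparse_convolution} is stated for subsets of $[u]$, and $A_I,B_J\subseteq[u]$ holds since $A,B\subseteq[u]$, so the primitive applies verbatim (its output $A_I+B_J$ lives in $[2u]$, which is fine, and we immediately truncate to $[u]$); its running time guarantee in terms of $|A_I+B_J|$ is exactly what we need. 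Finally, one should note that the $\log^2 u$ factor and the hashing overhead are both hidden by the $\tOh(\cdot)$ notation as defined in the preliminaries, so the claimed bound $\tOh(c)$ follows. No real obstacle remains; the statement is essentially a bookkeeping consequence of Theorem~\ref{thm:sparse_convolution} plus the definition of cost.
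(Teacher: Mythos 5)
Your proof is correct and matches the paper's argument essentially verbatim: compute each $A_I+B_J$ by output-sensitive sumset computation (Theorem~\ref{thm:sparse_convolution}), take the union, and intersect with $[u]$, with the covering property giving both inclusions and the cost $c$ paying for all the work. The extra remarks about hashing for deduplication and $A_I,B_J\subseteq[u]$ are just bookkeeping the paper leaves implicit.
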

\begin{proof}
  Using output-sensitive sumset computation (Theorem~\ref{thm:sparse_convolution}), we can compute $A_I + B_J$ in expected time $\Oh(|A_I+B_J| \log^2 u)$. Thus, we can compute $R := \bigcup_{(I,J) \in \mathcal{C}} A_I + B_J$ in time $\Oh(c \log^2 u)$. By the covering property, we can simply return $R \cap [u] = (A+B) \cap [u]$.  
\end{proof}


We refer to an algorithm making use of Observation~\ref{obs:sumset_from_covering} as a \emph{covering algorithm}. 
We call it a \emph{unique-rectangle-covering algorithm} if the used covering is a unique rectangle covering.
This is a natural class of algorithms, as we also explain in Section \ref{sec:tech_natural}. 
All algorithms presented in this paper are unique-rectangle-covering algorithms.

\subsection{Formal Statement of Results}
\label{sec:results}

\subsubsection{Main Results}

As the technical core of our paper, we present an efficient construction of low-cost coverings.

\begin{theorem}[Covering Construction, Section~\ref{sec:technical_core}] \label{thm:technical_core}
Given $A,B \subseteq [u]$, in expected time $\tOh(\out^{4/3})$ we can compute a unique rectangle covering of $(A,B,[u])$ of cost $\tOh(\out^{4/3})$, where $\out = |(A+B) \cap [u]|$.
\end{theorem}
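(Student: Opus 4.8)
The plan is to realize the set of ``valid'' index pairs $P:=\{(i,j):A_i+B_j\in[u]\}$ as a \emph{disjoint} union of combinatorial rectangles $A_I\times B_J$ (with $I$ a set of consecutive $A$-indices and $J$ a set of consecutive $B$-indices), built by a hierarchical block decomposition whose granularity is tuned to the exponent $4/3$. Disjointness makes the covering \emph{unique} for free, and, combined with Observation~\ref{obs:sumset_from_covering} and output-sensitive sparse convolution (Theorem~\ref{thm:sparse_convolution}), it then suffices to exhibit such a partition of total cost $\tOh(\out^{4/3})$ that can itself be computed in time $\tOh(\out^{4/3})$.

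\emph{Normalization.} I would first delete useless elements: if $A_i+\min(B)>u$ then $A_i$ occurs in no pair of $P$, so it may be removed, and symmetrically for $B$; afterwards the sums $A_i+\min(B)$ are $\out$-many distinct elements of $(A+B)\cap[u]$, so $n=|A|\le\out$ and likewise $m=|B|\le\out$. Since $\out$ is unknown a priori, run the construction with geometrically growing guesses $N=1,2,4,\dots$ and abort-and-double whenever the time spent exceeds the target $\Theta(N^{4/3}\cdot\polylog)$; this costs only a constant factor, so from now on assume $N=\Theta(\out)$. A single synchronized scan of the two sorted sets then computes, for each $i$, the threshold $\beta(i):=\max\{j:A_i+B_j\le u\}$ (with $\beta(i)=0$ if none); $\beta$ is non-increasing and $P=\{(i,j):1\le j\le\beta(i)\}$ is exactly the region under this ``staircase''.

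\emph{Hierarchical block decomposition.} Split the $A$-indices into $g:=\lceil N^{1/3}\rceil$ consecutive blocks $I^1<\dots<I^g$ of near-equal size and the $B$-indices likewise into $J^1<\dots<J^g$. Call a block pair $(I^a,J^b)$ \emph{below}, \emph{straddling}, or \emph{above} the staircase according to whether $\max A_{I^a}+\max B_{J^b}\le u$, or $\min A_{I^a}+\min B_{J^b}\le u<\max A_{I^a}+\max B_{J^b}$, or $\min A_{I^a}+\min B_{J^b}>u$. By monotonicity of $\beta$ the straddling pairs trace a monotone staircase in the $g\times g$ super-grid, so there are at most $2g=\tOh(N^{1/3})$ of them, and in each super-column $b$ the below pairs are precisely those with $a\le\rho(b)$ for some $\rho(b)$. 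For every $b$ emit the single rectangle $(I^1\cup\dots\cup I^{\rho(b)})\times J^b$: it lies entirely under the staircase, so its sumset is contained in $(A+B)\cap[u]$ and costs at most $\out$; these $\tOh(N^{1/3})$ rectangles are pairwise disjoint, cover exactly the below pairs, and contribute $\tOh(N^{4/3})$ to the cost. It remains to decompose $P$ inside each straddling cell $A_{I^a}\times B_{J^b}$ (index dimensions $\tOh(N^{2/3})\times\tOh(N^{2/3})$); this I would do by recursing the same scheme once more inside each straddling cell (again with $g$ sub-blocks), which emits at most $\tOh(N^{1/3})$ below-rectangles per cell and shrinks the surviving straddling cells to index dimensions $\tOh(N^{1/3})\times\tOh(N^{1/3})$. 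Since the staircase visits at most $\tOh(N^{2/3})$ of those, I finish each by a trivial row-wise decomposition: for each remaining row $i$ emit $\{i\}\times\{j\in J:A_i+B_j\le u\}$, a rectangle of cost $\le N^{1/3}$, so $\tOh(N^{2/3})$ per cell and $\tOh(N^{4/3})$ in total. All rectangles, across all levels, are pairwise disjoint by construction, hence the covering is unique.

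\emph{The cost of the recursive level, and the main obstacle.} The delicate part is the cost charged to the below-rectangles emitted \emph{inside} the straddling cells at recursion depth one: there are up to $\tOh(N^{2/3})$ of them, so bounding each by $\out$ would only yield $\tOh(N^{5/3})$. Instead, a below-rectangle sitting inside the straddling cell $A_{I^a}\times B_{J^b}$ has cost at most $|(A_{I^a}+B_{J^b})\cap[u]|$, and since at most $\tOh(N^{1/3})$ of them are emitted per cell, the total is $\tOh\big(N^{1/3}\cdot\sum_{(a,b)\ \mathrm{straddling}}|(A_{I^a}+B_{J^b})\cap[u]|\big)$. Thus everything comes down to the sumset estimate
\[
  \sum_{(a,b)\ \mathrm{straddling}}\big|(A_{I^a}+B_{J^b})\cap[u]\big|=\tOh(\out),
\]
i.e.\ the near-boundary sub-sumsets cannot be collectively much larger than the global restricted sumset. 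This is exactly the ``passing from restricted to unrestricted sumsets'' step flagged in the introduction, and I expect it to be the crux: one wants to establish it via Pl\"unnecke--Ruzsa-type inequalities together with the fact that the straddling cells are few and totally ordered along the staircase (so their value-ranges overlap only in a controlled way), possibly after replacing the naive equal-size block partition by one engineered so that each block already has a small sumset with $B$ (or with the relevant sub-block). On the algorithmic side, the block classifications and the $\rho(b)$'s are found by binary search against $\beta$ in $\tOh(N)$ time, each emitted rectangle's sumset is produced by output-sensitive sparse convolution (Theorem~\ref{thm:sparse_convolution}) in time near-linear in its own size, and the doubling safeguard absorbs any run with an under-guessed $N$; hence the expected running time matches the total cost, $\tOh(\out^{4/3})$.
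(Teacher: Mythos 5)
Your high‐level plan — peel off the sub‐staircase region as merged rectangles, isolate the $\tOh(\out^{1/3})$ straddling cells, recurse inside them, and finish with a trivial row‐wise decomposition — is in the spirit of the paper's construction, and your accounting for the below‐rectangles at depth~$0$, the depth‐$1$ straddling cells, and the final row‐wise finish is sound. The normalization, the doubling trick for the unknown $\out$, uniqueness, and rectangularity are also all handled correctly.

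However, you explicitly leave open the one step that actually carries the proof, namely the estimate
\[
\sum_{(a,b)\ \mathrm{straddling}} \big| (A_{I^a}+B_{J^b})\cap[u] \big| \;=\; \tOh(\out),
\]
and this is a genuine gap, not a detail. Each straddling cell's restricted sumset is only trivially bounded by $\out$, and since they lie on distinct diagonals of the $g\times g$ super‐grid there is no obvious disjointness to exploit; the naive bound is $\tOh(\out^{1/3}\cdot\out)=\tOh(\out^{4/3})$, which would push the total to $\tOh(\out^{5/3})$. The paper does \emph{not} prove a clean aggregate bound of this form. Instead it proves a qualitatively weaker ``most cells'' statement: after organizing the subproblems by type $(\lceil\log|I|\rceil,\lceil\log|J|\rceil)$ and using that the same‐type subproblems form a staircase, it bounds $\sum_{r<\ell}|A_{I_r}+B_{J_\ell}|\le (2R-1)\out$ (a diagonal/charging argument on the \emph{sub}‐staircase pairs, which \emph{are} $[u]$‐contained), then applies Markov's inequality to find that all but roughly $q=\out^{1/3}$ indices $r$ admit $i<r<j$ making the three off‐diagonal sumsets small, and then invokes Ruzsa's triangle inequality (Lemma~\ref{lem:ruzsa}) to deduce $|A_{I_r}+B_{J_r}|=\tOh(\out)$ for those $r$. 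The up to $q$ exceptional subproblems, for which no such bound is available, are handled by a ``compute all sumsets in parallel, abort once all but $q$ finished'' mechanism followed by a midpoint split that pushes them down to strictly smaller types; this is what keeps both the cost and the running time at $\tOh(\out^{4/3})$. In short: the paper never needs (and does not prove) your aggregate sumset estimate; it replaces it by a Markov‐averaging argument together with an abort‐at‐$q$ device that tolerates a small number of bad cells, and your two‐level flat decomposition has no analogue of that mechanism. You gesture at Pl\"unnecke–Ruzsa, which is indeed the right toolbox, but without the averaging over triples $(i,r,j)$ and without a way to discard the few cells where the bound fails, the construction as written does not close.
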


By Observation~\ref{obs:sumset_from_covering}, this yields an $\tOh(\out^{4/3})$ Las Vegas algorithm for prefix-restricted sumset computation. Via simple reductions, we obtain similar algorithms for our convolution problems.

\begin{corollary}[Top-$k$-Convolution and Related Problems, Section~\ref{sec:top_k_conv_reductions}] \label{cor:top_k_and_related_algos}
  Top-$k$-convolution on non-negative vectors can be solved in expected time $\tOh(k^{4/3})$. Prefix-restricted sumset computation and prefix-restricted convolution on non-negative vectors can be solved in expected time $\tOh(\out^{4/3})$. 
\end{corollary}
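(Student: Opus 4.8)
The plan is to derive all three statements from the covering construction of Theorem~\ref{thm:technical_core}, in direct analogy with how Observation~\ref{obs:sumset_from_covering} turns a cheap covering into a sumset algorithm, so the only real work is in relating the Boolean and the non-Boolean settings and in the (standard) reduction from Top-$k$ to a prefix restriction.

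For \emph{prefix-restricted sumset computation} there is nothing to do beyond chaining the two earlier results: run the construction of Theorem~\ref{thm:technical_core} on $(A,B,[u])$ to obtain, in expected time $\tOh(\out^{4/3})$, a unique rectangle covering $\mathcal C$ of cost $\tOh(\out^{4/3})$, then feed $\mathcal C$ into Observation~\ref{obs:sumset_from_covering} to recover $(A+B)\cap[u]$ in expected time $\tOh(\cost(\mathcal C))=\tOh(\out^{4/3})$.

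For \emph{prefix-restricted convolution} on non-negative $f,g$, set $A=\supp(f)$ and $B=\supp(g)$; by non-negativity $\supp\big((f\conv g)_{[u]}\big)=(A+B)\cap[u]$, so the output parameter $\out$ equals the quantity from the previous paragraph. Run Theorem~\ref{thm:technical_core} to obtain a \emph{unique} rectangle covering $\mathcal C$ of $(A,B,[u])$ of cost $\tOh(\out^{4/3})$, compute $f_{A_I}\conv g_{B_J}$ for each $(I,J)\in\mathcal C$ using Cole--Hariharan (Theorem~\ref{thm:cole_hariharan}), and output $h_{[u]}$ where $h:=\sum_{(I,J)\in\mathcal C} f_{A_I}\conv g_{B_J}$. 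Non-negativity again yields $\supp(f_{A_I}\conv g_{B_J})=A_I+B_J$, so the $(I,J)$-th convolution costs expected time $\tOh(|A_I+B_J|)$ and the total is $\tOh(\cost(\mathcal C))=\tOh(\out^{4/3})$. Correctness is the one place where uniqueness matters: for $j\le u$ every pair $(\alpha,\beta)$ with $A_\alpha+B_\beta=j$ automatically has $A_\alpha+B_\beta\le u$ and is covered by exactly one $(I,J)\in\mathcal C$, so the term $f_{A_\alpha}g_{B_\beta}$ is added to $h_j$ precisely once, giving $h_j=\sum_{A_\alpha+B_\beta=j}f_{A_\alpha}g_{B_\beta}=(f\conv g)_j$; pairs whose sum exceeds $u$ affect $h$ only outside $[u]$ and are discarded.

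Finally, \emph{Top-$k$-convolution} on non-negative vectors follows from the equivalence with prefix-restricted convolution established in Lemma~\ref{lem:equ_topk_prefix}: the smallest prefix $[u]$ on which $f\conv g$ has at least $k$ non-zeros in fact has exactly $k$ of them (a single index carries at most one non-zero), so $\out=k$ and the previous paragraph gives running time $\tOh(k^{4/3})$. The only delicate point inside that reduction is that the cutoff $u$ is not known a priori and must be found by binary search over $[0,2u]$; to avoid paying for an over-long prefix one runs the prefix-restricted convolution algorithm with a time budget of order $\tOh(k^{4/3})$ and aborts — concluding, correctly with constant probability by Markov's inequality applied to the Las Vegas running time, that the current prefix holds more than $k$ non-zeros — whenever the budget is exceeded, repeating polylogarithmically often for amplification; all extra $\log$ factors are absorbed into $\tOh(\cdot)$. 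I do not expect a genuine obstacle in this corollary, since all the difficulty is packed into Theorem~\ref{thm:technical_core}; the single load-bearing observation here is that the covering must be \emph{unique} so that the unrolled sum $\sum_{(I,J)}f_{A_I}\conv g_{B_J}$ reproduces every coefficient with the correct multiplicity.
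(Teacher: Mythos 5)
Your proposal is correct and follows the same route as the paper: prefix-restricted sumset from Theorem~\ref{thm:technical_core} plus Observation~\ref{obs:sumset_from_covering}, prefix-restricted convolution from the covering via the argument of Lemma~\ref{lem:conv_from_covering} (supports, Cole--Hariharan on each rectangle, uniqueness giving each product $f_{A_\alpha}g_{B_\beta}$ exactly once), and Top-$k$ via the budgeted binary search of Lemma~\ref{lem:equ_topk_prefix}. The only difference is cosmetic: you inline and slightly elaborate the proofs of those two lemmas (e.g.\ spelling out the Markov/amplification step behind the abort-on-budget trick), whereas the paper states them separately and just cites them.
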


By carefully adapting a recent pseudopolynomial $\tOh(n+t)$-time \textsc{SubsetSum} algorithm~\cite{Bring17} to use prefix-restricted sumset computation as a subroutine, we obtain our main result for \textsc{SubsetSum}.
This can be seen as a reduction from \textsc{SubsetSum} to prefix-restricted sumset computation.

\begin{theorem}[\textsc{SubsetSum}, Section~\ref{sec:subset_sum}] \label{thm:subsetsum_result}
  Given $X \subseteq \mathbb{N}$ and $t \in \mathbb{N}$, we can compute the set $\mathcal{S}(X,t)$ in time $\tOh(|\mathcal{S}(X,t)|^{4/3})$ with high probability.
\end{theorem}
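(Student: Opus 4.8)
The plan is to start from the near-linear pseudopolynomial algorithm of Bringmann~\cite{Bring17}, which computes $\mathcal{S}(X,t)$ in time $\tOh(n+t)$, and make it output-sensitive by two modifications. Recall that that algorithm (i) partitions $X$ into $O(\log t)$ magnitude layers $X_\ell$, where $X_\ell$ collects the elements in $(t\,2^{-\ell-1},t\,2^{-\ell}]$, so that any subset summing to at most $t$ uses at most $\lambda_\ell:=2^{\ell+1}$ elements of $X_\ell$; (ii) processes each layer by color coding --- randomly splitting $X_\ell$ into $\poly(\lambda_\ell)$ color classes, forming the sumset $\sum_c(\{0\}\cup X_{\ell,c})$ truncated at some threshold $\tau\le t$ along a balanced binary tree, unioning $\tOh(1)$ independent repetitions to boost the success probability, and on large layers recursing by splitting a part into two halves and recombining the (truncated) results by a single sumset; and (iii) combines the $O(\log t)$ layer results by $O(\log t)$ further truncated sumsets. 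Every set operation here is a sumset (equivalently Boolean convolution) truncated to a prefix $[\tau]$ with $\tau\le t$, currently realized by FFT in time $\tOh(t)$. Our first modification is to realize each such operation by prefix-restricted sumset computation on the universe $[\tau]$ (Corollary~\ref{cor:top_k_and_related_algos}), running in expected time $\tOh(\out^{4/3})$ where $\out$ is the size of the truncated sumset. Since this computes exactly the same sets, correctness carries over verbatim from~\cite{Bring17}: the output equals $\mathcal{S}(X,t)$ with high probability.

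Our second modification is a preprocessing step ensuring $n=\tOh(|\mathcal{S}(X,t)|)$. We discard all elements exceeding $t$ (they can never lie in a subset sum $\le t$); for a set $X$ the remaining elements are distinct and at most $t$, and since $\{0\}$ together with these elements are pairwise distinct members of $\mathcal{S}(X,t)$, their number is at most $|\mathcal{S}(X,t)|$. (If $X$ is a multiset, additionally cap the multiplicity of each value $v$ at $\lfloor t/v\rfloor$ and replace those copies by the $O(\log t)$ items of the standard binary grouping, whose subset sums are exactly $\{0,v,2v,\dots\}$; the instance keeps the same $\mathcal{S}(X,t)$ and has $\tOh(|\mathcal{S}(X,t)|)$ items.) This can be done in time $O(n\log n)$, which we will see is $\tOh(|\mathcal{S}(X,t)|)$.

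For the running time, write $K:=|\mathcal{S}(X,t)|+1$. The crucial observation is that every set stored by the modified algorithm consists only of genuine subset sums of $X$ that are at most $t$: a summand $\{0\}\cup X_{\ell,c}$ contributes either $0$ or a single element of $X$, and a truncated sumset of two sets of genuine subset sums of disjoint sub-multisets of $X$ is again a set of genuine subset sums of $X$ below $t$. Hence every stored set --- in particular the output $\out$ of every prefix-restricted sumset call --- has size at most $K$. By Corollary~\ref{cor:top_k_and_related_algos} a call producing $o$ output elements takes expected time $\tOh(o^{4/3})$, so the total expected time is $\tOh\big(\sum_{\text{calls}} o_{\text{call}}^{4/3}\big)$, and by convexity $\sum_{\text{calls}} o_{\text{call}}^{4/3}\le\big(\max_{\text{call}} o_{\text{call}}\big)^{1/3}\sum_{\text{calls}} o_{\text{call}}\le K^{1/3}\sum_{\text{calls}} o_{\text{call}}$. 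It therefore suffices to show that the \emph{total} output size over all set operations is $\tOh(K)$. For this we use that the recursion has depth $\tOh(1)$ and that, thanks to the magnitude layering, every part processed at a fixed recursion depth has its attainable subset sums confined to a range proportional to its cardinality bound; combined with the sumset-growth inequality $|A+B|\ge|A|+|B|-1$ applied along the merge tree of each layer, the combined output size of all calls at a given depth telescopes down to $\tOh\big(\sum_\ell|\mathcal{S}(X_\ell,t)|+n\big)=\tOh(K)$, using $\mathcal{S}(X_\ell,t)\subseteq\mathcal{S}(X,t)$ and $n=\tOh(K)$. Summing over the $\tOh(1)$ depths gives $\sum_{\text{calls}} o_{\text{call}}=\tOh(K)$ and hence total expected time $\tOh(K^{4/3})=\tOh(|\mathcal{S}(X,t)|^{4/3})$; the $O(n\log n)$ preprocessing is $\tOh(K)$, and a standard Markov-inequality-plus-restart argument turns the expected time bound into an $\tOh(|\mathcal{S}(X,t)|^{4/3})$ bound with high probability (intersecting with the high-probability correctness of color coding).

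The main obstacle is exactly the last step: proving that the total output size over all set operations is $\tOh(|\mathcal{S}(X,t)|)$ rather than the $\tOh(t)$ bound that~\cite{Bring17} yields directly. Two points require care. First, the magnitude layering is genuinely essential: a naive divide-and-conquer over $X$ has recursion levels with $\Theta(2^d)$ parts whose subset-sum sets can each have size $\Theta(|\mathcal{S}(X,t)|)$, giving total output $\Theta(2^d|\mathcal{S}(X,t)|)$; only because the layering forces the truncation threshold of a depth-$d$ part down to roughly $t/2^d$ can the sumset-growth inequality charge these sizes back to $|\mathcal{S}(X,t)|$. Second, one must bound the loss caused by truncation and by the $\Theta(\sqrt{\lambda\log(1/\delta)})$ concentration slack in the color-coding cardinality bounds; the key is that the truncation thresholds are essentially additive along each layer's merge tree, so the truncation loss telescopes and is dominated by $\polylog(t)$ times the number of parts plus the single top-level truncation at $t$, both $\tOh(K)$. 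Carrying out these estimates against the precise form of the thresholds and cardinality bounds in~\cite{Bring17} is the technical heart of the reduction.
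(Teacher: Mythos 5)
Your high-level plan — drop in prefix-restricted sumset computation in place of FFT, observe that every stored set is a subset of $\mathcal{S}(X,t)$ so each call has output size at most $K:=|\mathcal{S}(X,t)|$, and then use $\sum o^{4/3}\le K^{1/3}\sum o$ to reduce to bounding $\sum o$ — is the right shape, and your preprocessing observation that $n\le K$ after discarding elements exceeding $t$ is correct. But there is a genuine gap at exactly the step you flag as ``the technical heart'': the claim that the total output size over all set operations is $\tOh(K)$ via telescoping with $|A+B|\ge|A|+|B|-1$. The inequality $|A+B|\ge|A|+|B|-1$ only controls \emph{untruncated} sumsets; what the algorithm computes is $(A+B)\cap[\tau]$, and the truncation can cut arbitrarily much (e.g.\ when both $A$ and $B$ have many elements near $\tau$). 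Your assertion that ``the truncation loss telescopes'' along the merge tree is not proved and is not self-evident — it is exactly the place where the paper introduces a novel combinatorial inequality (Lemma~\ref{lem:splitting_lemma}) proved by explicit injection arguments:
\[
\big|\mathcal{S}\big(Z^{(1)},(1+\eps)\tfrac{u}{2}\big)\big| + \big|\mathcal{S}\big(Z^{(2)},(1+\eps)\tfrac{u}{2}\big)\big| \le \frac{|\mathcal{S}(Z,u)|+1}{1-2\eps-4\mu},
\]
valid for \emph{any} partition of $Z$ once $\max(Z)\le\mu u$ with $\mu$ small. This inequality is what makes the recursion charge cleanly against $K$; without it (or an equivalent replacement) your sum $\sum o$ is not controlled.

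The paper also does not in fact preserve Bringmann's magnitude-layer / two-level color-coding structure as your proposal assumes. It restructures the algorithm: a \emph{single} threshold $u/(\beta\log^3 t)$ separates ``large'' from ``small''; large elements are handled by one-level color coding (each feasible subset contains only $\polylog$ of them, so this part is ``without any problems'' output-sensitive once FFT is replaced); small elements are split \emph{randomly into two halves} and the algorithm recurses with target $(1+\eps)u/2$, $\eps=1/\log t$. Correctness of this recursion needs a Bernstein-type concentration bound (to guarantee each recursive target is large enough), and the running time is then controlled by the splitting lemma above. The paper's overview explicitly states that simply tracking the target $t$ halving (as in Bringmann) ``does not suffice'' and that one needs ``novel insights into the set of all subset sums'' — which is precisely the lemma your proof does not supply. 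So your route is genuinely different from the paper's, but it is not a complete proof: you would still need to either prove your telescoping claim (accounting rigorously for truncation and for the fact that Bringmann's cardinality bounds carry a $\sqrt{\lambda\log(1/\delta)}$ slack, not a clean factor $2$), or adopt something like Lemma~\ref{lem:splitting_lemma}.

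Two smaller remarks. First, your convexity reduction $\sum o^{4/3}\le K^{1/3}\sum o$ is fine but note the paper instead phrases its reduction in terms of $(\alpha,\zeta)$-effective oracles, which also charges $\tOh(|A|+|B|)$ per call — this is why the $n\le K$ bound matters in both proofs. Second, in the paper the relaxed target $(1+\eps)u/2$ means each oracle call's output is bounded by $|\mathcal{S}(\cdot,(1+\zeta)u)|$ rather than $|\mathcal{S}(\cdot,u)|$, and keeping this under control across $\Oh(\log t)$ recursion levels is another place where the explicit constants $\eps=1/\log t$, $\mu=1/(\beta\log^3 t)$ are chosen with care; your proposal does not engage with this accumulation.
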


Since all of these results depend on our technical core (Theorem~\ref{thm:technical_core}), we also study limitations of rectangle-covering algorithms. The following result shows that our approach of using rectangle coverings to solve top-$k$-convolution and \textsc{SubsetSum} cannot achieve near-linear output-sensitive algorithms. 

\begin{theorem}[Lower Bound on Rectangle Coverings, Section~\ref{sec:covering_lb}] \label{thm:covering_lb}
There exists an infinite sequence of tuples $(A,B,[u])$, with $u \in \mathbb{N}$ and $A,B \subseteq [u]$, such that any rectangle covering of $(A,B,[u])$ has cost $\Omega ( \out^{1.047} )$, where $\out = |(A+B) \cap [u]|$.
\end{theorem}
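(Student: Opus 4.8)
I would establish the bound via a linear‑programming (weak) duality argument against an explicit self‑similar family of instances. The starting point is the following observation: for a rectangle covering $\mathcal{C}$ of $(A,B,[u])$ write $R=\{(i,j) : A_i+B_j\in[u]\}$, and let $\mu\colon R\to\mathbb{R}_{\ge 0}$ be any weighting satisfying the \emph{local constraint}
\[
  \sum_{(i,j)\in (I\times J)\cap R}\mu(i,j)\ \le\ |A_I+B_J|\qquad\text{for every pair of intervals }I\subseteq\{1,\dots,n\},\ J\subseteq\{1,\dots,m\}.
\]
Then $\mathrm{cost}(\mathcal{C})=\sum_{(I,J)\in\mathcal{C}}|A_I+B_J|\ge\sum_{(I,J)\in\mathcal{C}}\sum_{(i,j)\in(I\times J)\cap R}\mu(i,j)\ge\sum_{(i,j)\in R}\mu(i,j)$, the last inequality because every $(i,j)\in R$ is covered at least once. (This is exactly weak duality between the fractional rectangle‑covering LP and its dual; the key point is that $|A_I+B_J|$ is the \emph{unrestricted} sumset of the sub‑intervals, so a rectangle ``sticking out'' above $u$ is charged for all the sums it creates — this is where the gap between restricted and unrestricted sumsets enters.) Hence it suffices to exhibit, for our instances, a feasible $\mu$ with $\sum_R\mu\ge\out^{1.047}$.

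\textbf{The construction.} I would first design a constant‑size \emph{base gadget} $(\widehat A,\widehat B,[\widehat u])$ in which collisions among the below‑$\widehat u$ sums are purely local: there are many pairs in $\widehat R$, but any interval‑rectangle either touches few of them or, if it touches many, contains a pair $(A_a,B_b)$ whose sum lies far above $\widehat u$, forcing $|\widehat A_I+\widehat B_J|$ to be large. By checking the $O(1)$ possible interval‑rectangles one verifies that some (near‑)constant weighting $\widehat\mu$ on $\widehat R$ is feasible, with $\out(\widehat{\mathcal I})=c_1$ distinct below‑$\widehat u$ sums while $\sum_{\widehat R}\widehat\mu=c_2$ for an explicit constant $c_2>c_1$. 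I would then iterate: $A^{(d)}$ is obtained from $\widehat A$ by substituting for each element a scaled‑down copy of $A^{(d-1)}$, the scale being large enough that no carries occur in any relevant sum and that every ``outer cell'' lies strictly below or strictly above the threshold $u_d$ (so the self‑similar structure is not cut by the restriction); likewise for $B^{(d)}$. Carry‑free bookkeeping then gives $\out(\mathcal I^{(d)})=c_1^{\,d}$ up to constants, and shows that every interval‑rectangle of $\mathcal I^{(d)}$ projects to an interval‑rectangle of the gadget and, inside each active outer cell, restricts to an interval‑rectangle of $\mathcal I^{(d-1)}$.

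\textbf{The weighting and conclusion.} Take $\mu^{(d)}$ to be the $d$‑fold product of $\widehat\mu$. Feasibility is proved by induction on $d$: given an interval‑rectangle $I\times J$ of $\mathcal I^{(d)}$, split its $\mu^{(d)}$‑mass over the active outer cells it meets; its projection to the outer level is an interval‑rectangle of the gadget and its trace inside each outer cell is an interval‑rectangle of $\mathcal I^{(d-1)}$, so the outer gadget constraint and the inductive inner constraint multiply — using carry‑freeness to factor $|A^{(d)}_I+B^{(d)}_J|$ — to give $\sum_{(I\times J)\cap R^{(d)}}\mu^{(d)}\le |A^{(d)}_I+B^{(d)}_J|$. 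Summing, $\sum_{R^{(d)}}\mu^{(d)}=\bigl(\sum_{\widehat R}\widehat\mu\bigr)^{d}=c_2^{\,d}$, so every rectangle covering of $\mathcal I^{(d)}$ has cost at least $c_2^{\,d}=\bigl(\out(\mathcal I^{(d)})\bigr)^{\log_{c_1}c_2}$; choosing the gadget so that $\log_{c_1}c_2\ge 1.047$ and letting $d\to\infty$ gives the infinite family.

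\textbf{Main obstacle.} The crux is the base gadget together with its dual weighting: one must find a constant‑size instance in which the restricted sumset is provably a fixed constant factor smaller than \emph{every} sumset reachable by grouping below‑$u$ pairs into an interval‑rectangle, and certify this by a feasible $\widehat\mu$ — i.e.\ rule out that some clever coarse grouping with heavily collapsed sumsets beats the naive singleton covering. Optimizing $\log_{c_1}c_2$ over admissible gadgets is precisely what pins down the exponent (the value $1.047$ strongly suggests such a numerical optimization, e.g.\ a ratio close to $\log_9 10$). The remaining points — that the scaling makes $\out$, $|A_I+B_J|$, and the $\mu$‑masses multiplicative, and that $u_d$ can be placed without cutting any outer cell — are routine but must be executed carefully so the induction closes.
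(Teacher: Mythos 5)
Your linear-programming weak-duality reformulation is sound: if $\mu\ge 0$ on $R=\{(i,j):A_i+B_j\le u\}$ satisfies $\sum_{(i,j)\in(I\times J)\cap R}\mu(i,j)\le|A_I+B_J|$ for all interval rectangles $(I,J)$, then every rectangle covering has cost at least $\sum_R\mu$. This is a legitimate and potentially useful alternative lens on the problem, and it is different in spirit from the paper, which argues the lower bound directly: the paper builds, via a constant-weight binary error-correcting code, families $X^{(1)},\ldots,X^{(g)}$ and $Y^{(1)},\ldots,Y^{(g)}$ with $|X^{(\ell)}+Y^{(\ell)}|=\sigma$ (large) but $|X^{(\ell)}+Y^{(\ell')}|\le\alpha$ (small) for $\ell\ne\ell'$, arranges shifted copies into a staircase so that the diagonal below-$u$ sums alternate with forbidden ones, and then charges via a small graph argument: any rectangle that spans two active diagonal blocks must fully contain an intervening block $A^{(i+1)}\times B^{(i+1)}$ of cost $\sigma$, and coverings that avoid such spanning need $\Omega(g)$ rectangles. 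The exponent $1.047$ then falls out of numerically optimizing the code parameters ($m=10$, $\delta\approx 0.2709$); it is in the same ballpark as your guessed $\log_9 10$ but comes from a more involved expression.

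The genuine gap is in your iterated substitution. You require that every outer cell $(a,b)\in\widehat A\times\widehat B$ lie \emph{entirely} below or \emph{entirely} above $u_d$. But then, inside every below-threshold outer cell, \emph{all} inner sums are below $u_d$, so that cell contributes the full \emph{unrestricted} inner sumset $A^{(d-1)}+B^{(d-1)}$ to the output, not just its restricted part. Carry-free bookkeeping therefore gives $\out(\mathcal I^{(d)})=\out(\widehat{\mathcal I})\cdot|A^{(d-1)}+B^{(d-1)}|=c_1\cdot|\widehat A+\widehat B|^{d-1}$, not $c_1^{\,d}$. Forcing $\out(\mathcal I^{(d)})=c_1^{\,d}$ would require $|\widehat A+\widehat B|=c_1=\out(\widehat{\mathcal I})$, i.e.\ the base gadget has no above-$\widehat u$ sums at all. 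But in that case the dual constraint for the full rectangle $([n]\times[m])$ reads $\sum_{\widehat R}\widehat\mu\le|\widehat A+\widehat B|=c_1$, giving $c_2\le c_1$ and killing the lower bound. So as stated, the requirements ``outer cells uncut'' and ``$\out$ multiplies as $c_1^{\,d}$ with $c_2>c_1$'' are mutually incompatible, and the induction does not close. A repaired version would have to let $u_d$ cut through boundary cells and recurse only into the cut cells (which breaks the clean product form of $\mu^{(d)}$ and of the output count), or use a more asymmetric substitution — you would need to rework the bookkeeping from scratch before the exponent optimization makes sense.
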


We remark that this result crucially uses \emph{rectangle} coverings; we do not rule out the existence of non-rectangle coverings of near-linear cost.

In the remainder of Section~\ref{sec:results} we present additional related results.

\subsubsection{Additional Results: Relaxing the Upper Bound}

In our results so far we have relaxed the ultimate goal of algorithms running in time $\tOh(\out)$ to time $\tOh(\out^{4/3})$. We can alternatively relax the goal by measuring the running time in terms of a slightly larger upper bound. Specifically, for prefix-restricted sumset computation so far we measured time in terms of the output-size $\out = |(A+B) \cap [u]|$. Now we will measure the running time in terms of $|(A+B) \cap [(1+\zeta)u]|$ for some $\zeta > 0$, while the task still is to compute the set $(A+B) \cap [u]$. 
Since one could expect that for ``realistic'' instances $(A,B,u)$ there are not many sums in the boundary region $(A+B) \cap [u,(1+\zeta)u]$, algorithms that perform well with respect to this new measure might perform well in practice.
We show the following.

\begin{theorem}[Relaxed Upper Bound, Section~\ref{sec:prefix_zeta}]\label{thm:add_relaxed_upper_bound}
  Prefix-restricted sumset computation can be solved in expected time $\tOh( \zeta^{-1} \cdot |(A+B) \cap [(1+\zeta)u]| + \zeta^{-2})$. 
  Prefix-restricted convolution on non-negative vectors can be solved in expected time $\tOh( \zeta^{-1} \cdot \|(f \conv g)_{[(1+\zeta)u]}\|_0 + \zeta^{-2} )$. 
  \textsc{SubsetSum} can be solved in time $\tOh(\zeta^{-1} \cdot |\mathcal{S}(X,(1+\zeta)t)| + \zeta^{-2})$ with high probability.
\end{theorem}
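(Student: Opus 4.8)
The plan is to prove all three parts by the same elementary device, a \emph{block decomposition} of $[u]$ into $O(\zeta^{-1})$ intervals of length $\approx\zeta u$; in particular this theorem does not use the technical core (Theorem~\ref{thm:technical_core}). I start with prefix-restricted sumset computation. First I would normalize the instance: replacing $(A,B,u)$ by $(A-\min A,\ B-\min B,\ u-\min A-\min B)$ changes neither the task nor the quantity $|(A+B)\cap[(1+\zeta)u]|$, and lets me assume $0\in A\cap B$ (if $\min A+\min B>u$ the answer is $\emptyset$); then $A,B\subseteq(A+B)\cap[u]$, so $n,m\le\out$ and the $\tOh(n+m)$ preprocessing is affordable. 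If $\zeta u<2$ I just compute all of $A+B$ by Theorem~\ref{thm:sparse_convolution} in time $\tOh(u)=\tOh(\zeta^{-1})$, so assume $\Delta:=\lfloor\zeta u/2\rfloor\ge1$. Partition $[0,u]$ into blocks $I_\ell=[\ell\Delta,(\ell+1)\Delta)$, $0\le\ell\le S:=\lfloor u/\Delta\rfloor=O(\zeta^{-1})$, let $A^{(\ell)}=A\cap I_\ell$, and let $I^A_\ell\subseteq\{1,\dots,n\}$ be the (contiguous) set of ranks of $A^{(\ell)}$, similarly for $B$. The covering I would use is $\mathcal C=\{(I^A_p,I^B_q)\mid p+q\le S\}$: it is a \emph{unique rectangle covering} of $(A,B,[u])$, since any $(i,j)$ with $A_i+B_j\in[u]$ has $A_i$ in a unique block $p=\lfloor A_i/\Delta\rfloor\le S$ and $B_j$ in a unique block $q\le S$, and $A_i+B_j\ge(p+q)\Delta$ forces $p+q\le S$.

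The cost of $\mathcal C$ is $\sum_{p+q\le S}|A^{(p)}+B^{(q)}|$, and I would bound it diagonal by diagonal. For fixed $s$, each $A^{(p)}+B^{(q)}$ with $p+q=s$ is contained in $(A+B)\cap W_s$ for $W_s:=[s\Delta,(s+2)\Delta)$, and there are at most $S+1=O(\zeta^{-1})$ such pairs, so this diagonal contributes $O(\zeta^{-1})\cdot|(A+B)\cap W_s|$. The windows $W_0,\dots,W_S$ overlap at most two-fold, and $\bigcup_s W_s\subseteq[0,(S+2)\Delta)\subseteq[(1+\zeta)u]$ because $(S+2)\Delta\le u+2\Delta\le(1+\zeta)u$; hence $\sum_s|(A+B)\cap W_s|\le 2\,|(A+B)\cap[(1+\zeta)u]|$ and the covering has cost $\tOh(\zeta^{-1}|(A+B)\cap[(1+\zeta)u]|)$. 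By Observation~\ref{obs:sumset_from_covering} this is also the expected time to produce $(A+B)\cap[u]$ from $\mathcal C$; adding the $\tOh(\zeta^{-2})$ for iterating over the $O(\zeta^{-2})$ block-pairs of $\mathcal C$ and the $\tOh(n+m+\zeta^{-1})$ preprocessing yields the claimed $\tOh(\zeta^{-1}|(A+B)\cap[(1+\zeta)u]|+\zeta^{-2})$.

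The convolution statement follows by the identical decomposition applied to the restrictions $f^{(\ell)}=f_{I_\ell}$, $g^{(\ell)}=g_{I_\ell}$, using Theorem~\ref{thm:cole_hariharan} instead of Theorem~\ref{thm:sparse_convolution}: we have $(f\conv g)_{[u]}=\sum_{p+q\le S}(f^{(p)}\conv g^{(q)})_{[u]}$, computing $f^{(p)}\conv g^{(q)}$ costs $\tOh(\|f^{(p)}\conv g^{(q)}\|_0)$, and — this is where non-negativity is essential — there is no cancellation, so $\|f^{(p)}\conv g^{(q)}\|_0=|\supp f^{(p)}+\supp g^{(q)}|$ and the support of each diagonal sum is the union of these sumsets; the same two-fold-overlap charging then bounds the total by $\tOh(\zeta^{-1}\|(f\conv g)_{[(1+\zeta)u]}\|_0+\zeta^{-2})$. (With cancellation the left-hand output could be small while the pieces are huge, so output-sensitivity would break down.)

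Finally, for \textsc{SubsetSum} I would revisit the reduction behind Theorem~\ref{thm:subsetsum_result} (the adaptation of Bringmann's $\tOh(n+t)$ algorithm in Section~\ref{sec:subset_sum}): it computes $\mathcal S(X,t)$ through calls to prefix-restricted sumset computation on instances $(A,B,u)$ with $u\le t$, where $A$ and $B$ are sets of subset sums of \emph{disjoint} subsets $X_1,X_2\subseteq X$. For such an instance, any $a+b\le(1+\zeta)u\le(1+\zeta)t$ is a subset sum of $X_1\cup X_2\subseteq X$, so $|(A+B)\cap[(1+\zeta)u]|\le|\mathcal S(X,(1+\zeta)t)|$. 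Swapping in the relaxed subroutine from the first part and re-running that proof's accounting thus replaces $|\mathcal S(X,t)|$ by $|\mathcal S(X,(1+\zeta)t)|$, multiplies it by $\zeta^{-1}$, and adds $\tOh(\zeta^{-2})$ per call, giving $\tOh(\zeta^{-1}|\mathcal S(X,(1+\zeta)t)|+\zeta^{-2})$ overall — provided only $\tOh(1)$ such calls are made, so the additive terms do not blow up. I expect this last point to be the main obstacle: one must check that the slack $\zeta$ threads cleanly through Bringmann's recursion — both the inclusion $|(A+B)\cap[(1+\zeta)u]|\le|\mathcal S(X,(1+\zeta)t)|$ and the bookkeeping on the number and output-sizes of the subroutine calls — which is exactly the part of Section~\ref{sec:subset_sum} that has to be inspected rather than used as a black box.
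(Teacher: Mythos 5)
Your proposal is correct and takes essentially the same route as the paper's Section~\ref{sec:prefix_zeta}: a uniform block decomposition of $[u]$ into $\Theta(\zeta^{-1})$ intervals of length $\approx\zeta u/2$, a unique rectangle covering formed from one rectangle per occupied block pair, a cost bound via grouping rectangles into $O(\zeta^{-1})$ lines and charging each line to $|(A+B)\cap[(1+\zeta)u]|$, and then a plug-in to the $\textsc{SubsetSum}$ reduction. The only substantive difference is in the charging argument for the cost: the paper groups by \emph{diagonals} (fixed $j-i$) and exploits that consecutive sumsets along a diagonal are disjoint, while you group by \emph{anti-diagonals} (fixed $p+q=s$) and bound each term by $|(A+B)\cap W_s|$ with a two-fold overlap argument over windows $W_s$. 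Both are valid; the paper's disjointness is a slightly cleaner charge, but yours reaches the same $\tOh(\zeta^{-1}|(A+B)\cap[(1+\zeta)u]|)$. Your convolution argument is a direct unfolding of the paper's Lemma~\ref{lem:conv_from_covering} (unique covering $\Rightarrow$ no double-counting, non-negativity $\Rightarrow$ no cancellation), and your caution about the $\textsc{SubsetSum}$ accounting is the same plug-in step the paper performs at the end of Section~\ref{sec:subset_sum} via the notion of an $(\alpha,\zeta)$-effective algorithm and Lemma~\ref{lem:splitting_lemma}; the paper's treatment of the additive $\zeta^{-2}$ term across the recursion tree is itself brief, so your flag about this not being checkable as a pure black box is fair.
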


Note that $|\mathcal{S}(X,(1+\zeta)t)| \le (1+\zeta)t$, and thus plugging in any constant $\zeta > 0$ yields an algorithm running in time $\tOh(t)$, which is as good as~\cite{Bring17,JH19}.

\subsubsection{Additional Results: Interval-Restricted Sumset and Convolution}

So far we studied the problem of computing the first $u$ output values of sumset computation and convolution. More generally, we could ask to compute all output values at given positions $P \subseteq \mathbb{N}$. This is a well-motivated generalization, as it corresponds to computing specific coefficients of the product of two polynomials, which comes up in counting problems. 

However, here we show that even when the desired positions $P$ form an \emph{interval} then near-linear output-sensitive algorithms are unlikely to exist. Specifically, we study the following problems.

\begin{definition}[Interval-Restricted Sumset Computation]
Given positive integers $\ell \le u$ and $A,B \subseteq [u]$, compute $(A+B)\cap [\ell,u]$. We denote the output-size by $\out = |(A+B)\cap [\ell,u]|$. We write $n=|A|$ and $m=|B|$.
\end{definition}

\begin{definition}[Interval-Restricted Convolution]
Given positive integers $\ell \le u$ and vectors $f,g \in \mathbb{R}^u$, compute $(f \conv g)_{[\ell,u]}$, i.e., compute (a sparse representation of) the entries of $(f \conv g)_i$ for $\ell \le i \le u$. We denote the output-size by $\out = \| (f \conv g)_{[\ell,u]} \|_0$. We write $n = \|f\|_0$ and $m = \|g\|_0$.
\end{definition}

We design algorithms for these problems with the following running time. This uses a natural generalization of coverings, where we simply replace the set $[u]$ by $[\ell,u]$.

\begin{theorem}[Algorithm for Interval-Restricted] \label{thm:add_interval_algo}
  Interval-restricted sumset computation and interval-restricted non-negative convolution can be solved in expected time $\tOh(n + m + \sqrt{n m\, \out})$. 
\end{theorem}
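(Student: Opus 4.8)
The plan is to construct an inexpensive rectangle covering of the triple $(A,B,[\ell,u])$ and then apply the obvious generalization of Observation~\ref{obs:sumset_from_covering}, with the set $[u]$ replaced by $[\ell,u]$. The convolution version reduces to the sumset version exactly as in the prefix case: for non-negative $f,g$ the support of $(f\conv g)_{[\ell,u]}$ equals $(\supp f+\supp g)\cap[\ell,u]$, so after running the sumset algorithm on $A=\supp f$, $B=\supp g$ one recovers the actual coefficients with one further output-sensitive convolution on the participating blocks via Theorem~\ref{thm:cole_hariharan}. Since the bound $\tOh(n+m+\sqrt{nm\,\out})$ mentions $\out$, which is unknown in advance, I would run the construction for geometrically increasing guesses $\out=1,2,4,\dots$, aborting any execution that exceeds $c\,(n+m+\sqrt{nm\,\out})\polylog(u)$ steps; the first execution that halts yields a correct covering, and the running time is dominated by that last guess. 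From now on fix the guess $\out$ and set the block-size parameter $s:=\big\lceil\sqrt{nm/\out}\,\big\rceil$, clamped to $[1,\min(n,m)]$.

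Partition the sorted set $A$ into $p=\lceil n/s\rceil$ blocks of at most $s$ consecutive elements $A^{(1)},\dots,A^{(p)}$ and, symmetrically, $B$ into $q=\lceil m/s\rceil$ blocks $B^{(1)},\dots,B^{(q)}$. Consecutive blocks of $A$ occupy pairwise disjoint integer intervals $[\min A^{(i)},\max A^{(i)}]$, and likewise for $B$. Call a block pair $(i,j)$ \emph{active} if $[\min A^{(i)}+\min B^{(j)},\ \max A^{(i)}+\max B^{(j)}]$ meets $[\ell,u]$; every element pair $(a,b)$ with $a+b\in[\ell,u]$ then lies in an active pair, so the active pairs form a valid rectangle covering of $(A,B,[\ell,u])$, which is made unique by assigning each such $(a,b)$ to the active pair of smallest index, as in the paper's other constructions. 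The active pairs are enumerated in time $\tOh(p+q+\#\mathrm{active})$ by a two-pointer sweep, since for each $j$ the admissible indices $i$ form a contiguous range. We then compute, for each active $(i,j)$, the set $(A^{(i)}+B^{(j)})\cap[\ell,u]$: if $|A^{(i)}+B^{(j)}|=\tOh(s)$ we obtain it directly by Theorem~\ref{thm:sparse_convolution} at cost $\tOh(s)$, whereas if the sumset is much larger than $s$ we instead recurse on the sub-instance $(A^{(i)},B^{(j)},[\ell,u])$, whose output size $\out_{ij}:=|(A^{(i)}+B^{(j)})\cap[\ell,u]|$ is genuinely smaller and whose block parameter is recomputed from $|A^{(i)}|,|B^{(j)}|,\out_{ij}$. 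Unioning the partial outputs and intersecting with $[\ell,u]$ returns $(A+B)\cap[\ell,u]$.

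The heart of the matter---and the step I expect to be the main obstacle---is the pair of counting bounds $\#\mathrm{active}=\tOh\big((n+m)/s+\out\big)$ and $\sum_{(i,j)\ \mathrm{active}}\out_{ij}=\tOh(\out)$. For the first I would charge each active pair that actually contributes an element of $(A+B)\cap[\ell,u]$ to the smallest such element, and bound the multiplicity of this charge using the disjointness of the block-intervals together with the elementary fact that an integer interval of length $L$ meets at most $L+2$ disjoint intervals each containing a point of a prescribed set; a double counting over sums $v\in(A+B)\cap[\ell,u]$, then over $b\in B$, then over $a\in A$, gives the estimate, and with $s=\lceil\sqrt{nm/\out}\,\rceil$ this is $\tOh(\sqrt{nm\,\out})$ up to the additive $n+m$. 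For the second, uniqueness of the covering and again the interval-disjointness show that each sum value is contributed by only a controlled number of active block pairs; combining with Cauchy--Schwarz, $\sum_{(i,j)}s\sqrt{\out_{ij}}\le s\sqrt{\#\mathrm{active}\cdot\tOh(\out)}=\tOh(\sqrt{nm\,\out})$, and the recursion---of depth $O(\log_s n)$, each level obeying the same estimate---adds only logarithmic overhead. The remaining details should be routine: when $\out$ is so small that the clamp forces $s=\min(n,m)$ (the extremely unbalanced regime $\out<\max(n,m)/\min(n,m)$, where the recursion no longer shrinks the instance) a separate sparse-recovery style argument is needed to achieve $\tOh(\sqrt{nm\,\out})$; the base case $\min(n,m)=O(1)$ is handled by brute force; and correctness of the covering and of the guessing scheme is inherited verbatim from Observation~\ref{obs:sumset_from_covering}. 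I would also note that an alternative to the recursion is to partition $A$ and $B$ by \emph{diameter}, grouping elements into blocks each contained in a window of length $\approx s$, which makes every block-pair sumset automatically of size $\tOh(s)$ at the price of a more delicate count of active pairs; either route should yield the claimed bound.
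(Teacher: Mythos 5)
Your decomposition differs from the paper's in a way that matters. The paper partitions $A$ and $B$ each into $q := \lceil\sqrt{nm/\tilde{\out}}\rceil$ equal-\emph{count} blocks, so the block sizes are $n/q$ for $A$ and $m/q$ for $B$ (different from each other when $n \ne m$). This yields only $2q-1 = \Oh(\sqrt{nm/\out})$ diagonals; interior rectangles contribute at most $\out$ per diagonal (their sumsets along one diagonal are pairwise disjoint subsets of $(A+B)\cap[\ell,u]$), the $\Oh(q)$ boundary rectangles contribute at most $(n/q)(m/q)$ each, and both totals come to $\Oh(\sqrt{nm\,\out})$ with no recursion at all. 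You instead pick the \emph{same} block size $s := \lceil\sqrt{nm/\out}\rceil$ for both sides, giving $p = n/s$ and $q' = m/s$ blocks and hence $\Theta((n+m)/s)$ diagonals, which can be much larger than $\sqrt{nm/\out}$; you then recurse on the active block pairs to keep the cost under control.

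The difficulty is that the counting bound you flag as the ``main obstacle'', namely $\sum_{(i,j)\ \mathrm{active}}\out_{ij}=\tOh(\out)$, is simply false. Take $A = B = \{0,1,\dots,n-1\}$ and $[\ell,u]=[0,2n-2]$, so $\out = 2n-1$ and $s\approx\sqrt{n/2}$. All $\approx 2n$ block pairs are active and each has $\out_{ij}\approx 2s$, so $\sum_{(i,j)}\out_{ij} = \Theta(n^{3/2}) \gg \out$. The reason is that a single output value $v$ can lie in $\Theta(\min(p,q'))$ distinct block pairs (one per diagonal through $v$), so the charging you propose is far from injective. Consequently your Cauchy--Schwarz step $\sum_{(i,j)} s\sqrt{\out_{ij}}\le s\sqrt{\#\mathrm{active}\cdot\sum\out_{ij}}$ overshoots $\sqrt{nm\,\out}$ by a polynomial factor. (The algorithm might still be fast on this instance, since the true per-pair work is $\min\bigl(|A^{(i)}+B^{(j)}|,\ s+s\sqrt{\out_{ij}}\bigr) = \tOh(s)$ here, but that only shows the Cauchy--Schwarz accounting is too lossy to certify it.) The paper sidesteps all of this: with equal-count blocks it bounds $\sum_{(I,J)\ \mathrm{interior}}|A_I+B_J|$ directly by the diagonal argument, and handles the $\Oh(q)$ boundary rectangles by the trivial $|A_I|\cdot|B_J|$ bound. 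Switching to that parameterization would let you drop both the recursion and the problematic counting claim.
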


We present conditional lower bounds based on two classical problems: Boolean matrix multiplication and sliding window Hamming distance. In Boolean matrix multiplication, we are asked to multiply two Boolean $n \times n$-matrices. By a reduction to multiplying real-valued matrices, Boolean matrix multiplication can be solved in time $\Oh(n^{\omega})$, where $\omega \le 2.373$ is the exponent of fast matrix multiplication~\cite{vw12,g14}. 
Limitations of fast matrix multiplication have been recently studied, and it was shown that known techniques cannot facilitate $\omega < 2 + \frac{1}{6}$~\cite{avw18a,avw18b,a19}. In particular, $\omega > 2$ is a barrier that known techniques cannot overcome, also for Boolean matrix multiplication. 

In the sliding window Hamming distance problem, we are given a text of length $2n$ and a pattern of length $n$, both over a (possibly large) alphabet $\Sigma$, and want to find the Hamming distance between the pattern and every length-$n$ substring of the text. This problem admits a $\tilde{O}(n^{3/2})$-time algorithm, and it is a major open question in string algorithms whether a faster algorithm exists~\cite{A87,gu17}.

\begin{theorem}[Hardness for Interval-Restricted] \label{thm:interval_hardness}
  Let $\delta > 0$.
  If interval-restricted sumset computation is in time $\tOh((n + m + \out)^{\omega/2 - \delta})$, then Boolean matrix multiplication is in time $\tOh(n^{\omega - 2 \delta})$. 
  
  If interval-restricted convolution on non-negative vectors is in time $\tOh(n + m + (n m\, \out)^{1/2 - \delta})$, then sliding window Hamming distance is in time $\tOh(n^{3/2 - 3 \delta})$. 

\end{theorem}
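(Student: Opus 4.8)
The plan is to prove both parts by reductions \emph{into} the relevant interval-restricted problem, using number encodings that linearize the combinatorial structure of the source problem so that (i) the positions carrying the answer form a genuine contiguous interval, and (ii) the number of non-zeros in the restricted output stays proportional to the answer size.

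\emph{Part 1 (Boolean matrix multiplication).} Given Boolean $n\times n$ matrices $(a_{ij})$, $(b_{jk})$, I would fix the base $\beta = 4n$ and set $A' = \{(n+j)\beta^3 + i\beta : a_{ij}=1\}$ and $B' = \{(n-1-j)\beta^3 + k : b_{jk}=1\}$. Then a sum $x+y$ with $x = (n+j)\beta^3 + i\beta \in A'$ and $y = (n-1-j')\beta^3 + k \in B'$ equals $(2n-1 + j - j')\beta^3 + i\beta + k$, and since $|j-j'|<n<\beta$ and $i\beta + k < \beta^2$ there are no carries between these digit blocks. Choosing $\ell = (2n-1)\beta^3$ and $u = \ell + \beta^2 - 1$ (one checks $A',B'\subseteq[u]$ and $u = O(n^4)$), an element $v$ lies in $(A'+B')\cap[\ell,u]$ exactly when its $\beta^3$-digit is $2n-1$, i.e. $j=j'$, in which case $v = \ell + i\beta + k$ for some $j$ with $a_{ij}=b_{jk}=1$; hence $(A'+B')\cap[\ell,u] = \{\ell + i\beta + k : c_{ik}\neq 0\}$ where $C = AB$, and $C$ is read off in time $\tOh(n^2)$. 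Here $|A'|,|B'|,\out \le n^2$, so a $\tOh((n'+m'+\out)^{\omega/2-\delta})$-time algorithm solves the instance in $\tOh((n^2)^{\omega/2-\delta}) = \tOh(n^{\omega - 2\delta})$, which dominates the $\tOh(n^2)$ reduction overhead in the relevant range $\delta \le (\omega-2)/2$.

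\emph{Part 2 (sliding window Hamming distance).} After relabeling the alphabet to $\{0,1,\ldots,\sigma-1\}$, where $\sigma \le 3n$ is the number of distinct characters in $P\cup T$, I would fix $\Gamma = 2n$ and build non-negative $0/1$ vectors: $f$ has a $1$ at position $\phi(i) := n - i + P[i]\cdot\Gamma$ for each $i\in[0,n)$, and $g$ has a $1$ at position $\psi(j) := \ell - n + j - T[j]\cdot\Gamma$ for each $j\in[0,2n)$, where $\ell := n + \sigma\Gamma$ and $u := \ell + n = O(n^2)$. Both $\phi$ and $\psi$ are injective (because $\Gamma$ exceeds the range of the position indices) and supported in $[0,u]$, so $\|f\|_0=n$, $\|g\|_0=2n$. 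Now $(f\conv g)_{\ell+s} = \#\{(i,j) : \phi(i)+\psi(j) = \ell+s\} = \#\{(i,j) : (j-i) + (P[i]-T[j])\Gamma = s\}$, and for $0\le s\le n$ the bound $|s-(j-i)|\le 2n-1 < \Gamma$ forces $P[i]=T[j]$ and then $j = i+s$; hence $(f\conv g)_{\ell+s}$ equals the number of positions where $P$ agrees with the window $T[s..s+n-1]$, so the desired Hamming distance is $n - (f\conv g)_{\ell+s}$. Computing $(f\conv g)_{[\ell,u]}$ is thus an instance of interval-restricted convolution with $n'=n$, $m'=2n$, and $\out = \|(f\conv g)_{[\ell,u]}\|_0 \le n+1$; plugging into a $\tOh(n'+m'+(n'm'\out)^{1/2-\delta})$-time algorithm yields total time $\tOh(n + n^{3(1/2-\delta)}) = \tOh(n^{3/2-3\delta})$ in the relevant range $\delta \le 1/6$, and the $\tOh(n)$ relabeling/readout is lower order.

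\emph{Main obstacle.} The delicate point is keeping $\out$ small while the answer positions form an honest interval. In Part 1 this is automatic: all answer positions share the top digit $2n-1$ and differ only in the low digits $i\beta+k$, so they fill $[\ell,u]$ densely. In Part 2, the natural ``flatten the (position, character) grid and correlate'' encoding would place the $n+1$ answers at positions spaced $\Gamma$ apart, and the smallest interval containing them would catch $\Theta(n^2)$ spurious non-zeros, wrecking the output-sensitivity; making $\phi,\psi$ \emph{affine} in the index with the character contributing the large multiple $\Gamma$ is exactly what turns the answer positions into the consecutive integers $\ell,\ell+1,\ldots,\ell+n$. The remaining work is routine: verifying the no-carry / no-wrap conditions, the injectivity of $\phi,\psi$, the containments $A',B'\subseteq[u]$ and $\supp f,\supp g\subseteq[0,u]$, and the polylogarithmic bookkeeping hidden by $\tOh$.
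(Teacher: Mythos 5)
Your proof is correct and follows essentially the same approach as the paper's: both parts reduce from Boolean matrix multiplication (resp.\ sliding window Hamming distance) by packing the problem into digit blocks of integers/positions so that the interval restriction enforces the index-matching (resp.\ character-matching) condition, while keeping $\out = O(n^2)$ (resp.\ $O(n)$). The only notable differences are cosmetic: you build explicitly non-negative encodings (adding the $(n+j)$, $(n-1-j')$ offsets in Part~1 and the $\sigma\Gamma$ shift in Part~2) where the paper allows signed integers and invokes shift-invariance, and in Part~1 you pre-filter to the $1$-entries of the matrices rather than including all $n^2$ indices and checking the matrix bits via an extra middle digit; neither change affects the asymptotics or the structure of the argument.
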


\subsection{Overview of Techniques} 

\subsubsection{Reduction from Subset Sum to Prefix-Restricted Sumset Computation}

To reduce \textsc{SubsetSum} to prefix-restricted sumset computation, our starting point is the pseudopolynomial $\tOh(t)$-time algorithm of~\cite{Bring17}, which uses a combination of color coding and FFT. We observe that in this algorithm all uses of FFT in fact compute prefix-restricted sumsets, so we can replace FFT by an algorithm for prefix-restricted sumset computation. This directly yields a reduction from \textsc{SubsetSum} to prefix-restricted sumset computation, but it does not directly give the desired guarantees. Our goal is to make this reduction ``output-sensitivity preserving'', that is, if we can compute prefix-restricted sumsets in near-linear output-sensitive time then the reduction yields an algorithm for \textsc{SubsetSum} that runs in near-linear output-sensitive time. 

To this end, we need to take a closer look at the algorithm in~\cite{Bring17}.
Given $(X,t)$, this algorithm splits $X$ into the set $X^{(L)}$ of numbers that are larger than $t / \polylog(n)$ and the remaining small numbers $X^{(S)}$. Any subset summing to at most $t$ uses at most $\polylog(n)$ large numbers, which enables a color-coding-based algorithm in~\cite{Bring17}. We observe that, without any problems, replacing the usage of FFT in that part of the algorithm by near-linear output-sensitive prefix-restricted sumset computation yields a near-linear output-sensitive algorithm for handling the large numbers~$X^{(L)}$. 
The small numbers $X^{(S)}$ are then further split at random into two subsets $X^{(1)},X^{(2)}$. By concentration inequalities, this splits the subset $Y$ into two subsets $Y^{(i)} = Y \cap X^{(i)}$ satisfying with high probability $\Sigma(Y^{(i)}) \le (1+\eps) t/2$. 
We can therefore find $\Sigma(Y^{(1)})$ and $\Sigma(Y^{(2)})$ by recursive calls on $(X^{(1)}, (1+\eps) t/2)$ and $(X^{(2)}, (1+\eps) t/2)$. Since we recurse on two subproblems and the target bound~$t$ is roughly split in half, one can argue that the total running time is $\tOh(t)$~\cite{Bring17}.


In this paper, we instead measure the running time in terms of $|\mathcal{S}(X,t)|$. Hence, it does not suffice that the target bound $t$ is roughly split in half; we also need that the measure $|\mathcal{S}(X,t)|$ is roughly split in half. Indeed, if we can show this, then we again obtain two subproblems and the measure $|\mathcal{S}(X,t)|$ is roughly split in half, so we can argue that the running time is $\tOh(|\mathcal{S}(X,t)|)$.
This requires novel insights into the set of all subset sums, to obtain inequalities relating the number of subset sums of $X$ to that of its subsets. 
Specifically, we show that $X^{(1)}$ and $X^{(2)}$ satisfy 
\[ |\mathcal{S}(X^{(1)},(1+\eps) t/2)| + |\mathcal{S}(X^{(2)},(1+\eps) t/2)| \le (1+\Oh(\eps)) |\mathcal{S}(X,t)| + \Oh(1). \]
This inequality only holds if every element of $X^{(1)},X^{(2)}$ is sufficiently smaller than the target $t$. Indeed, in the proof of this inequality we use that all numbers in $X^{(1)},X^{(2)}$ are small, which justifies why we removed the large numbers $X^{(L)}$.

\subsubsection{Restricted Sumset Computation}

We now give an overview of our techniques for restricted sumset computation, and discuss the relation to classical results from Additive Combinatorics.

Consider the prefix-restricted sumset computation of $(A+B) \cap [u]$. Note that we can assume $A,B \subseteq [u]$, since larger numbers cannot form sums of at most $u$.
We also always implicitly assume that for every number $a \in A$ there is a number $b \in B$ with $a + b \le u$, since otherwise we can remove $a$ from $A$ without changing $(A+B) \cap [u]$. In other words, we assume $\max(A) + \min(B) \le u$, and symmetrically $\min(A) + \max(B) \le u$, which can be ensured after a $\Oh(|A|+|B|)$-time preprocessing.
 
\paragraph{Standard sumset computation does not suffice.}
Standard sparse convolution (Theorem \ref{thm:sparse_convolution}) allows us to compute $A+B$ in near-linear output-sensitive time $\tOh(|A+B|)$. 
In particular, since $A+B \subseteq [2u]$, we can compute $(A+B) \cap [2u]$ in near-linear output-sensitive time. 
Although this might seem close at first glance, careful inspection reveals that it can be much larger than the goal $|(A+B)\cap [u]|$. Indeed, we can construct sets $A,B \subseteq [u]$ with $|A|=|B|=n$ satisfying $|(A+B)\cap [2u]| = \Omega(n^2)$ and $|(A+B)\cap[u]| = \Oh(n)$. Hence, in general we cannot afford to compute $(A+B)\cap [2u]$ when our goal is to compute $(A+B)\cap[u]$ in near-linear output-sensitive time. 

For this construction, assume that $u$ is a sufficiently large even integer and let 
\begin{align*}
  &X := [n] = \{0,1,2,\ldots,n\}, && Y := n [n] = \{0,n,2n,\ldots,n^2\}, \\
  &A := \{0\} \cup (\{\tfrac u2\} + X), && B := \{0\} \cup (\{\tfrac u2\} + Y).
\end{align*}
One can check that $\max(A) + \min(B),\, \min(A) + \max(B) \le \frac u2 + n^2 \le u$. Since for any $x \in X,\, y \in Y$ the sum $(\frac u2 + x) + (\frac u2 + y)$ is larger than $u$, we obtain 
$|(A+B) \cap [u]| \le |\{0\}| + |X| + |Y| = 2n+1$.
Moreover, one can check that $X+Y = [n^2+n]$, and thus $|(A+B) \cap [2u]| = |A+B| \ge n^2$. 

\paragraph{A natural class of algorithms.} \label{sec:tech_natural} 
We investigate a natural and safe algorithmic approach: To compute $(A+B)\cap [u]$, we make sure that every sum $A_i + B_j \leq u$ is computed, where $1 \le i \le n$, $1 \le j \le m$. 
Since we can use as a subroutine that standard sumset computation is in near-linear output-sensitive time, it is natural to compute sumsets $A_I + B_J$ for some $I \subseteq \{1,\ldots,n\}$ and $J \subseteq \{1,\ldots,m\}$. 
Combining these two arguments, we arrive at our notion of \emph{covering}. In particular, we want to compute a covering~$\mathcal{C}$ of $(A,B,[u])$, and then use Observation~\ref{obs:sumset_from_covering} to compute $\big( \bigcup_{(I,J) \in \mathcal{C}} A_I+B_J \big) \cap [u] = (A+B) \cap [u]$ in time proportional to the cost of $\mathcal{C}$. Hence, covering algorithms are a natural class of algorithms for prefix-restricted sumset computation.

We are particularly interested in \emph{unique} coverings, because they allow us to compute prefix-restricted convolutions: A pair $A_i+B_j$ being covered exactly once corresponds to adding the product $f_i \cdot g_j$ exactly once to the output value $(f \conv g)_{i+j}$.

Moreover, we are interested in \emph{rectangle} coverings, i.e., the case of $I,J$ being intervals, because (i) they have a more geometric flavour and are thus easier to argue about, (ii) all constructions of non-rectangular coverings that we came up with could be adapted to become rectangular, so we are not aware of any better non-rectangular coverings, and (iii) for rectangular coverings we can show (unconditional) lower bounds.

\paragraph{Designing covering algorithms.}
In what follows, for ease of exposition we shall assume that $|A| = |B| = n$. Our first algorithmic step is to reduce to a promise version of the problem, where we know $\out = |(A+B) \cap [u]|$ up to a constant factor. (In fact, we could even assume to know a superset of $(A+B) \cap [u]$ whose size is at most a constant factor larger, but we do not know how to exploit such a set in this context.)

The easiest possible covering algorithm partitions $\{1,\ldots,n\}$ into $k$ consecutive intervals $I_1,\ldots,I_k$ for~$A$, and similarly into $J_1,\ldots,J_k$ for $B$. We may ignore pairs $(x,y)$ with $\min(A_{I_x}) + \min(B_{J_y}) > u$, since they do not contain any sum $A_i+B_j \le u$. For all remaining $(x,y)$, we construct the pairs $(I_x,J_y)$ to form a covering $\mathcal{C}$. In the following we discuss two ways of choosing $I_1,\ldots,I_k$ and $J_1,\ldots,J_k$.

Choosing the interval partitioning such that it splits the universe $[u]$ into parts of length $u/k$ ensures that any sum $s \in A_I + B_J$ for $(I,J) \in \mathcal{C}$ satisfies $s \le (1+\frac 2k)u$. Moreover, we show that any $s \in \mathbb{N}$ appears in at most $k$ sumsets $A_I+B_J, \, (I,J) \in \mathcal{C}$ (this holds because every ``diagonal'' of subproblems contains distinct sums). In total, we can bound the cost of $\mathcal{C}$ by $k \cdot |(A+B) \cap [(1+\frac 2k) u]|$, which for $\zeta = 2/k$ yields one result of Theorem~\ref{thm:add_relaxed_upper_bound}, see Section~\ref{sec:prefix_zeta}.

Alternatively, we can choose the interval partitioning such that it splits $A$ and $B$ into parts of size $n/k$. There are $\Oh(k)$ pairs $(I,J) \in \mathcal{C}$ along the ``boundary'', where $\min(A_I+B_J) \le u < \max(A_I+B_J)$. For each such pair we bound the cost trivially by $(\frac nk)^2$. The remaining pairs only cover sums in $(A+B) \cap [u]$, and each such sum is covered at most $k$ times, as in the previous paragraph. In total, this yields cost at most $k \cdot \out + \frac{n^2}k$, which is $\Oh(n \cdot \out^{1/2})$ by optimizing over $k$. Since $n \le \out$, in particular the cost is at most $\Oh(\out^{3/2})$. Generalizing this idea to the interval-restricted case yields Theorem~\ref{thm:add_interval_algo}, see Section~\ref{sec:interval_restricted}. For the interval-restricted case, this cost bound turns out to be (conditionally) optimal, see Theorem~\ref{thm:interval_hardness}.

\smallskip
So far, we used simple charging arguments to obtain a covering of cost $\Oh(\out^{3/2})$.
We show that for prefix-restricted sumset computation we can bypass the hardness results for the interval-restricted case and reduce the exponent further to $4/3$. 
This requires insights from Additive Combinatorics. 
Specifically, to obtain an $\tOh(\out^{4/3})$ algorithm, we make use of Ruzsa's triangle inequality. This inequality from Additive Combinatorics is a charging argument that implies
\[ |A_I+B_J| \le \frac{|A_I + B_{J'}| |A_{I'} + B_{J'}| |A_{J'} + B_J|}{|I'| |J'|}, \]
for any sets $I,I',J,J' \subseteq \{1,\ldots,n\}$. In particular, if we choose $I'$ and $J'$ such that the three sumsets on the right hand side are all subsets of $[u]$, then the size of every sumset on the right hand side is bounded from above by $\out = |(A+B) \cap [u]|$. With some care, this inequality can be used to show
\[ |A_I + B_J| \le \frac{\out^3}{\min(I) \min(J)}. \]
This allows us to bound the number of ``bad'' sums in $A_I+B_J$, namely the sums in $(A_I+B_J) \setminus [u]$, in terms of the output-size and the position of the rectangle $(I,J)$.
However, it turns out that this bound is not good enough, essentially because it yields a worst-case bound that holds for all rectangles $(I,J)$. We thus replace it with an improved bound that holds for ``most'' rectangles $(I,J)$.

Combining this bound with several tricks from our previous covering constructions 
allows us to construct a covering of cost $\tOh(\out^{4/3})$, see Section~\ref{sec:technical_core}. 

\paragraph{Connection to the Balog-Szemer\'{e}di-Gowers Theorem.} Probably the most important result on restricted sumsets is the Balog-Szemer\'{e}di-Gowers (BSG) theorem~\cite[Theorem 2.13]{tao2006additive}.  Given sets $A,B \subseteq \mathbb{Z}$ and a set of pairs $G \subseteq A \times B$, the BSG theorem is used to pass from information about the \emph{restricted} sumset $A+_{G}B = \{ a + b \mid (a,b) \in G\}$ to information about an \emph{unrestricted} sumset $A_I+B_J$, for some $I,J \subseteq \{1,\ldots,n\}$ (where we continue to assume that $|A| = |B| = n$). 
Specifically, the BSG theorem states that if $|G| = \delta n^2$ and $|A +_G B| = c n$, then there exist sets $I,J \subseteq \{1,\ldots,n\}$ of size $\Omega(\delta^2 n)$ with $|A_I+B_J| \le \Oh((c/\delta)^5 n)$. 

In a breakthrough paper, Chan and Lewenstein~\cite{CL15} algorithmically exploited the BSG theorem in order to solve several problems with additive structure. On a high level, their approach uses the BSG theorem repeatedly, constructing a sequence $(I^{(1)},J^{(1)}), \ldots, (I^{(k)},J^{(k)})$ of subsets $I^{(i)},J^{(i)} \subseteq \{1,\ldots,n\}$ and a remainder $R \subseteq A \times B$ such that
	\begin{align}
	A+_G B \;=\; \left( \huge \bigcup_{i=1}^{k} \left( A_{I^{(i)}} + B_{J^{(i)}} \right) \right) \bigcup \{ a+b: (a,b) \in R\} 	\\	
	\end{align}
	For $|A +_G B| = c n$, this decomposition satisfies that all sumsets are small, i.e., $|A_{I^{(i)}} + B_{J^{(i)}}| = \Oh((kc)^5 n)$, and the remainder is not too large, i.e., $|R| = \Oh(n^2 / k)$.
Using this decomposition, one can compute (a superset of) $A+_G B$ by computing the unrestricted sumsets $A_{I^{(i)}} + B_{J^{(i)}}$ and then iterating over all elements of $R$. Using near-linear output-sensitive sumset computation, this takes total time $\tOh(k^6 c^5 n + n^2/k)$. For several applications considered in~\cite{CL15}, this approach yields non-trivial improvements. We note, however, that constructing the sequence and the remainder  in their setting requires quadratic time in $n$.

\smallskip
A careful reader could notice that our notion of a \emph{covering} is very similar to the decomposition computed by Chan and Lewenstein~\cite{CL15}. 
In our situation, we are interested in the set $G = \{ (a,b) \in A \times B \mid a+b \le u\}$, since our goal is to compute $A +_G B = (A+B) \cap [u]$. A covering~$\mathcal{C}$ satisfies $A +_G B \subseteq \bigcup_{(I,J) \in \mathcal{C}} A_I + B_J =: T$, from which we can compute $A +_G B$ as $T \cap [u]$. Except for the remainder set $R$, this is analogous to Chan and Lewenstein's decomposition. 
Thus, both notions produce a certain form of covering of $A+_G B$.\footnote{We remark that Chan and Lewenstein~\cite{CL15} construct their decomposition for a somewhat different reason, since they are not (immediately) concerned with output-sensitivity.} 
Unfortunately, in our situation using their decomposition as a black box is worse than the simple covering of cost $\Oh(\out^{3/2})$ for two reasons. The first is that the time needed to construct their decomposition is $\Theta(|A||B|)$, which could be up to $\out^2$. The second reason is that the bound they get on the cost of the covering they produce is strictly worse than $\out^{3/2}$. Thus, even if their decomposition was given to us for free, it would be worse than the easy approach which uses elementary charging arguments.

Our approach in this paper can be seen as using the arguments of the proof of the BSG theorem in an ad-hoc manner for our specific set $G$. Indeed, at the heart of the BSG theorem lies a charging argument that is analogous to Ruzsa's triangle inequality (see~\cite[Section 7.3]{CL15}), and careful inspection of the algorithmic version of the BSG theorem (see~\cite[Sections 2 and 7]{CL15}) shows that it can be modified to produce a rectangle covering for our set $G$ with no loss in parameters.
We present a highly optimized variant of this ad-hoc construction in this paper.

In summary, our algorithms can be viewed as ad-hoc BSG-type theorems for a special choice of the restriction set $G$, obtaining bounds that seem unreachable in the more general setting.

\paragraph{Connection to Freiman-type Theorems.} Another celebrated result among combinatorialists is Freiman's theorem~\cite[Theorem 5.33]{tao2006additive}: If $A\subseteq \mathbb{N}$ satisfies $|A+A| \leq c |A|$, then $A$ is contained in an $f(c)$-dimensional arithmetic progression of length $g(c) \cdot |A|$. 
Such a result could potentially be useful in our situation, since for a rectangle with $A_I+B_J \subseteq [u]$ we have $|A_I+B_J| \le \out$, and thus a small output-size could imply that $A_I$ and $B_J$ are essentially arithmetic progressions, which is easy to exploit.
However, the state-of-the-art bounds on $f(\cdot), g(\cdot)$ are exponential and thus not useful in applications. This is why much of Additive Combinatorics has focused on bypassing the need for Freiman's theorem, to prove robust theorems with polynomial parameter dependence. 

Recently, \cite{shao2019robust} announced such a robust extension of another theorem by Freiman, called the $3k-4$ theorem, which roughly states that if $|A+_G B|$ is ``small'' and $|G|$ is ``large'' ($|G| \geq (1-\epsilon) |A||B|$), then one can extract highly non-trivial information about $A$ and $B$. However, the ``small'' and ``large'' conditions are too restrictive for our algorithmic applications, and it is unclear to us at the moment to what extent this result can be exploited algorithmically. 
We expect that further interaction with the field of Additive Combinatorics yields more insights that lead to new algorithmic machinery.

\paragraph{Selection from $\pmb{X+Y}$.} A problem that seems related at first glance is selecting the $k$-th element from $X+Y$ and row-sorted matrices, see~\cite{KKZZ19} and references therein. The crucial difference is that in that line of research $X+Y$ is treated as a~\emph{multiset}, instead of a set. For example, they consider $\{1,2\} + \{1,2,3\} = \{2,3,3,4,4,5\}$, while we define $\{1,2\} + \{1,2,3\} = \{2,3,4,5\}$. Note that in our situation the presence of multiplicities is exactly what we want to avoid. It is vital for our algorithm to process every element in the sumset the minimum number of times possible, and not proportional to the number of times it appears, since we measure running time with respect to the size of the sumset (without multiplicities). For this reason, the ideas of~\cite{KKZZ19} do not seem applicable for top-k and prefix-restricted convolution.

\paragraph{Lower Bound on Rectangle Coverings.} 
Let us also briefly discuss the construction of instances $(A,B,[u])$ on which any rectangle covering has superlinear cost (see Theorem~\ref{thm:covering_lb}).
We build such instances in two steps.

First, we construct ``many'' sets $X^{(1)},\ldots, X^{(g)}$ and $Y^{(1)},\ldots,Y^{(g)}$ such that $|X^{(\ell)} + Y^{(\ell)}|$ is ``large'', while $|X^{(\ell)} + Y^{(\ell')}|$ is ``small'' for any $\ell \neq \ell'$. To this end, we pick appropriate parameters $m,t$, and greedily choose an error-correcting code over $\{0,1\}^t$ where every codeword has Hamming weight exactly $t/2$. We define $X^{(\ell)}$ to be the set of all $t$-digit numbers in the $m$-ary number system with the constraint that their $r$-th digit is $0$ if the $\ell$-th codeword has a $1$ at position $r$. Similarly, we define $Y^{(\ell)}$ to be the set of all $t$-digit numbers in the $m$-ary system with the constraint that their $r$-th digit is $0$ if the $\ell$-th codeword has a $0$ at position $r$.

In the second step, we contruct the set $A$ as a union over several shifted copies of each $X^{(\ell)}$. The set $B$ is constructed similarly from the sets $Y^{(\ell)}$. 
In this construction, we ensure that $(A+B)\cap [u]$ contains no sumset $X^{(\ell)} + Y^{(\ell)}$, so the output-size is ``small''. We also ensure that any rectangle~$(I,J)$, that covers many parts of $(A+B)\cap [u]$ at once, needs to cover a sumset $X^{(\ell)} + Y^{(\ell)}$, which results in a ``large'' sumset $|A_I+B_J|$. Therefore, any covering either contains a rectangle of ``large'' cost or consists of ``many'' rectangles; in both cases we obtain a superlinear lower bound on the cost in terms of the output-size $|(A+B)\cap [u]|$.

\subsection{Organization}

In Section~\ref{sec:top_k_conv_reductions} we reduce top-$k$-convolution and related problems to finding a covering, proving Corollary~\ref{cor:top_k_and_related_algos}. 
In Section~\ref{sec:learn_out} we show how to learn the output-size up to a constant factor for restricted convolution problems. In Section \ref{sec:interval_restricted} we study interval-restricted convolution, proving Theorems~\ref{thm:add_interval_algo} and~\ref{thm:interval_hardness}.
In Section~\ref{sec:prefix_zeta}, we relax the upper bound, proving part of Theorem~\ref{thm:add_relaxed_upper_bound}.
In Section~\ref{sec:technical_core} we describe and analyze our $\tilde{O}(\out^{4/3})$-cost covering for prefix-restricted sumset computation (Theorem~\ref{thm:technical_core}) and we prove a lower bound for rectangle coverings (Theorem~\ref{thm:covering_lb}). Section~\ref{sec:subset_sum} then contains the reduction from \textsc{SubsetSum} to prefix-restricted sumset computation, proving Theorem~\ref{thm:subsetsum_result} and finishing the proof of Theorem~\ref{thm:add_relaxed_upper_bound}.

\section{Top-$\pmb{k}$-Convolution}
\label{sec:top_k_conv_reductions}

In this section, we present some easy reductions among Top-$k$-Convolution and related problems, proving Corollary~\ref{cor:top_k_and_related_algos}.
We start by proving that Top-$k$-Convolution is equivalent to Prefix-Restricted convolution.

\begin{lemma} \label{lem:equ_topk_prefix}
  If Top-$k$-Convolution on non-negative vectors can be solved in time $\tOh(k^\alpha)$, then Prefix-Restricted convolution on non-negative vectors can be solved in time $\tOh(\out^\alpha)$, and vice versa.
\end{lemma}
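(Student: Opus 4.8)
The plan is to prove the equivalence by giving reductions in both directions, using the fact that for non-negative vectors the support of $f \conv g$ is exactly $\supp(f) + \supp(g)$ (no cancellation), so the ``first $k$ non-zero entries'' and the ``first $u$ entries'' are controlled by prefixes of the sumset.

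First, suppose Top-$k$-Convolution runs in time $\tOh(k^\alpha)$, and we want to solve Prefix-Restricted convolution on input $(u, f, g)$. The idea is a doubling search on $k$: run Top-$k$-Convolution for $k = 2^0, 2^1, 2^2, \ldots$; after each call, check whether the $k$-th returned non-zero index $i_k$ exceeds $u$. If $i_k > u$, then all non-zero entries of $f \conv g$ with index $\le u$ are among the first $k$ returned, so we can restrict to $(f \conv g)_{[u]}$ and output it. If $i_k \le u$, double $k$ and repeat. The first $k$ at which we stop satisfies $k \le 2\,\out$ (since once $k$ exceeds $\out$ the $k$-th non-zero index is $> u$, possibly undefined, which we handle as the termination signal), so the total time is $\sum_{j} \tOh((2^j)^\alpha) = \tOh(\out^\alpha)$ by the geometric sum. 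One subtlety: if $f \conv g$ has fewer than $k$ non-zero entries total, Top-$k$-Convolution returns fewer than $k$ pairs; we treat this the same way as $i_k > u$ and terminate.

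For the converse, suppose Prefix-Restricted convolution runs in time $\tOh(\out^\alpha)$ and we want to solve Top-$k$-Convolution on $(f, g, k)$. Here we binary-search for the smallest universe bound $u^\star$ such that $(f \conv g)_{[u^\star]}$ has exactly $k$ non-zero entries (or at least $k$, if there is a ``jump''); the search range is $[0, 2d]$ where $d$ is the largest index with a non-zero entry in $f$ or $g$. The complication is running time: a call to Prefix-Restricted convolution with a badly chosen large $u$ could produce an output of size far exceeding $k$, costing much more than $\tOh(k^\alpha)$. To control this, we add the standard twist mentioned in the commented-out text of the excerpt: we run Prefix-Restricted convolution on the current $u$ with an imposed time budget of $C \cdot k^\alpha \polylog$ (the running time it would have if the output size were $k$); if it exceeds this budget we abort that call and conclude the output at this $u$ is larger than $k$, i.e., $u$ is too large, and recurse on the lower half. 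This costs at most one extra $\polylog$ factor and keeps every call within $\tOh(k^\alpha)$, so the $O(\log d)$ binary-search steps give total time $\tOh(k^\alpha)$. Once $u^\star$ is found, one final (budgeted) call computes $(f \conv g)_{[u^\star]}$ and we return its first $k$ non-zero pairs.

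The main obstacle is the running-time bookkeeping in the converse direction: one must argue that aborting an over-budget call is sound (a call whose honest running time exceeds the budget indeed has output size $> k$, because Prefix-Restricted convolution is itself output-sensitive with the claimed bound, so large running time certifies large output), and that the binary search remains correct when some calls are aborted — since the predicate ``output size $\le k$'' is monotone in $u$, an aborted call correctly signals ``predicate false'', so the search still converges. The forward direction is essentially just the geometric-sum argument and is routine. I would also remark that both reductions preserve the non-negativity assumption, since they only call the assumed subroutine on the original (non-negative) vectors $f, g$.
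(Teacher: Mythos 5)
Your proof is correct and follows essentially the same two-step scheme as the paper's: a doubling search on $k$ for the forward direction (bounded by a geometric sum), and a binary search on $u$ with a time-budget/abort trick for the converse. You spell out the soundness of aborting and the monotonicity that makes the binary search robust a bit more explicitly than the paper does, but the argument is the same.
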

\begin{proof}
  To solve Prefix-Restricted sparse convolution, we run Top-$k$-Convolution for $k=2^0,2^1,2^2,\ldots$ until we reach a value of $k$ so that the $k$-th non-zero entry lies above $u$. Then we have found all elements of the Prefix-Restricted convolution, and we only computed at most twice as many non-zero entries as necessary.

  To solve Top-$k$-Convolution on vectors $f,g \in \mathbb{R}_{\ge 0}^d$, we perform binary search over $\{0,1,\ldots,2d\}$, to find the smallest $u$ such
that Prefix-Restricted convolution on $[u]$ returns a $k$-sparse vector. To obtain the desired running time we add a small twist: if the execution time of Prefix-Restricted sumset computation on $[u]$ is more than $k^\alpha \cdot \polylog d$, i.e., more than what it would be for output size $k$, then we abort and look for a smaller $u$. 

  Both reductions only add a log-factor to the running time.
\end{proof}

We next show that coverings not only allow us to compute Prefix-Restricted sumsets, but they even enable the computation of Prefix-Restricted convolutions, if the covering is \emph{unique}.

\begin{lemma} \label{lem:conv_from_covering}
  Suppose that, given $(A,B,u)$, we can compute a \emph{unique covering} of $(A,B,[u])$ with cost $\tOh(\out^\alpha)$ in expected time $\tOh(\out^\alpha)$, where $\out := |(A+B)\cap [u]|$. Then Prefix-Restricted convolution on non-negative vectors can be solved in expected time $\tOh(\out^\alpha)$.
\end{lemma}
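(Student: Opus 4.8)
The plan is to run the assumed unique-covering subroutine and then, as in Observation~\ref{obs:sumset_from_covering}, replace each sumset $A_I+B_J$ by the \emph{sparse convolution} $f_{A_I}\conv g_{B_J}$, summing all of these into one vector; \emph{uniqueness} of the covering is precisely what guarantees that every product $f_ig_j$ is added exactly once, so the sum agrees with $f\conv g$ on $[u]$. First I would preprocess the non-negative inputs $f,g$ (which we may assume are supported on $[u]$, since coordinates above $u$ cannot influence any output entry $\le u$): set $A:=\supp(f)$ and $B:=\supp(g)$, and delete from $A$ every $a$ with $a+\min(B)>u$ and from $B$ every $b$ with $\min(A)+b>u$. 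This costs $\Oh(|A|+|B|)$, does not change $(f\conv g)_{[u]}$, and enforces $\max(A)+\min(B)\le u$ and $\min(A)+\max(B)\le u$. Consequently each surviving $a\in A$ yields a \emph{distinct} element $a+\min(B)$ of $(A+B)\cap[u]$, so $|A|\le\out$, and symmetrically $|B|\le\out$; and since $f,g$ are non-negative, $(f\conv g)_s\ne 0\iff s\in A+B$, so the convolution output-size $\|(f\conv g)_{[u]}\|_0$ equals $\out=|(A+B)\cap[u]|$. Thus the input has size $\Oh(\out)$ and will not dominate the running time.

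Next, invoke the hypothesis to compute, in expected time $\tOh(\out^\alpha)$, a unique covering $\mathcal C$ of $(A,B,[u])$ with $\sum_{(I,J)\in\mathcal C}|A_I+B_J|=\tOh(\out^\alpha)$. For each $(I,J)\in\mathcal C$ compute $f_{A_I}\conv g_{B_J}$ using Theorem~\ref{thm:cole_hariharan}; since these are non-negative vectors, the support of this convolution is exactly $A_I+B_J$, so the call runs in expected time $\tOh(|A_I+B_J|)$. Accumulate all of them into a single sparse vector $h:=\sum_{(I,J)\in\mathcal C} f_{A_I}\conv g_{B_J}$ (e.g.\ via a hash table keyed by position), which by linearity of expectation takes total expected time $\tOh\big(\sum_{(I,J)\in\mathcal C}|A_I+B_J|\big)=\tOh(\out^\alpha)$, and return $h_{[u]}$.

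It remains to verify correctness. Fix $s\le u$; expanding the definitions,
\[ h_s \;=\; \sum_{(I,J)\in\mathcal C}\ \sum_{\substack{i\in I,\,j\in J\\ A_i+B_j=s}} f_{A_i}\,g_{B_j}. \]
Consider any index pair $(i,j)$ with $A_i+B_j=s$. Since $s\in[u]$, the covering property provides a pair $(I,J)\in\mathcal C$ with $(i,j)\in I\times J$, and \emph{uniqueness} says it is the only such pair; for every other $(I',J')\in\mathcal C$ we have $i\notin I'$ or $j\notin J'$, hence $(f_{A_{I'}})_{A_i}=0$ or $(g_{B_{J'}})_{B_j}=0$, so $f_{A_i}g_{B_j}$ does not occur there. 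Therefore every product $f_{A_i}g_{B_j}$ with $A_i+B_j=s$ is counted exactly once, giving $h_s=\sum_{A_i+B_j=s}f_{A_i}g_{B_j}=(f\conv g)_s$ (using that $f,g$ vanish outside $A,B$); hence $h_{[u]}=(f\conv g)_{[u]}$. The running-time part is routine given Theorem~\ref{thm:cole_hariharan} and the assumed cost of $\mathcal C$; the one step needing genuine care is this correctness bookkeeping — keeping $\mathcal C$ indexed by pairs $(i,j)$ into the sorted orders of $A$ and $B$ and applying uniqueness to exactly those pairs — together with the preprocessing observation that bounds $|A|,|B|$ by $\out$ so that the input size is subsumed by the claimed bound.
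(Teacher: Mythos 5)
Your proof is correct and follows the same approach as the paper: set $A=\supp(f)$, $B=\supp(g)$, compute a unique covering $\mathcal C$, form $h=\sum_{(I,J)\in\mathcal C} f_{A_I}\conv g_{B_J}$ via output-sensitive non-negative convolution (Theorem~\ref{thm:cole_hariharan}), and use uniqueness to argue $h_{[u]}=(f\conv g)_{[u]}$. The only difference is that you spell out the (harmless) preprocessing and the bound $|A|,|B|\le\out$, which the paper leaves implicit.
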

\begin{proof}
  For a vector $f$ we denote by $f_S$ the same vector where every entry outside of $S$ is zeroed out.
  Given non-negative vectors $f,g$, let $A := \supp(f)$ and $B := \supp(g)$. The output-size of Prefix-Restricted convolution on $f,g$ is the number of non-zero entries of $(f \conv g)_{[u]}$. This is the same as the output-size of Prefix-Restricted sumset computation on $A,B$, namely $|(A+B) \cap [u]|$. Thus, the two output-sizes coincide and we denote both by ``$\out$''. In particular, we can afford to compute a unique covering $\mathcal{C}$ of $(A,B,[u])$. 
  
  Using output-sensitive convolution of non-negative vectors (Theorem~\ref{thm:cole_hariharan},~\cite{CH02}), we can compute $f_S \conv g_T$ in time $\tOh(\| f_S \conv g_T \|_0 )$. Therefore, we can compute 
    \[ h := \sum_{(I,J) \in \mathcal{C}} f_{A_I} \conv g_{B_J}, \]
    in time proportional to the cost of the covering, up to log factors. By the properties  of a unique covering, any non-zero product $f_j \cdot g_{i-j}$ appears exactly once in the definition of $h$, for any $0 \le j \le i \le u$.
    We thus have $h_{[u]} = (f \conv g)_{[u]}$.
\end{proof}

Corollary~\ref{cor:top_k_and_related_algos} now follows.

\begin{proof}[Proof of Corollary~\ref{cor:top_k_and_related_algos}]
  To solve Prefix-Restricted sumset computation in time $\tOh(\out^{4/3})$, we combine our covering construction (Theorem~\ref{thm:technical_core}) with Observation~\ref{obs:sumset_from_covering}. To solve Prefix-Restricted convolution on non-negative vectors in time $\tOh(\out^{4/3})$, we combine our covering construction (Theorem~\ref{thm:technical_core}) with Lemma~\ref{lem:conv_from_covering}. To solve Top-$k$-Convolution on non-negative vectors in time $\tOh(k^{4/3})$, we combine the result for Prefix-Restricted convolution with the equivalence shown in Lemma~\ref{lem:equ_topk_prefix}.
\end{proof}


\section{Restricted Sumset Computation: Learning the Output-Size} \label{sec:learn_out}

In this section, we show that we can assume to know the output-size up to a constant factor. 
We will later use this for Prefix-Restricted and for Interval-Restricted sumset computation. Here we will discuss the more general setting of Interval-Restricted sumset computation; the same construction works for the Prefix-Restricted case.

Suppose that we are given sets $A,B$ and integers $\ell,u$ and we want to compute their Interval-Restricted sumset $(A+B) \cap [\ell,u]$. We show that we can assume to know $|(A+B)\cap [\ell,u]|$ up to a constant factor.
In fact, we reduce Interval-Restricted sumset computation to a seemingly much easier variant, where additionally we are given a set $T$ of size $ \Theta(|(A+B)\cap[\ell,u]|)$ which contains $(A+B)\cap[\ell,u]$. 

\begin{definition}[Interval-Restricted Sumset Computation with a Promise (IR-SMP)]
Given sets $A, B \subseteq \mathbb{Z}$, numbers $\ell \le u$, and a set $T$ of size $|T| \leq 6|(A+B)\cap[\ell,u]| + 12$, such that 
\[ (A+B) \cap [\ell,u] \subseteq T, \]
 compute the set
\[	(A+B) \cap [\ell,u].	\]
\end{definition}

We present a reduction from the problem variant without promise to the variant with promise.

\begin{lemma} \label{lem:conv_promise}
  We can reduce a given instance $(A,B,\ell,u)$ of Interval-Restricted sumset computation to $\Oh(\log(u-\ell))$ instances of IR-SMP, each of the form $(A',B',\ell',u',T')$ with $|A'| \le |A|$, $|B'| \le |B|$, $\ell' \le \ell$, $u' \le u$, $u'-\ell' \le u-\ell$, and $|T'| = \Oh(|(A+B)\cap[\ell,u]|)$. The reduction runs in linear time in its output-size.
\end{lemma}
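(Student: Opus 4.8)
The plan is a standard doubling/bucketing trick combined with a random-sampling sparsifier. First I would handle the case where the output is small, say at most a fixed polylog threshold: run standard output-sensitive sumset computation (Theorem~\ref{thm:sparse_convolution}) to get $A+B$, intersect with $[\ell,u]$, and stop --- but of course we cannot afford this in general since $|A+B|$ may be $\Theta(nm)$, so instead I would do the following guessing scheme. For each guess $s = 2^0, 2^1, 2^2, \ldots$ up to $2u$, I want to produce a candidate set $T$ with $(A+B)\cap[\ell,u] \subseteq T$ and $|T| = O(s)$ whenever the true output-size $\out$ satisfies $\out \le s$; the number of guesses is $O(\log u)$, which accounts for the claimed $O(\log(u-\ell))$ bound (after the trivial reduction that restricts attention to the window, one can rescale so only $O(\log(u-\ell))$ scales matter).

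The construction of $T$ for a given guess $s$ is the heart of the lemma. The idea is to subsample the universe: pick a random hash $h : [u] \to [R]$ for $R = \Theta(s)$ (say $h(x) = x \bmod p$ for a random prime $p$, or a random linear map), and for each residue class $r \in [R]$ restrict $A$ and $B$ to the elements landing appropriately so that a sum $a+b$ with fixed target residue can be recovered; then run output-sensitive sumset computation on the \emph{subsampled} instance, whose output is now small with good probability if $\out \le s$. More concretely, I would iterate over $R$ shifted sub-instances where $A$ is replaced by $A \cap (\text{arithmetic-progression-bucket})$ so that each bucket has roughly $n/R$ elements; computing each bucket's contribution by sparse convolution costs $\tOh(\text{bucket output})$, and summing over buckets the total is $\tOh(\out + R) = \tOh(s)$ by a charging argument (each element of the true sumset is produced in $O(1)$ buckets, and each bucket contributes at least one spurious element only $O(R)$ times total). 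Collecting all produced sums into $T$, intersecting with $[\ell,u]$, and keeping only $T$ if $|T| \le 6\out + 9$ (which we can check because we will in the reduced instance re-run to verify), gives a valid IR-SMP instance. If the guess was too small ($s < \out$) the procedure either runs too long --- in which case we abort after $\tOh(s)$ steps --- or produces a $T$ that is too large, so we discard it and move to the next guess; the first guess with $s \ge \out$ succeeds, and since we only need the smallest such $s$, which is within a factor $2$ of $\out$, the overall time and output-size are $\tOh(\out)$ as required, with $|T'| = O(\out)$.

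For the parameter bookkeeping: the trivial preprocessing of removing elements of $A$ above $u$ and of $B$ above $u$ ensures $|A'| \le |A|$, $|B'| \le |B|$; the window is unchanged so $\ell' = \ell \le \ell$, $u' = u \le u$, $u'-\ell' = u-\ell$; and $|T'| = \Theta(\out)$ by the acceptance test $|T'| \le 6\out+9$ together with $(A+B)\cap[\ell,u] \subseteq T'$. The claim that the reduction runs in linear time in its output-size follows because each of the $O(\log(u-\ell))$ produced instances is written down in time proportional to $|A'|+|B'|+|T'|$, and the aborted attempts are charged against the successful one (geometric series in $s$).

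\textbf{Main obstacle.} The delicate point is guaranteeing the charging argument for the subsampled sumset: we need that when $\out \le s$ and $R = \Theta(s)$, summing the sparse-convolution costs over all $R$ buckets is $\tOh(s)$ rather than $\tOh(Rs) = \tOh(s^2)$. This requires that a random bucket partition does not blow up the number of \emph{spurious} sums (sums $a+b > u$ or $< \ell$ that happen to be computed because they lie in the same bucket as a valid sum), and the cleanest way I know to ensure this is to partition $A$ (not the universe) into $R$ contiguous blocks of size $n/R$ and symmetrically for $B$, so that each sum lands in $O(1)$ relevant block-pairs along a diagonal --- essentially the same ``every diagonal contains distinct sums'' argument sketched in the techniques overview for the $\Oh(\out^{3/2})$ covering. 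Making this block argument yield $\tOh(\out)$ rather than $\tOh(\out^{3/2})$ in this \emph{promise-learning} step (where we only want a size-$O(\out)$ superset, not the exact set with a cheap covering) is the subtlety, and I expect the actual proof to instead use a hashing-based sparse set recovery primitive, where the key estimate is that a random hash to $\Theta(\out)$ cells is injective on $(A+B)\cap[\ell,u]$ with constant probability, letting us recover $T$ by $O(\log u)$ rounds of hashing and decoding.
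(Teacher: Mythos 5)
Your proposal is genuinely different from the paper's, and it has a real gap.

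The paper does not use hashing or bucketing at all. Instead it exploits a \emph{dyadic coarsening} of the universe: it sets $S^{(i)} := \big((A \div 2^i) + (B \div 2^i)\big) \cap [\ell \div 2^i, u \div 2^i]$ for $i = r,\ldots,0$ with $r = \lceil \log(u-\ell)\rceil$, observes that $S^{(r)}$ can be read off trivially (the interval has length $\le 2$), and shows that $T^{(i)} := \big(2S^{(i+1)} + \{0,1,2\}\big) \cup \{\text{three boundary values}\}$ is a valid small promise set for computing $S^{(i)}$. The key combinatorial fact is a charging argument: since $(a+b)/2^i - 1 \le (a \div 2^i)+(b \div 2^i) \le (a+b)/2^i$, each element of $(A+B)\cap[\ell,u]$ is charged by at most two elements of $S^{(i)}$, so $|S^{(i)}| \le 2|(A+B)\cap[\ell,u]|+2$, and hence $|T^{(i)}| = O(|S^{(0)}|)$. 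This recursion is self-similar: coarsening shrinks $A$, $B$, and the window simultaneously, so the ``restrict to $[\ell,u]$'' problem stays the same shape at every scale and there is never a moment where the full (possibly dense) sumset $A+B$ has to be materialized or sampled.

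The gap in your approach is exactly the one your ``main obstacle'' paragraph worries about, and it is fatal rather than a detail to be filled in. Hashing $A$ and $B$ into $R = \Theta(\out)$ residue classes and running a length-$R$ convolution gives you the full multiset $A+B$ folded modulo $R$. But $A+B$ can have size $\Theta(nm) \gg R$, so essentially every residue cell is occupied, and you learn nothing about which residues come from sums inside $[\ell,u]$ versus spurious sums above $u$. Sparse-recovery hashing works when the \emph{global} signal is sparse; here only the restriction to $[\ell,u]$ is sparse, and that restriction does not commute with hashing --- this is precisely why prefix- and interval-restricted convolution are not a trivial consequence of sparse convolution (the paper's own example with $|(A+B)\cap[u]| = O(n)$ but $|A+B| = \Omega(n^2)$ shows this). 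Your contiguous-block fallback is well-defined but, as you note, it only yields the $\tOh(\out^{3/2})$ bound of Theorem~\ref{thm:add_interval_algo}, not $\tOh(\out)$, so it cannot establish the lemma either. In short, you anticipated needing a ``sparse set recovery primitive'' that handles dense $A+B$ restricted to a window --- but no such primitive is supplied, and the paper gets around needing one entirely by the scaling trick.
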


\begin{proof}
In what follows, for integers $x,y$ we define $x \div y = \lfloor \frac{x}{y} \rfloor$. For a set $Z$ we write $Z \div x = \{ z \div x \mid z\in Z\}$, and we write $x Z = \{x \cdot z \mid z \in Z\}$.

Let $(A,B,\ell,u)$ be a given instance of Interval-Restricted sumset computation.
We compute better and better approximations of the desired output $(A+B) \cap [\ell,u]$ by computing the sets 
\[ S^{(i)} \;:=\; \big((A \div 2^i) + (B \div 2^i)\big) \cap \big[\ell \div 2^i, u \div 2^i\big], \]
for $i = r, \ldots,0$, where $r := \lceil \log(u-\ell) \rceil$.
See Algorithm~\ref{alg:learn_out} for pseudocode.
Note that $S^{(0)}$ is the desired output $(A+B) \cap [\ell,u]$. 
We compute each set $S^{(i)}$ by calling an IR-SMP oracle on a suitable promise~$T^{(i)}$, determined as follows. 

\begin{algorithm}[!hb]
\caption{Reduction of Interval-Restricted Sumset Computation to its Promise Version}
\label{alg:learn_out}
\begin{algorithmic}[1]
\Procedure{\textsc{ComputeIntervalRestrictedSumset}}{$A,B,\ell,u$} \\
\begin{flushright}
\Comment{$A,B \subseteq [u],\, \ell \le u$}\\
\Comment{We assume oracle access to Interval-Restricted sumset computation with a promise (IR-SMP)}
\end{flushright}
\State $r \leftarrow \left \lceil \log (u-\ell) \right \rceil $		
\State $T^{(r)} \leftarrow \left \{ \ell \div 2^r, \ldots, u \div 2^r \right \}$		
\State $S^{(r)} \leftarrow$ \textsc{IR-SMP}($A \div 2^r, B \div 2^r, \ell \div 2^r, u \div 2^r, T^{(r)}$) \label{line:promise_call} \\\Comment{ Call to the promise problem}
\For{ $i= r-1$ down to $0$ } 			
	\State $T^{(i)} \leftarrow  \big( 2 S^{(i+1)}  + \{0,1,2\} \big)$
	\State $T^{(i)} \leftarrow T^{(i)} \cup \{(\ell \div 2^i), (\ell \div 2^i)+1, (u \div 2^i)\}$  \label{line:expansion} 
	\State $S^{(i)} \leftarrow$ \textsc{IR-SMP}($A \div 2^i, B \div 2^i, \ell \div 2^i, u \div 2^i, T^{(i)}$) \label{line:promise_call} \\\Comment{ Call to the promise problem}
\EndFor
	\State Return $S^{(0)}$
\EndProcedure
\end{algorithmic}
\end{algorithm}

For $i = r$, the interval $[\ell \div 2^i, u \div 2^i]$ has length at most 2. Thus, the set $T^{(r)} := \{\ell \div 2^i, \ldots, u \div 2^i\}$ is a valid promise from which our assumed IR-SMP oracle can compute the set $S^{(r)}$.

For $i < r$, we claim that a valid promise for the instance $(A \div 2^i, B \div 2^i, \ell \div 2^i, u \div 2^i)$ is given~by
\[ T^{(i)} := \big(2 S^{(i+1)} + \{0,1,2\}\big) \cup \{(\ell \div 2^i), (\ell \div 2^i) + 1, (u \div 2^i)\}. \]
If this claim holds, then from $S^{(r)}$ we can compute $S^{(r-1)}, S^{(r-2)},\ldots$, $S^{(0)}$, finishing our reduction.

Let us also claim that any set $S^{(i)}$ has size $|S^{(i)}| = \Oh(|(A+B)\cap[\ell,u]|)$.
Then in particular the set $T^{(i)}$ has size $|T^{(i)}| = \Oh(|S^{(i)}|) = \Oh(|(A+B)\cap[\ell,u]|)$. Hence, all constructed instances satisfy the claimed properties, and we reduced the given instance $(A,B,\ell,u)$ to $\Oh(\log(u-\ell))$ calls to an IS-SMP oracle. This finishes the proof, except that it remains to prove the two claims.

\begin{claim} \label{cla:sizeS}
  The set $S^{(i)}$ has size $|S^{(i)}| \le 2 |(A+B)\cap[\ell,u]| + 3$.
\end{claim}
\begin{proof}
  It is easy to check that for any integers $a,b$ we have
  \begin{align}
    2^i\big( (a \div 2^i) + (b \div 2^i) \big) \;\le\; a+b \;<\; 2^i\big( (a \div 2^i) + (b \div 2^i) \big) + 2\cdot 2^i,  \label{eq:sfhaskf}
  \end{align}
  or, equivalently,
  \begin{align}
    \frac{a+b}{2^i}-2 \;<\; (a \div 2^i) + (b \div 2^i) \;\le\; \frac{a+b}{2^i}. \label{eq:osvavsv}
  \end{align}
  Any number in $s \in S^{(i)}$ can be expressed as $s = (a \div 2^i) + (b \div 2^i)$ with
  \[ (\ell \div 2^i) \le s \le (u \div 2^i). \]
  Suppose that we have $s \in [(\ell \div 2^i)+1, (u \div 2^i)-2]$.
  Then by inequalities (\ref{eq:sfhaskf}), it follows that
  \[ \ell \;\le\; 2^i((\ell \div 2^i)+1) \;\le\; 2^is \;\le\; a + b \;<\; 2^is+2\cdot 2^i \;\le\; 2^i(u \div 2^i) \;\le\; u, \]
  Thus, we obtain that the number $s' := a+b$ lies in $(A+B) \cap [\ell,u]$. 
  We say that \emph{$s$ charges~$s'$}.
  From inequalities~(\ref{eq:osvavsv}), we infer that any number $s' \in (A+B) \cap [\ell,u]$ can only be charged by numbers in $(s'/2^i-2,s'/2^i]$, so it is charged by at most two numbers $s \in S^{(i)}$. 
  This charging scheme proves
  \[ \big|S^{(i)} \cap [(\ell \div 2^i)+1, (u \div 2^i)-2]\big| \;\le\; 2 \cdot \big|(A+B) \cap [\ell,u]\big|. \]
  We add 3 to account for the numbers $\ell \div 2^i$, $u \div 2^i - 1$, and $u \div 2^i$. This yields the claim.
\end{proof}

\begin{claim}
  The set $T^{(i)}$ is a valid promise for the instance $(A \div 2^i, B \div 2^i, \ell \div 2^i, u \div 2^i)$. That is, we have $T^{(i)} \supseteq S^{(i)}$ and $|T^{(i)}| \le 6 |S^{(i)}| + 12$.
\end{claim}
\begin{proof}
  Since $x \div 2^{i+1} = (x \div 2^i) \div 2$, it suffices to prove the claim for $i=0$. The same proof then also works for larger $i$ after replacing $A$ by $A \div 2^i$, $B$ by $B \div 2^i$, $\ell$ by $\ell \div 2^i$, and $u$ by $u \div 2^i$. 
  
  We first show that $|T^{(0)}| \le 6 |S^{(0)}| + 12$. 
  Recall that in Claim~\ref{cla:sizeS} we proved that
  \[ |S^{(1)}| \;\le\; 2\cdot |(A+B) \cap [\ell,u]| + 3 \;=\; 2 |S^{(0)}| + 3. \] 
  We combine this with the inequality $|T^{(0)}| \le 3 |S^{(1)}| + 3$ that we obtain from the construction of~$T^{(0)}$. This yields the claimed inequality $|T^{(0)}| \le 6 |S^{(0)}| + 12$. 
  
  Next we prove $T^{(0)} \supseteq S^{(0)}$.
  Consider any $a \in A,\, b \in B$ with $a+b \in \big[\ell+2,u-1\big]$.
  Using inequalities~(\ref{eq:osvavsv}) we obtain
  \begin{align*}
    \ell \div 2 \;\le\; \frac \ell{2} \;\le\; \frac{a+b}{2} - 1 &\;\le\; (a \div 2)+(b \div 2)  \\
    &\;\le\; \frac{a+b}{2} \;\le\; \frac{u-1}{2} \;\le\; u \div 2. 
   \end{align*}
   Therefore, the sum $(a \div 2)+(b \div 2)$ is in $S^{(1)}$. Now, since $a+b$ can be found among the three numbers
   \[ 2\big((a \div 2) + (b \div 2)\big), \;\; 2\big((a \div 2) + (b \div 2)\big) + 1, \;\; 2\big((a \div 2) + (b \div 2)\big) + 2, \]
  it follows that $a+b \in (2S^{(1)} + \{0,1,2\}) \subseteq T^{(0)}$. 
  Recall that here we assumed $a+b \in [\ell+2,u-1]$. The boundary numbers $\ell, \ell+1$, and $u$ are handled by explicitly adding them to the set $T^{(0)}$ in its construction. We thus obtain $S^{(0)} \subseteq T^{(0)}$.
\end{proof}
These claims finish the proof of Lemma~\ref{lem:conv_promise}.
\end{proof}

We obtain the following easy corollary where we essentially ignore the superset $T$ and only keep its size $|T|$, which is a constant-factor approximation of the output-size. 

\begin{lemma}[Interval-Restricted Sumset Computation with Approximate Output-size] \label{lem:knowout}
  Suppose that given $(A,B,\ell,u)$ and an additional input $\tilde{\out}$ satisfying $\out \le \tilde{\out} \le 6 \out + 12$, we can compute a covering of $(A,B,[\ell,u])$ of cost $O(c)$ in time~$O(T)$, where $c$ and $T$ are monotone functions of $|A|,|B|,u,\out$. Then Interval-Restricted sumset computation can be solved in time $\tOh(c+T)$.
  
  An analogous statement holds for Prefix-Restricted sumset computation.
\end{lemma}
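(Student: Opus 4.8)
The plan is to combine the dyadic promise-reduction of Lemma~\ref{lem:conv_promise} with the hypothesised covering constructor, using the size of the promise set as the required constant-factor estimate of the output-size. First I would record that the hypothesis of the lemma already solves the promise version IR-SMP: on an instance $(A,B,\ell,u,T)$ the promise guarantees $\out \le |T| \le 6\out+9$, so taking $\tilde{\out} := |T|$ meets the precondition, and running the assumed algorithm yields a covering $\mathcal{C}$ of $(A,B,[\ell,u])$ of cost $O(c)$ in time $O(T)$. Feeding $\mathcal{C}$ into the $[\ell,u]$-analogue of Observation~\ref{obs:sumset_from_covering} (its proof is unchanged: compute each $A_I+B_J$ by output-sensitive sumset computation and intersect the union with $[\ell,u]$) then outputs $(A+B)\cap[\ell,u]$ in additional expected time $\tOh(c)$. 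Hence IR-SMP is in expected time $\tOh(c+T)$.

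Next I would apply Lemma~\ref{lem:conv_promise} to reduce the no-promise instance $(A,B,\ell,u)$ to $O(\log(u-\ell))$ instances of IR-SMP, at a cost linear in the reduction's output-size, i.e.\ $\tOh(\sum_i |T^{(i)}|) = \tOh(\out \log(u-\ell)) = \tOh(\out)$; this is $\tOh(c)$ because any covering has cost at least $\out$, so $c = \Omega(\out)$. Each produced instance $(A',B',\ell',u',T')$ satisfies $|A'|\le|A|$, $|B'|\le|B|$, $u'\le u$, and its output-size obeys $\out' \le |T'| = \Oh(\out)$ (this is Claim~\ref{cla:sizeS} together with the promise bound on $|T'|$). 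Solving each of these instances with the IR-SMP routine above and summing over the $O(\log(u-\ell)) = \tOh(1)$ of them, plus the $\tOh(c)$ reduction cost, gives total expected time $\tOh(c+T)$, as claimed. The Prefix-Restricted statement follows by the identical argument with $\ell=0$, using the Prefix-Restricted form of Lemma~\ref{lem:conv_promise} noted at the start of this section.

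The one point needing care is the interaction between the phrase ``$c$ and $T$ are monotone functions of $|A|,|B|,u,\out$'' and the constant-factor blow-up $\out' = \Oh(\out)$ in the reduced instances: monotonicity lets us bound $c(|A'|,|B'|,u',\out')$ by $c(|A|,|B|,u,C\out)$ for an absolute constant $C$, and only for the polynomially bounded $c$ and $T$ that actually arise in this paper is $c(C\out) = \Oh(c(\out))$ automatic. I expect this bookkeeping, rather than any genuine difficulty, to be the main thing to get right; the remaining verifications --- that $|T|$ lands in the window $[\out,6\out+9]$, that the $[\ell,u]$-generalisation of Observation~\ref{obs:sumset_from_covering} is valid, and that a logarithmic number of calls is swallowed by $\tOh(\cdot)$ --- are routine.
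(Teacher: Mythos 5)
Your proof is correct and follows essentially the same route as the paper's: apply Lemma~\ref{lem:conv_promise} to reduce to $O(\log(u-\ell))$ promise instances, use $\tilde{\out}' := |T'|$ as the constant-factor estimate, compute a covering for each via the assumed algorithm, and feed it into Observation~\ref{obs:sumset_from_covering}. Your closing remarks about the reduction's own cost being absorbed by $c = \Omega(\out)$, and about monotonicity only cleanly handling the constant-factor blow-up $\out' = O(\out)$ when $c,T$ are polynomially bounded (as they are everywhere in the paper), are bookkeeping details the paper leaves implicit, but they do not change the argument.
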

\begin{proof}
  Given an instance $(A,B,\ell,u)$ of Interval-Restricted sumset computation, we run Lemma~\ref{lem:conv_promise} to reduce to $\Oh(\log(u-\ell))$ promise instances of the form $(A',B',\ell',u',T')$. By the properties of $T'$, for $\tilde{\out}' := |T'|$ we have $\out' \le \tilde{\out'} \le 6\out' +12$, where $\out' := |(A'+B') \cap [\ell',u']|$. Hence, we can use our assumed algorithm to compute a covering of $(A',B',[\ell',u'])$ of cost $O(c)$ in time~$O(T)$. By Observation~\ref{obs:sumset_from_covering}, we can thus solve the instance $(A',B',[\ell',u'])$ in time $\tOh(c+T)$. Over $\Oh(\log(u-\ell))$ many constructed instances, we obtain the same time bound up to log-factors.
\end{proof}


\section{Interval-Restricted Sumset Computation} \label{sec:interval_restricted}

In this subsection we prove Theorems~\ref{thm:add_interval_algo} and \ref{thm:interval_hardness}. 

\subsection[An $\tilde{O}(\sqrt{mn \cdot \out})$-Time Algorithm]{An \boldmath$\tilde{O}(\sqrt{mn \cdot \out})$-time Algorithm}

In this subsection we prove Theorem~\ref{thm:add_interval_algo}. 

\begin{proof}
Given $A,B,\ell,u$, we will compute the Interval-Restricted sumset $(A+B) \cap [\ell,u]$ in time $\tilde{O}(n+m+\sqrt{mn \cdot \out})$. More precisely, we will compute a covering of $(A,B,[\ell,u])$ of cost $\Oh(\sqrt{mn \cdot \out})$ in time $\Oh(n+m)$, assuming that we know an approximation $\tilde{\out}$ of the output-size satisfying $\out \le \tilde{\out} \le O(\out)$ (we can assume this by Lemma~\ref{lem:knowout}). 

Set $q := \big\lceil \big(nm/\tilde{\out}\big)^{1/2} \big\rceil$. Note that since $n,m \le \out \le n\cdot m$ we have $1 \le q = \Oh(\min\{n,m\})$. 
We assume that $n$ and $m$ are divisible by $q$, which we can ensure by duplicating at most $q$ elements in $A$ and $B$ (with the slight abuse of transitioning to multi-sets). 

\medskip
We split $A$ and $B$ into $q$ subsets by defining for any $1 \le i,j \le q$:
\begin{align*}
  I_i &:= \{ (i-1) \cdot n/q + 1,\ldots, i \cdot n/q\}, \\
  J_j &:= \{ (j-1) \cdot m/q + 1,\ldots, j \cdot m/q\}, \\
  A^{(i)} &:= A_{I_i}, \\
  B^{(j)} &:= B_{J_j}.
\end{align*}
We let $\mathcal{C}$ be the set of all pairs $(I_i,J_j)$ with 
\[ \min(A^{(i)}) + \min(B^{(j)}) \le u \quad\text{and}\quad \max(A^{(i)}) + \max(B^{(j)}) \ge \ell. \]
Observe that $\mathcal{C}$ is a unique-rectangle-covering of $(A,B,[\ell,u])$. 
We can easily compute $\mathcal{C}$ by brute force, by iterating over all $1 \le i,j \le q$ and testing whether to put $(I_i,J_j)$ into $\mathcal{C}$ in time $\Oh(1)$, see Algorithm~\ref{alg:find_covering}. This takes total time $\Oh(q^2) = \Oh(nm/\out) = \Oh(\min\{n,m\})$, assuming that we have random access to $A$ and $B$.

\begin{algorithm}[!hb] 
\caption{Covering for Interval-Restricted Sumset Computation}
\label{alg:find_covering}
\begin{algorithmic}[1]
\Procedure{\textsc{FindCovering}}{$A,B,\ell,u,\tilde{\out}$} 
\State $q \leftarrow \big\lceil \big(nm/\tilde{\out}\big)^{1/2} \big\rceil$
\State $\mathcal{C} \leftarrow  \emptyset$
\For { $i=1$ to $q$}
	\For {$j=1$ to $q$}
	\State $I \leftarrow \{(i-1) \cdot n/q +1 ,\ldots,i \cdot n/q\}$
	\State $J \leftarrow \{(j-1) \cdot m/q + 1,\ldots, j \cdot m/q\}$
	\If {$\min(A_{I_i}) + \min(B_{J_j}) \le u$}
	 \If{$\max(A_{I_i}) + \max(B_{J_j}) \ge \ell$} 
		\State $\mathcal{C} \leftarrow \mathcal{C} \cup \{(I,J)\}$
	\EndIf
	\EndIf
	\EndFor
\EndFor

\State Return $\mathcal{C}$
\EndProcedure
\end{algorithmic}
\end{algorithm}

Recall that the cost of a covering $\mathcal{C}$ is
\[ \sum_{(I,J) \in \mathcal{C}} |A_{I}+B_{J}|. \]
We split $\mathcal{C}$ into two parts, the rectangles in the interior and the rectangles at the boundary:
\begin{align*}
  \mathcal{C}_{\textup{int}} &:= \{(I,J) \in \mathcal{C} \mid A_I+B_J \subseteq [\ell,u] \}, \\
  \mathcal{C}_{\textup{bd}} &:= \mathcal{C} \setminus \mathcal{C}_{\textup{int}}.
\end{align*}

For the interior rectangles, we split their cost into diagonal sums of the form
\[ \sum_{(I_i,J_{i+\Delta}) \in \mathcal{C}_{\textup{int}}} |A^{(i)}+B^{(i+\Delta)}|, \]
for $-q < \Delta < q$.
We claim that each such diagonal sum is bounded from above by $\out$. Indeed, for consecutive terms along a diagonal we have
\[ \max(A^{(i)}) + \max(B^{(i+\Delta)}) < \min(A^{(i+1)}) + \min(B^{(i+\Delta+1)}). \]
Therefore, the output-sizes $|A^{(i)}+B^{(i+\Delta)}|$ are disjoint contributions to $\out$, for fixed $\Delta$ and ranging over all $i$. It follows that a diagonal sum is bounded by $\out$, and since there are $2q-1$ diagonals, we obtain
\[ \sum_{(I,J) \in \mathcal{C}_{\textup{int}}} |A_{I}+B_{J}| \le (2q-1)\out = \Oh(\sqrt{nm \cdot \out}).\]

We argue geometrically about the boundary. Observe that $\mathcal{C}_{\textup{bd}}$ contains at most two rectangles per diagonal, which yields $|\mathcal{C}_{\textup{bd}}| \le 4q$. For each $(I,J) \in \mathcal{C}_{\textup{bd}}$ we use the trivial upper bound $|A_I+B_J| \le |A_I| \cdot |B_J| = n m/q^2$. In total, this yields cost
\[ \sum_{(I,J) \in \mathcal{C}_{\textup{bd}}} |A_{I}+B_{J}| \le 4q \cdot \frac{nm}{q^2} = \Oh(\sqrt{n m \cdot \out}). \]

The contribution from both parts is the same, so in total we bounded the cost of $\mathcal{C}$ by $\Oh(\sqrt{n m \cdot \out})$. This finishes the proof.
\end{proof}

\subsection{Hardness Results}

In this subsection we prove Theorem \ref{thm:interval_hardness}.
\begin{proof}
We want to prove hardness of Interval-Restricted sumset computation.
Note that here we analyze running time in terms of $n=|A|,\, m=|B|$, and $\out$, which are all invariant under shifting $A$ by adding a number $q$ to each $a \in A$; similarly for $B$.
Therefore, we may drop the assumption that $A,B$ are sets of positive integers and let them be subsets of $\mathbb{Z}$ instead. This is the case because we can shift $A,B$ and the interval appropriately, so that every number in the input becomes non-negative.

\paragraph{Reduction from Boolean Matrix Multiplicaion to Interval-Restricted Sumset Computation:}
In Boolean matrix multiplication we are given $n \times n$ matrices $\overline A, \overline B$ with entries in $\{0,1\}$ and want to compute their product $C$ with $C_{ij} = \bigvee_r \overline A_{ir} \wedge \overline B_{rj}$.

Given matrics $\overline A, \overline B$, we construct sets $A, B$ as
\begin{align*}
A &:= \{  r M^2 + \overline A_{ir} \cdot M + i \mid  i ,r \in [n] \}, \\
B &:= \{  -r M^2 + \overline B_{rj} \cdot M + n j \mid r,j \in [n] \},
\end{align*}
where $M$ is any integer greater than $10(n^2 + n)$. We also set 
\[ \ell := 2M + n + 1,\quad u := 2M + n^2 + n. \]

We observe the following.
\begin{enumerate}
\item Every integer of the form $( \overline A_{ir} + \overline B_{rj} )\cdot M  + (i + n j )$ with $i,r,j \in [n]$ is contained in $A+B$.
\item For $ r \neq r'$ any sum $( r M^2 + \overline A_{ir} \cdot M + i ) + ( -r' M^2 + \overline B_{r'j} \cdot M + nj )$ is either less than  $-M^2 + 2M + n^2 + n < 0$ or at least $ M^2 + n + 1$, and hence outside of $[\ell,u]$.
\item If $\overline A_{ir} \wedge \overline B_{rj}=1$, then  $( \overline A_{ir} + \overline B_{rj} ) M  + (i + n  j ) = 2M + (i+nj)$.
\item If $\overline A_{ir} \wedge \overline B_{rj}=0$, then $( \overline A_{ir} + \overline B_{rj} ) M  + (i + n j ) \le M + (i+nj) < 2M$.
\end{enumerate}

It follows that from $(A+B) \cap [2M + n+1 , 2M + n^2 + n]$ we can infer all entries of the product matrix $C$.

Note that the output-size is $\out \le u-\ell+1 = n^2$. Hence, for any $\delta >0$, an $\Oh((|A|+|B|+\out)^{\omega/2 - \delta})$-time algorithm for Interval-Restricted sumset computation would yield an $O(n^{2\omega - 2\delta})$-time algorithm for Boolean matrix multiplication.
%

\paragraph{Reduction from Sliding Window Hamming Distance to Interval-Restricted Convolution:}
Note that Interval-Restricted convolution allows us to not only to compute $(A+B)\cap[\ell,u]$, but also the number of ways an element $x \in (A+B)\cap [\ell,u]$ can be written as $x = a +b$ with $a\in A,\, b \in B$; let us call this number the multiplicity of $x$. 
Indeed, for the indicator vectors of $A$ and $B$, the $x$-th entry of their convolution is the multiplicity of $x$ in $A+B$.

In the sliding window Hamming distance problem we are given a text $t$ of length $2n$ and a pattern $p$ of length $n$ and want to compute the Hamming distance between the pattern and every length-$n$ substring of the text. Given such an instance $t,p$, we construct sets $A,B$ as
\[	A := \{ M \cdot t_i + i \mid 1 \le i \le 2n\}, \quad B := \{- M \cdot p_j - j \mid 1 \le j \le n\},	\]
where $M := 100n$.
We also set $\ell:= 1,\, u := n$ and compute, as mentioned in the previous paragraph, the Interval-Restricted sumset $(A+B) \cap [\ell,u]$ as well as the multiplicity of every $x$ in this set. 

Fix a $1 \le i \le n$ and observe that
\begin{enumerate}
\item If $t_{i+j} = p_j$ then the the pair $(i+j,j)$ will contribute $1$ to the multiplicity of $i$.
\item If $t_{i+j} \neq p_j$ then the pair $(i+j,j)$ will contribute 1 to the multiplicity of a coefficient outside of the interval $[1,n]$, by the choice of $M$.
\item Every pair $(i',j)$ with $i' \le j$ contributes 1 to a coefficient outside of the interval $[1,n]$.
\end{enumerate}
It follows that we can read off the Hamming distance between the pattern $p$ and the $i$-th length-$n$ substring of $t$ from the multiplicity of $i$ in the output. This completes the reduction from sliding window Hamming distance to prefix-restricted convolution.
Since $\out \le n$, any $\tOh(|A|+|B|+(|A|\cdot |B| \cdot\out)^{1/2-\delta})$-time algorithm for Interval-Restricted comvolution would solve sliding window Hamming distance in time $\tOh(n^{3/2-3\delta})$.
\end{proof}


\section{Relaxed Version of Prefix-Restricted Convolution} \label{sec:prefix_zeta}

We show how to solve Prefix-Restricted convolution on any instance $(A,B,u)$ in time $\tilde{O}(\zeta^{-1} |(A+B)\cap[u(1+\zeta u)]| + \zeta^{-2})$, for any $\zeta\leq 1$, proving Theorem \ref{thm:add_relaxed_upper_bound}. More precisely, we prove the following theorem, from which we conclude Theorem~\ref{thm:add_relaxed_upper_bound} using Observation~\ref{obs:sumset_from_covering}.

\begin{theorem} \label{thm:zeta_cost}
Given sets $A,B$ and a target $u$ we can compute a unique-rectangle-covering of $(A,B,[u])$ with cost 
\[ O(\zeta^{-1}|(A+B)\cap [(1+\zeta) u]|)\] 
in time $O(|A| + |B| + \zeta^{-2})$.
\end{theorem}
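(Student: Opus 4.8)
The plan is to use the simplest conceivable rectangle covering. Assume $\zeta \le 1$ (otherwise run the construction with $\zeta = 1$), and set $k := \lceil 2/\zeta \rceil$, so that $2/k \le \zeta$ and $k = O(\zeta^{-1})$. After discarding in $O(|A|+|B|)$ time every element of $A,B$ exceeding $u$ (such an element cannot occur in a sum of at most $u$, so removing it does not affect the covering property), partition $[0,u]$ into $k$ consecutive blocks $P_1,\dots,P_k$ each of diameter at most $u/k$, and group indices by block: $I_a := \{i : A_i \in P_a\}$ and $J_b := \{j : B_j \in P_b\}$. Since $A$ and $B$ are sorted, each $I_a$ and each $J_b$ is a contiguous range of indices, and a single linear scan determines all of them, together with the values $\min(A_{I_a})$ and $\min(B_{J_b})$, in time $O(|A|+|B|)$. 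Finally let $\mathcal{C}$ be the family of all pairs $(I_a,J_b)$ with both blocks nonempty and $\min(A_{I_a}) + \min(B_{J_b}) \le u$; iterating over all $k^2 = O(\zeta^{-2})$ pairs $(a,b)$ with an $O(1)$-time test each computes $\mathcal{C}$ in time $O(|A|+|B|+\zeta^{-2})$, matching the claimed running time.

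The first thing I would check is that $\mathcal{C}$ is a unique rectangle covering: the $I_a$ partition $\{1,\dots,|A|\}$ into intervals and the $J_b$ partition $\{1,\dots,|B|\}$ into intervals (which gives both the rectangle property and uniqueness), and if $A_i + B_j \le u$ then $A_i$ lies in the unique block $P_a$ and $B_j$ in the unique block $P_b$ containing them, so $\min(A_{I_a}) + \min(B_{J_b}) \le A_i + B_j \le u$, hence $(I_a,J_b) \in \mathcal{C}$ and $(i,j) \in I_a \times J_b$. This establishes both defining properties of a (unique, rectangle) covering.

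Next I would bound the cost. Write $\cost(\mathcal{C}) = \sum_{(I,J)\in\mathcal{C}} |A_I+B_J| = \sum_s \mathrm{mult}(s)$, where $\mathrm{mult}(s)$ is the number of pairs $(I,J)\in\mathcal{C}$ with $s \in A_I+B_J$. Two observations finish the job. (i) Every covered sum lies in $[(1+\zeta)u]$: if $s = A_i+B_j$ with $A_i \in P_a$, $B_j \in P_b$ and $(I_a,J_b)\in\mathcal{C}$, then $A_i - \min(A_{I_a}) < u/k$ and $B_j - \min(B_{J_b}) < u/k$, so $s < \min(A_{I_a})+\min(B_{J_b}) + 2u/k \le u(1+2/k) \le (1+\zeta)u$. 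Thus only sums of $(A+B)\cap[(1+\zeta)u]$ contribute to the cost. (ii) $\mathrm{mult}(s) \le 2k-1$ for every $s$: group the rectangles of $\mathcal{C}$ into the $2k-1$ diagonals indexed by $\Delta = a-b \in \{-(k-1),\dots,k-1\}$; along a fixed diagonal the sumsets satisfy $A_{I_a}+B_{J_{a-\Delta}} \subseteq P_a + P_{a-\Delta}$, and as $a$ varies these containing intervals are consecutive pairwise-disjoint intervals of diameter $2u/k$, so $s$ can lie in the sumset of at most one rectangle per diagonal. Combining (i) and (ii), $\cost(\mathcal{C}) \le (2k-1)\,|(A+B)\cap[(1+\zeta)u]| = O(\zeta^{-1}|(A+B)\cap[(1+\zeta)u]|)$, as claimed; plugging the covering into Observation~\ref{obs:sumset_from_covering} (and into Lemma~\ref{lem:conv_from_covering} for the convolution version, using that the covering is unique) yields the corresponding statements of Theorem~\ref{thm:add_relaxed_upper_bound}.

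I do not expect a genuine obstacle here — the construction is elementary and the two cost observations are short charging arguments. The only place requiring care is the handling of boundary cases (blocks of diameter exactly $u/k$, elements equal to $u$, empty blocks, and non-integer block boundaries when $u/k \notin \mathbb{Z}$, which is resolved by rounding the $P_a$ appropriately), and making the diagonal-disjointness claim in step (ii) fully precise, since a careless treatment there is the one spot that could leak an extra multiplicative factor. Everything else is bookkeeping.
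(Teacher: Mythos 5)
Your proposal is correct and takes essentially the same approach as the paper: partition $[u]$ into $\Theta(\zeta^{-1})$ consecutive blocks, take all rectangle pairs with $\min+\min\le u$, and bound the cost by decomposing into $O(\zeta^{-1})$ diagonals, each of which charges disjointly into $(A+B)\cap[(1+\zeta)u]$. The only cosmetic difference is that the paper normalizes $\zeta$ and $u$ to powers of two to avoid the rounding bookkeeping you flag at the end, while you handle it directly via $k=\lceil 2/\zeta\rceil$.
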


\begin{proof}
Find the smallest $\ell$ such that $2^{-\ell} \leq \zeta$. It suffices to solve the problem for $\zeta = 2^{-\ell}$, since $(A+B) \cap [u + 2^{-\ell} u] \subseteq (A+B) \cap [u+ \zeta u ]  $, and $2^\ell \leq 2 \zeta^{-1}$. Thus, we can assume that $1/\zeta$ is a power of 2. Moreover, by shifting $u$ as well as $A$ by a suitable number we can assume that $u$ is a power of $2$.
We split the set $[u]$ into the intervals
\[ U_r := \left[ (r-1) \cdot (\zeta u)/2 +1, r \cdot (\zeta u)/2 \right]. \]
Moreover, we define the following sets for $1 \le i,j \le 2/\zeta$:
\begin{align*}
A^{(i)} &:= A \cap U_i,  \\
B^{(j)} &:= B \cap U_j,  \\
I_i &:= \{ i' \in [n] : A_{i'} \in U_i \}  \\
J_j &:= \{ j' \in [m]: B_{j'} \in U_j \}.
\end{align*}


Let $\mathcal{C}$ be the set of all pairs $(I_i,J_j)$ for which $(A^{(i)}+B^{(j)}) \cap [u]$ is not empty; this condition can be easily checked by testing whether
\[\mathrm{min}(A^{(i)}) + \mathrm{min}(B^{(j)}) \leq u.\] 
Note that $\mathcal{C}$ can be computed in time $O(\zeta^{-2})$, assuming that we have random access to $A$ and $B$.
%
%
Observe that $\mathcal{C}$ is indeed a unique-rectangle-covering of $(A+B)\cap[u]$. 
Recall that the cost of $\mathcal{C}$ is 
	\[ \sum_{(I,J) \in \mathcal{C}} |A_I + B_J|.		\]
Similarly to the argument in the proof of Theorem \ref{thm:add_interval_algo}, this sum can be decomposed into $4/\zeta$ diagonal sums of the form 
\[ \sum_{i} |A^{(i)}+B^{(i+\Delta)}|,	\]
where the sum is over all $i$ such that $(I_i,J_{i+\Delta}) \in \mathcal{C}$.
We again use 
\[ \max(A^{(i)})+\max(B^{(j)}) < \min(A^{(i+1)})+\min(B^{(j+1)}). \] 
Moreover, for any $(i,j) \in P$ we now have
\begin{align*}
  \max(A^{(i)}) + \max(B^{(j)})
  &\le \Big( \min(A^{(i)}) + \frac{\zeta u}2 \Big) + \Big( \min(B^{(j)}) + \frac{\zeta u}2 \Big) \\
  &= \Big( \min(A^{(i)}) + \min(B^{(j)}) \Big) + \zeta u \\
  &\le u + \zeta u.
\end{align*}
It follows that every diagonal sum contributes at most \[\left|(A+B) \cap [u + \zeta u] \right|.\]
Summing over all diagonals, in total we can bound the cost of $\mathcal{C}$ by $\Oh( \zeta^{-1} \left|(A+B) \cap [u + \zeta u] \right| )$.
\end{proof}

\section{Construction of the $\mathbf{\tilde{O}(\out^{4/3})}$-cost Covering} \label{sec:technical_core}

This section is devoted to proving the technical core of our Prefix-Restricted sumset algorithm, specifically we prove Theorem~\ref{thm:technical_core}.

\subsection{An Additive Combinatorics Ingredient: Ruzsa's Triangle Inequality}

The following is a classical result from Additive Combinatorics. We present a self-contained proof.


\begin{lemma}[Ruzsa's Triangle Inequality, see also {\cite[Theorem 2]{bukh}}] \label{lem:actual_ruzsa}
For any $A,B,C \subseteq \mathbb{Z}$ we have
	\[	|A-B| \leq \frac{|A-C| \cdot |C-B|}{|C|}.	\]
\end{lemma}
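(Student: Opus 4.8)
The plan is to prove Ruzsa's triangle inequality by constructing an injection from $C \times (A-B)$ into $(A-C) \times (C-B)$, which immediately gives $|C| \cdot |A-B| \le |A-C| \cdot |C-B|$ and hence the claimed bound after dividing by $|C|$. This is the standard ``Ruzsa covering/embedding'' argument, and it is purely combinatorial with no reliance on anything earlier in the paper.

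First I would fix, for every element $d \in A-B$, a witnessing representation $d = a(d) - b(d)$ with $a(d) \in A$ and $b(d) \in B$; such a choice exists by definition of the difference set, and I would fix one arbitrarily (say by a fixed linear order on $\mathbb{Z}$). Then I would define the map $\phi : C \times (A-B) \to (A-C) \times (C-B)$ by
\[
  \phi(c, d) := \big(\, a(d) - c,\ \ c - b(d)\,\big).
\]
The codomain is correct since $a(d) \in A$, $c \in C$ gives $a(d) - c \in A - C$, and similarly $c - b(d) \in C - B$.

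The key step is injectivity of $\phi$. Suppose $\phi(c,d) = \phi(c',d')$, i.e.\ $a(d) - c = a(d') - c'$ and $c - b(d) = c' - b(d')$. Adding the two equations yields $a(d) - b(d) = a(d') - b(d')$, i.e.\ $d = d'$. Then from $a(d) - c = a(d') - c' = a(d) - c'$ we get $c = c'$. Hence $(c,d) = (c',d')$, so $\phi$ is injective. This gives $|C|\cdot|A-B| = |C \times (A-B)| \le |(A-C)\times(C-B)| = |A-C|\cdot|C-B|$, and dividing by $|C|$ (which is nonzero, or the statement is vacuous) finishes the proof.

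There is essentially no obstacle here beyond being careful that the witnessing representations are fixed once and for all before defining $\phi$, so that $a(\cdot)$ and $b(\cdot)$ are well-defined functions; the ``adding the two defining equations'' trick is exactly what recovers $d$, and the rest is bookkeeping. The only degenerate case to note is $|C| = 0$, in which case the right-hand side is undefined but the inequality is trivially true (or one simply assumes $C \ne \emptyset$, as is implicit in the intended applications where $C = A_{I'} $ or similar nonempty index sets).
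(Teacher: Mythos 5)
Your proof is correct and is essentially the same injection-based argument as the paper's: both fix, for each $d \in A-B$, a canonical witness $(a(d),b(d))$ and send $(c,d) \mapsto (a(d)-c,\, c-b(d))$, recovering $d$ by summing the two coordinates and then $c$ from either one. The only cosmetic differences are the order of the product in the domain and your explicit remark about the degenerate $C=\emptyset$ case.
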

\begin{proof}
  We associate every $s \in A-B$ with the lexicographically smallest pair $(a,b) \in A \times B$ such that $s = a-b$, and we denote this pair by $(a(s),b(s))$. 
  Consider the mapping 
  \begin{align*}
    (A-B) \times C &\quad\to\quad (A-C) \times (C-B) \\
    (s,c) &\quad\mapsto\quad (a(s) - c, c - b(s))
  \end{align*}    
  We claim that this mapping is injective. Indeed, from an image $(x,y) = (a(s) - c, c - b(s))$ we can infer $s = a(s) - b(s) = x + y$. The value $s$ then determines $a(s)$ and $b(s)$, so we can infer $c = y + b(s)$. We thus recovered the corresponding preimage $(s,c)$.
  
  Since this mapping is injective, we obtain $|A-B| \cdot |C| \le |A-C| \cdot |C-B|$.
\end{proof}

We will use the following simple corollary.

\begin{lemma}[Corollary of Ruzsa's Triangle Inequality] \label{lem:ruzsa}
For any $X,Y,Z,W \subseteq \mathbb{Z}$ we have
	\[	|X+Y| \leq \frac{|X+Z| \cdot |Z+W| \cdot |W+Y|}{|Z| \cdot |W|}.	\]
\end{lemma}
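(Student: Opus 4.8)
The plan is to derive Lemma~\ref{lem:ruzsa} from Lemma~\ref{lem:actual_ruzsa} by applying Ruzsa's triangle inequality twice. Note first that Lemma~\ref{lem:actual_ruzsa} is stated for difference sets, while Lemma~\ref{lem:ruzsa} is stated for sumsets; since $|A+B| = |A-(-B)|$ and $|-B|=|B|$, the two forms are interchangeable by negating appropriate sets, so I may freely use the sumset version of the triangle inequality: for any $P,Q,R \subseteq \mathbb{Z}$ we have $|P+Q| \le \frac{|P+R|\cdot|R+Q|}{|R|}$ (apply Lemma~\ref{lem:actual_ruzsa} to $A = P$, $B = -Q$, $C = -R$, and observe $|P-(-R)| = |P+R|$, $|{-R}-(-Q)| = |R+Q|$, $|{-R}| = |R|$).

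The main step is then a two-fold application. First apply the sumset triangle inequality with $P = X$, $Q = Y$, and intermediate set $R = Z$ to obtain
\[
|X+Y| \;\le\; \frac{|X+Z|\cdot|Z+Y|}{|Z|}.
\]
Now bound the factor $|Z+Y|$ by applying the sumset triangle inequality again, this time with $P = Z$, $Q = Y$, and intermediate set $R = W$:
\[
|Z+Y| \;\le\; \frac{|Z+W|\cdot|W+Y|}{|W|}.
\]
Substituting the second bound into the first yields
\[
|X+Y| \;\le\; \frac{|X+Z|\cdot|Z+W|\cdot|W+Y|}{|Z|\cdot|W|},
\]
which is exactly the claimed inequality.

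I expect no real obstacle here; the only point requiring a moment's care is the bookkeeping in translating between the difference-set statement of Lemma~\ref{lem:actual_ruzsa} and the sumset statement used in Lemma~\ref{lem:ruzsa}, and making sure the intermediate set is placed consistently so that the denominators come out as $|Z|\cdot|W|$ rather than some other combination. One should also implicitly assume the sets $Z$ and $W$ are nonempty so that the division is well-defined (which is harmless, as the inequality is used only in that regime, e.g. when $Z = A_{I'}$ and $W = B_{J'}$ for nonempty index intervals). A clean way to present the proof is simply to state the sumset form of the triangle inequality as an immediate consequence of Lemma~\ref{lem:actual_ruzsa} and then chain the two applications as above.
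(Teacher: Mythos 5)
Your reduction of Lemma~\ref{lem:ruzsa} to Lemma~\ref{lem:actual_ruzsa} contains a sign error that leaves a genuine gap. In the preliminary step you apply Lemma~\ref{lem:actual_ruzsa} with $A = P$, $B = -Q$, $C = -R$ and claim that $|C - B| = |(-R) - (-Q)| = |R+Q|$. However, $(-R) - (-Q) = Q - R$, so the middle factor you actually obtain is the difference-set cardinality $|Q - R|$, not the sumset cardinality $|R+Q|$, and these are not equal in general: for $Q = R = \{0,1,3\}$ one has $|Q - R| = 7$ but $|Q + R| = 6$. Consequently the ``sumset triangle inequality'' $|P+Q| \le |P+R|\,|R+Q| / |R|$ does \emph{not} follow from Lemma~\ref{lem:actual_ruzsa} by the substitution you give. (That inequality happens to be a true statement, but proving it requires substantially heavier tools of Pl\"unnecke--Ruzsa type; it cannot be treated as an immediate corollary of the simple injection underlying Lemma~\ref{lem:actual_ruzsa}.)

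The fix is small and lands you exactly on the paper's proof: do not force both applications into a sum-only form. The substitution $A = X$, $B = -Y$, $C = -Z$ correctly yields
\[
|X+Y| \;\le\; \frac{|X+Z|\cdot|Z-Y|}{|Z|},
\]
with a \emph{difference} in the middle. Then bound that middle term by a second direct application of Lemma~\ref{lem:actual_ruzsa}, this time with $A = Z$, $B = Y$, $C = -W$, which gives
\[
|Z-Y| \;\le\; \frac{|Z+W|\cdot|W+Y|}{|W|}.
\]
Chaining the two gives the claimed bound with denominator $|Z|\cdot|W|$. The key point is to accept the mixed sum/difference form after the first application rather than ``cleaning it up'' into a pure sumset statement that Lemma~\ref{lem:actual_ruzsa} does not, in fact, yield.
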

\begin{proof}
  First use Ruzsa's triangle inequality on $A = X,\, B = -Y,\, C = -Z$ to obtain
  \[ |X+Y| \le \frac{| X+Z | \cdot | Z-Y |}{|Z|}. \]
  Then use Ruzsa's triangle inequality on $A = Z,\, B = Y,\, C = -W$ to obtain
  \[ |Z-Y| \le \frac{| Z+W | \cdot | W+Y |}{|W|}. \]
  Plugging the latter into the former proves the claim.
\end{proof}


\subsection{Description of the Algorithm}

Given $A,B \subseteq [u]$, we write $n = |A|,\, m = |B|$, and $\out = |(A+B) \cap [u]|$.
We describe an algorithm that 
computes a unique rectangle covering $\mathcal{C}$ of $(A,B,[u])$ of cost $\tOh(\out^{4/3})$ 
in time $\tOh(\out^{4/3})$. However, as a subroutine we will use output-sensitive sumset computation (Theorem~\ref{thm:sparse_convolution}), which shows that it is hard to completely separate the tasks of computing a covering and computing the Prefix-Restricted sumset itself.

Invoking Lemma \ref{lem:knowout}, it suffices to solve the promise problem where we are given a value $\tilde{\out}$ guaranteed to satisfy $\out \le \tilde{\out} \le \Oh(\out)$. 
We can assume that $\tilde{\out}$ is larger than some absolute constant, since otherwise we have $|A|,|B| \le \out \le \tilde{\out} = \Oh(1)$, so we can compute a trivial covering of cardinality 1 and cost $|A| \cdot |B| = \Oh(1)$. 


We maintain families $\mathcal{C}, \mathcal{D}$, initialized to $\mathcal{C} = \emptyset$ and $\mathcal{D} = \{([n],[m])\}$, with the invariant that $\mathcal{C} \cup \mathcal{D}$ is a unique rectangle covering of $(A,B,[u])$. 
We refer to the rectangles in $\mathcal{D}$ as the \emph{unprocessed subproblems}, or simply \emph{subproblems}. 
The algorithm is finished when there are no more unprocessed subproblems, i.e., $\mathcal{D} = \emptyset$, and then we return the unique rectangle covering $\mathcal{C}$ as output. 

We associate to every subproblem $(I,J) \in \mathcal{D}$ the \emph{type} $(x,y)$ for $x = \left\lceil \log |I| \right \rceil$ and $y = \left\lceil \log |J| \right \rceil $.
Initially, there is exactly one subproblem $([n],[m])$ of type $(\left \lceil \log n \right \rceil,\left\lceil \log m \right\rceil)$.

We define a total order on types: $(x,y) \prec (x',y')$ iff $x+y < x'+y'$, or $x+y=x'+y'$ and $x<x'$. 
Our algorithm processes types in descending order according to this total order. For any type $(x,y)$, we process all subproblems of type $(x,y)$ in one batch. 
Upon processing a subproblem, our algorithm may generate further subproblems of strictly smaller type. 
As we will see below, all subproblems of the same type that we generate are disjoint, i.e., for any subproblems $(I,J),(I',J')$ of the same type we have $I \cap I' = \emptyset$ and $J \cap J' = \emptyset$.

Since we want to process all subproblems of a particular type $(x,y)$ in one batch, we need to store the family $\mathcal{D}$ in such a way that we can efficiently enumerate all subproblems of a type $(x,y)$.
To this end, we store $\mathcal{D}$ in a standard data structure such as a self-adjusting binary search tree, where subproblems are first compared according to their type and then according to the endpoints of $I$ and $J$. Note that we can store any subproblem $(I,J)$ using $\Oh(1)$ integers, since $I$ and $J$ are intervals.

We set the parameter $q := \lceil \tilde{\out}^{1/3} \rceil$. 

To finish the description of the algorithm, it remains to describe how we process all subproblems of a particular type $(x,y)$ in one batch. We consider two cases.

\smallskip
\emph{Case 1:} $2^{x+y} \leq \tilde{\out}$. Then for each subproblem $(I,J)$ of type $(x,y)$ we move $(I,J)$ from $\mathcal{D}$ to~$\mathcal{C}$.

\smallskip
\emph{Case 2:} $2^{x+y} > \tilde{\out}$. If $\mathcal{D}$ contains more than $q$ subproblems of type $(x,y)$, then we do the following. We start computing $A_I+B_J$ for each of these subproblems in parallel (using Theorem~\ref{thm:sparse_convolution}), and we stop once all but $q$ of these calls have finished. For each finished call $A_I+B_J$, we move $(I,J)$ from $\mathcal{D}$ to $\mathcal{C}$.

At this point, we have at most $q$ subproblems of type $(x,y)$ left. We split each such subproblem $(I,J) = ([i_1,i_2],[j_1,j_2])$ as follows. We set $i := \lfloor (i_1+i_2)/2 \rfloor$ and determine the maximum index $j \in J$ with $A_i+B_j \le u$. This splits $I$ into $I_1 = [i_1,i]$ and $I_2 = [i+1,i_2]$ and $J$ into $J_1 = [j_1,j]$ and $J_2 = [j+1,j_2]$.
We add $(I_1,J_1)$ to the output $\mathcal{C}$. 
Moreover, we add the subproblems $(I_1,J_2)$ and $(I_2,J_1)$ to $\mathcal{D}$ and we remove $(I,J)$ from $\mathcal{D}$.
Since $u < A_{i}+B_{j+1} \le A_{i+1}+B_{j+1}$, we can ignore the subproblem $(I_2,J_2)$.

\smallskip
This finishes the description of our algorithm, for pseudocode see Algorithm~\ref{alg:covering_construction}.

\begin{algorithm}[!t]\caption{}\label{alg:covering_construction}
\begin{algorithmic}[1]
\Procedure{\textsc{CoveringConstruction}}{$A,B,u,\tilde{\out}$}
\State $n \leftarrow |A|$, $m \leftarrow |B|$, $q \leftarrow \tilde{\out}^{1/3}$
\State Initialize $\mathcal{C} \leftarrow \emptyset$, $\mathcal{D} \leftarrow \{([n],[m])\}$
\While {$\mathcal{D} \ne \emptyset$}
  \State Let $(x,y)$ be the largest type such that $\mathcal{D}$ contains subproblems of type $(x,y)$
  \State \emph{// process all subproblems of type $(x,y)$ in one batch:}
  \If {$2^{x+y} \le \tilde{\out}$}
    \State \textbf{for each} $(I,J) \in \mathcal{D}$ of type $(x,y)$: Move $(I,J)$ from $\mathcal{D}$ to $\mathcal{C}$
  \Else
    \If{ $\mathcal{D}$ contains more than $q$ subproblems of type $(x,y)$}
      \State Compute $A_I+B_J$ in parallel for all $(I,J) \in \mathcal{D}$ of type $(x,y)$
      \State Stop once all but $q$ of these calls finished
      \State \textbf{for each} finished call $A_I+B_J$: Move $(I,J)$ from $\mathcal{D}$ to $\mathcal{C}$
    \EndIf
    \For{\textbf{each} remaining $(I,J) \in \mathcal{D}$ of type $(x,y)$}
      \State Split $I = [i_1,i_2]$ at $i = \lfloor (i_1+i_2)/2 \rfloor$ into $I_1$ and $I_2$
      \State Determine the maximum $j \in J$ with $A_i+B_j \le u$
      \State Split $J$ at $j$ into $J_1$ and $J_2$
      \State Add $(I_1,J_1)$ to $\mathcal{C}$
      \State Add $(I_1,J_2), (I_2,J_1)$ to $\mathcal{D}$
      \State Remove $(I,J)$ from $\mathcal{D}$
    \EndFor
  \EndIf
\EndWhile
\State \Return $\mathcal{C}$
\EndProcedure
\end{algorithmic}
\end{algorithm}

\subsection{Analysis}

The correctness of the algorithm, meaning that the output is a unique rectangle covering, follows from the next claim and the fact that the algorithm stops when $\mathcal{D} = \emptyset$.

\begin{claim}[Invariant that $\mathcal{C} \cup \mathcal{D}$ is a covering]
\label{lem:inv_covering}
  At any point during the algorithm, the family $\mathcal{C} \cup \mathcal{D}$ is a unique rectangle covering of $(A,B,[u])$. 
\end{claim}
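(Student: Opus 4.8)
The plan is to proceed by induction on the steps of the algorithm, showing that the invariant ``$\mathcal{C} \cup \mathcal{D}$ is a unique rectangle covering of $(A,B,[u])$'' is preserved by every modification that the algorithm makes to the pair of families $(\mathcal{C},\mathcal{D})$. First I would check the base case: initially $\mathcal{C} = \emptyset$ and $\mathcal{D} = \{([n],[m])\}$, so $\mathcal{C} \cup \mathcal{D} = \{([n],[m])\}$; the single rectangle $[n] \times [m]$ covers every pair $(i,j)$ exactly once, and $[n],[m]$ are trivially intervals, so this is a unique rectangle covering. Next I would observe that the algorithm modifies $(\mathcal{C},\mathcal{D})$ in only two kinds of atomic operations, and that it suffices to verify that each preserves the invariant. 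The two operations are: (i) \emph{moving} a subproblem $(I,J)$ from $\mathcal{D}$ to $\mathcal{C}$ (this happens in Case~1, and in Case~2 for the finished sumset calls); and (ii) \emph{splitting} a subproblem $(I,J) = ([i_1,i_2],[j_1,j_2])$ in Case~2 — removing $(I,J)$ from $\mathcal{D}$, adding $(I_1,J_1)$ to $\mathcal{C}$, and adding $(I_1,J_2)$ and $(I_2,J_1)$ to $\mathcal{D}$.

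Operation (i) is immediate: moving a rectangle between $\mathcal{C}$ and $\mathcal{D}$ does not change the multiset $\mathcal{C} \cup \mathcal{D}$ at all, so the covering property, uniqueness, and the rectangle property are all trivially preserved. The substance is in operation (ii). Here I would argue as follows. Let $i = \lfloor (i_1+i_2)/2 \rfloor$ and let $j$ be the maximum index in $J$ with $A_i + B_j \le u$ (if no such index exists, then by monotonicity $A_{i_1}+B_{j_1} > u$, contradicting that $(I,J)$ is in a valid covering containing a pair with sum $\le u$ — actually we should handle this edge case explicitly, taking $j = j_1 - 1$ so $J_1 = \emptyset$, $J_2 = J$, which is still fine). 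With $I_1 = [i_1,i]$, $I_2 = [i+1,i_2]$, $J_1 = [j_1,j]$, $J_2 = [j+1,j_2]$, the four rectangles $I_1 \times J_1$, $I_1 \times J_2$, $I_2 \times J_1$, $I_2 \times J_2$ partition $I \times J$ exactly. I would then show that the discarded rectangle $I_2 \times J_2$ contains no pair $(i',j')$ with $A_{i'}+B_{j'} \le u$: indeed for $(i',j') \in I_2 \times J_2$ we have $i' \ge i+1$ and $j' \ge j+1$, hence $A_{i'} + B_{j'} \ge A_{i+1} + B_{j+1} \ge A_i + B_{j+1} > u$, where the last inequality is the defining maximality of $j$. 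Therefore every pair $(i',j') \in I \times J$ with $A_{i'}+B_{j'} \le u$ lies in exactly one of $I_1 \times J_1$, $I_1 \times J_2$, $I_2 \times J_1$, which are exactly the three rectangles we retain (one in $\mathcal{C}$, two in $\mathcal{D}$). Combined with the inductive hypothesis that, before the split, $(I,J)$ was the unique rectangle of $\mathcal{C} \cup \mathcal{D}$ containing each such pair, this shows that after the split each relevant pair is still contained in exactly one rectangle of $\mathcal{C} \cup \mathcal{D}$. Since $I_1,I_2$ are subintervals of the interval $I$ and $J_1,J_2$ are subintervals of the interval $J$, the new rectangles are again rectangles, so the rectangle property is preserved; and pairs with $A_{i'}+B_{j'} > u$ impose no covering requirement. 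This establishes that operation (ii) preserves the invariant.

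Finally, since every modification the algorithm performs is a composition of operations (i) and (ii), the invariant holds at every point during the execution, proving the claim. I do not expect any genuine obstacle here; the only point requiring a little care is the bookkeeping of the degenerate cases in the split (when $I_1$, $I_2$, $J_1$, or $J_2$ is empty, e.g. $|I| = 1$ so the split of $I$ produces an empty $I_2$, or no valid $j$ exists), where one should note that an empty interval yields an empty rectangle which may simply be dropped without affecting the covering, and that the algorithm's type-based termination guarantees these degenerate branches do not loop forever — but that termination argument belongs to a separate running-time claim, not to this correctness invariant.
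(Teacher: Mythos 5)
Your proposal is correct and follows essentially the same approach as the paper: the move operation trivially preserves the family $\mathcal{C}\cup\mathcal{D}$, and for the split the key observation is that maximality of $j$ gives $A_i + B_{j+1} > u$, hence $A_{i'}+B_{j'} \ge A_{i+1}+B_{j+1} > u$ for all $(i',j')\in I_2\times J_2$, so dropping $(I_2,J_2)$ is harmless while the retained three rectangles partition the remainder of $I\times J$. Your added discussion of the base case and degenerate (empty-interval) splits is reasonable extra bookkeeping that the paper leaves implicit.
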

\begin{proof}
  Moving $(I,J)$ from $\mathcal{D}$ to $\mathcal{C}$ does not change this property. Therefore, the only crucial step is the splitting of $I$ into $I_1,I_2$ and of $J$ into $J_1,J_2$, where we add $(I_1,J_1)$ to $\mathcal{C}$ and add $(I_1,J_2), (I_2,J_1)$ to~$\mathcal{D}$. Observe that at this point we have $u < A_{i}+B_{j+1} \le A_{i+1}+B_{j+1}$, and thus the rectangle $(I_2,J_2)$ does not contain any sum below $u$, so it is unnecessary for a covering. It follows that $\mathcal{C} \cup \mathcal{D}$ remains a covering.
  Moreover, noting that $I_1,I_2,J_1,J_2$ are again intervals, it remains rectangular. Finally, since no pair $(i,j)$ is contained in two subproblems $I_b \times J_{b'}$, it remains unique. 
\end{proof}

We will need the following claims to analyze the running time and the cost of the covering.

\begin{claim}[Subproblems of type $(x,y)$ form a staircase] 
\label{lem:staircase}
For any distinct subproblems $(I,J), (I',J')$ of the same type $(x,y)$, we have $\max(I') < \min(I)$ and $\max(J) < \min(J')$ (or vice versa).
Moreover, if this holds then $A_{I'} + B_J \subseteq [u]$.
\end{claim}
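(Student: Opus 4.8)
The plan is to prove both parts by tracking the structure of the subproblems that the algorithm creates, keeping in mind that subproblems are only ever generated in Case~2. First I would argue the staircase property. Consider the moment a subproblem $(I,J)$ of type $(x,y)$ is created. It arises either as $(I_1,J_2)$ or as $(I_2,J_1)$ from splitting some parent rectangle $(\hat I,\hat J)$ (of strictly larger type) at the pivot indices $i,j$ chosen by the algorithm: $I_1=[\hat i_1,i]$, $I_2=[i+1,\hat i_2]$, $J_1=[\hat j_1,j]$, $J_2=[j+1,\hat j_2]$, with $j$ the maximal index in $\hat J$ satisfying $A_i+B_j\le u$. The key geometric feature is that the two children kept, $(I_1,J_2)$ and $(I_2,J_1)$, are ``anti-diagonal'': the one with the larger $I$-interval has the smaller $J$-interval. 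I would formalise this as a loop invariant: \emph{at any point in the algorithm, for any two distinct subproblems $(I,J),(I',J')$ currently in $\mathcal D$, either ($\max(I')<\min(I)$ and $\max(J)<\min(J')$) or the symmetric statement holds} — i.e. $\mathcal D$ is an anti-chain whose rectangles, sorted by left endpoint of $I$, have strictly decreasing $J$-intervals (a ``staircase''). This invariant holds initially ($\mathcal D$ is a singleton), is preserved by moving a rectangle from $\mathcal D$ to $\mathcal C$, and is preserved by a split: the two new children interleave correctly with each other by construction, and since they are contained in the cell $\hat I\times\hat J$ which the parent occupied in the old staircase, they interleave correctly with every other surviving subproblem too. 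Restricting this invariant to subproblems of a fixed type $(x,y)$ gives exactly the first assertion; note all subproblems of a common type are processed in one batch, so they genuinely coexist in $\mathcal D$ and the invariant applies.

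Next I would prove the second part: if $\max(I')<\min(I)$ and $\max(J)<\min(J')$, then $A_{I'}+B_J\subseteq[u]$. Let $i^\ast=\max(I')$ and $j^\ast=\max(J)$; it suffices to show $A_{i^\ast}+B_{j^\ast}\le u$, since for any $i\in I'$ and $j\in J$ we have $A_i\le A_{i^\ast}$ and $B_j\le B_{j^\ast}$ by sortedness, hence $A_i+B_j\le A_{i^\ast}+B_{j^\ast}\le u$. The point is that the pair $(i^\ast,j^\ast)$ sits at the ``top-right corner'' of $I'\times J$, which is below the staircase boundary, and the boundary is exactly the locus determined by the condition $A_i+B_j\le u$. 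Concretely, I would trace back to the common ancestor splitting step that separated $(I',J')$ from $(I,J)$: there was a subproblem $(\hat I,\hat J)$ that was split at some pivot $(i_0,j_0)$ with $j_0$ the maximal index of $\hat J$ with $A_{i_0}+B_{j_0}\le u$, such that $I'\subseteq[\hat i_1,i_0]$ and $J\subseteq[j_0+1,\hat j_2]$ (or this happened across several levels, but the inequality only tightens). Wait — I need to be careful about the direction: $J$ lies \emph{above} $J'$ in index, so $J$ is the $J_1$-side and $I'$ could be either side. Let me instead argue directly from the staircase: since $(I',J')$ and $(I,J)$ coexist with $\min(I)>\max(I')$ and $\min(J')>\max(J)$, there is an earliest moment where some ancestor containing the index set of $I'$ on its $I_2$-side (or $I_1$-side) was separated; at the split that first put $\max(I')$ and $\min(J)$ on opposite sides of a pivot $(i_0,j_0)$, by maximality of $j_0$ we get $A_{i_0}+B_{j_0}\le u$ with $i_0\ge \max(I')=i^\ast$ and $j_0\ge j^\ast$... this needs $A_{i^\ast}+B_{j^\ast}\le A_{i_0}+B_{j_0}$, which fails if $i_0>i^\ast$. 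So the cleaner route is: among all surviving ``$(I_1,J_1)$'' rectangles that the algorithm commits to $\mathcal C$, these exactly tile the staircase boundary, and $(i^\ast,j^\ast)$ is dominated by one of them.

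The cleanest way, and the one I would actually write: maintain the \textbf{stronger invariant} that $\mathcal C\cup\mathcal D$ is a covering (Claim~\ref{lem:inv_covering}, already available) \emph{together with} the property that for every $(I,J)\in\mathcal D$, writing $i^+=\max(I)$ and $j^+=\max(J)$, we have $A_{i^+}+B_{j^+}\le u$ — call it the \emph{feasibility invariant}. It holds initially because of the preprocessing $\max(A)+\max(B)$... no, that's false in general. Hmm. The correct invariant is weaker: for every $(I,J)\in\mathcal D$ other than the unique topmost one, $A_{\max(I)}+B_{\min(J)}\le u$, or rather that each subproblem sits weakly below the boundary at its bottom-right. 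I expect the honest statement to be: when $(I_2,J_1)$ is created, we have $A_i + B_j \le u$ where $i=\min(I_2)-1$... and more usefully $A_{\max(I_2)}+B_{\min(J_1)}$ need not be $\le u$, but $A_{\min(I_2)}+B_{\max(J_1)}=A_{i+1}+B_j$ which we do \emph{not} control. So the right thing to record is: in any split, $A_{\max(I_1)}+B_{\max(J_1)}=A_i+B_j\le u$ by the choice of $j$. Thus \textbf{every rectangle ever added to $\mathcal C$ via an $(I_1,J_1)$ is entirely $\le u$}, i.e. contained in the good region. Now given our two staircase-neighbours $(I,J)$ (upper-left, large $J$) and $(I',J')$ (lower-right, large $I$) — I want $A_{I'}+B_J\subseteq[u]$. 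The corner $(i^\ast,j^\ast)=(\max(I'),\max(J))$ was, at the split that first separated the $I$-block from the $I'$-block, one of the pivot pairs $(i_0,j_0)$ with $i_0=\max(I')$ exactly (because splitting halves $I$ at $\lfloor(i_1+i_2)/2\rfloor$, so $\max(I_1)$ is a genuine pivot row) and $j_0=\max(J)$ exactly (because $j$ is the last index with $A_{i_0}+B_{j_0}\le u$, and $J$ became $J_1=[j_1,j_0]$). Hence $A_{i^\ast}+B_{j^\ast}=A_{i_0}+B_{j_0}\le u$. The main obstacle, and where I'd spend the care, is exactly this bookkeeping: verifying that when the staircase-relation between $(I,J)$ and $(I',J')$ was established, it was established by a single split whose pivot row is $\max(I')$ and whose pivot column is $\max(J)$ — in particular that later splits of $(I',J')$'s ancestors only shrink $I'$ downward from $\max(I')$ and never touch row $\max(I')$ again, and symmetrically for $J$. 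This follows because the midpoint split of an interval $[a,b]$ always keeps $b$ in the right child $I_2$ and the relevant ancestor of $I'$ is always such a right child along the chain from the separating split — but I would need to state this chain-of-ancestors argument explicitly. Once that is in place, sortedness of $A$ and $B$ upgrades $A_{i^\ast}+B_{j^\ast}\le u$ to $A_{I'}+B_J\subseteq[u]$ as noted, completing the proof.
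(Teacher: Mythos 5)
Your part-1 argument matches the paper's: maintain the staircase property as a loop invariant over $\mathcal{D}$, check it holds initially and is preserved under moves to $\mathcal{C}$ and under splits. That part is fine.

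For part 2 you hit a genuine gap, which you in fact flag yourself. You reduce (correctly, by sortedness of $A$ and $B$) to showing $A_{i^*}+B_{j^*}\le u$ for $i^*=\max(I')$ and $j^*=\max(J)$, and you trace back to the split that first separated the ancestors of $(I,J)$ and $(I',J')$, with pivot $(i_0,j_0)$. You then assert $i^*=i_0$ and $j^*=j_0$ exactly, via the claim that every subsequent ancestor of $I'$ along the chain is the right ($I_2$-)child of its parent. That claim is false: at the separating split the ancestor of $(I',J')$ is the $(I_1,J_2)$ piece with $I'\subseteq I_1=[\hat i_1,i_0]$, and subsequent splits of $(I_1,J_2)$ can perfectly well route the chain of ancestors of $I'$ through a \emph{left} child, whose maximum is strictly below $i_0$. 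So in general $\max(I')<i_0$, and the exact-pivot identity you lean on does not hold. The fix is simpler than what you were reaching for, and you already have the ingredients in hand: you do not need equality, only $i^*\le i_0$ and $j^*\le j_0$, which are immediate from $I'\subseteq I_1$ and $J\subseteq J_1$; combined with your observation that $A_{i_0}+B_{j_0}\le u$ by the choice of pivot, monotonicity gives $A_{i^*}+B_{j^*}\le A_{i_0}+B_{j_0}\le u$. Equivalently (and this is what the paper does, leaving the ``global-from-local'' step implicit), fold part 2 into the same loop invariant: whenever $(I,J)$ is split, the newly created pair $(I_1,J_2),(I_2,J_1)$ satisfies $A_{I_1}+B_{J_1}\subseteq[u]$ by the choice of $j$, and for every pre-existing staircase neighbour the relevant containment $A_{I'}+B_J\subseteq[u]$ is inherited because both $I$- and $J$-intervals only shrink across a split.
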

\begin{proof}
This is true in the beginning since $|\mathcal{D}|=1$. Moving $(I,J)$ from $\mathcal{D}$ to $\mathcal{C}$ cannot violate this property.
Therefore, the only crucial step is the splitting of $I$ into $I_1,I_2$ and of $J$ into $J_1,J_2$, where we remove $(I,J)$ from $\mathcal{D}$ and add $(I_1,J_2), (I_2,J_1)$ to~$\mathcal{D}$. In this situation, clearly $(I_1,J_2), (I_2,J_1)$ form a staircase, and $A_{I_1} + B_{J_1} \subseteq [u]$. Thus, the property is maintained ``locally''. It is not hard to see that the property is also maintained ``globally'', when we have distinct subproblems $(I,J), (I',J')$ that satisfy the property and we split both of them into subproblems.
\end{proof}

\begin{claim}[Any subproblem creates at most two subproblems of strictly smaller type]
Processing a subproblem $(I,J) \in \mathcal{D}$ can cause the insertion of at most two new subproblems into $\mathcal{D}$, both of strictly smaller type.
\end{claim}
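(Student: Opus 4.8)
The plan is to reason by cases according to how the algorithm processes the subproblem $(I,J)$, and then, in the single case that actually inserts anything into $\mathcal{D}$, to control the sizes of the created intervals and hence their types. Write $(x,y)$ for the type of $(I,J)$, so $x=\lceil\log_2|I|\rceil$ and $y=\lceil\log_2|J|\rceil$. In Case~1 (when $2^{x+y}\le\tilde{\out}$) the subproblem $(I,J)$ is merely moved from $\mathcal{D}$ to $\mathcal{C}$, and in Case~2 the same is true for every subproblem whose call computing $A_I+B_J$ finishes before the parallel computation is stopped. Hence the only situation that inserts subproblems into $\mathcal{D}$ is when $(I,J)$ is one of the at most $q$ remaining subproblems of type $(x,y)$ in Case~2 and gets split: there $(I_1,J_1)$ is added to $\mathcal{C}$, the rectangle $(I_2,J_2)$ is discarded (it lies entirely above $u$), and only $(I_1,J_2)$ and $(I_2,J_1)$ are added to $\mathcal{D}$. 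This immediately yields the ``at most two'' part.

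For the type bound, write $I=[i_1,i_2]$ and $i=\lfloor(i_1+i_2)/2\rfloor$. A one-line computation gives $|I_1|=i-i_1+1=\lfloor(|I|-1)/2\rfloor+1=\lceil|I|/2\rceil$ and $|I_2|=|I|-|I_1|=\lfloor|I|/2\rfloor\le|I_1|$, while $|J_1|,|J_2|\le|J|$ since $J_1,J_2$ partition $J$. The elementary fact I would use is that $\lceil\log_2\lceil L/2\rceil\rceil=\lceil\log_2 L\rceil-1$ for every integer $L\ge2$: if $2^{k-1}<L\le2^k$ then $2^{k-2}<\lceil L/2\rceil\le2^{k-1}$. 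Assuming for the moment $|I|\ge2$, this gives $\lceil\log_2|I_2|\rceil\le\lceil\log_2|I_1|\rceil\le x-1$, whereas the $J$-coordinate of each of $(I_1,J_2)$ and $(I_2,J_1)$ is at most $y$. Thus both newly created subproblems have a type $(x',y')$ with $x'\le x-1$ and $y'\le y$, so $x'+y'<x+y$ and therefore $(x',y')\prec(x,y)$ by the definition of the order on types.

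The remaining point, which I expect to be the main obstacle, is the degenerate split where the midpoint split of $I$ fails to shrink it, i.e.\ $|I|=1$ (so $x=0$): then $I_1=I$ and $I_2=\emptyset$, and the argument above breaks on the $I$-side. Here the empty subproblem $(\emptyset,J_1)$ is dropped, so at most $(I_1,J_2)=(I,J_2)$ is created, and one must rule out that it has the same type $(0,y)$. I would finish by observing that, since $I=\{i\}$ and $j$ is by construction the largest index of $J$ with $A_i+B_j\le u$, every sum in $A_I+B_{J_2}$ is at least $A_i+B_{j+1}>u$ whenever $J_2\ne\emptyset$; hence $(I,J_2)$ contributes nothing to a covering of $(A,B,[u])$ and need not be retained in $\mathcal{D}$ at all, so no new subproblem is created in this case. (The symmetric degeneracy $|J|=1$ is harmless whenever $|I|\ge2$ by the previous paragraph, and $|I|=|J|=1$ falls into Case~1.) Making this last step airtight requires stating and maintaining, alongside the covering and staircase invariants, the extra invariant that every subproblem kept in $\mathcal{D}$ meets the interval $[u]$ — equivalently $\min(A_I)+\min(B_J)\le u$ — and this bookkeeping is the most delicate part of the argument.
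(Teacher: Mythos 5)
Your proof is correct and follows the same approach as the paper's: only the splitting phase inserts new subproblems, at most $(I_1,J_2)$ and $(I_2,J_1)$, and the midpoint split forces $|I_r|\le\lceil |I|/2\rceil$, giving $x_r\le x-1$ and $y_r\le y$, hence a strictly smaller type. The extra care you take with the degenerate case $|I|=1$ is in fact addressing a gap the paper leaves implicit: the paper's footnote invokes the inequality $\lceil\log\lceil z/2\rceil\rceil\le\lceil\log z\rceil-1$, which fails for $z=1$, and the pseudocode as written would blindly insert $(I,J_2)$ even though its type $(0,y_1)$ need not be smaller than $(0,y)$. Your resolution — that when $I=\{i\}$ and $j$ is maximal in $J$ with $A_i+B_j\le u$, every sum in $A_I+B_{J_2}$ exceeds $u$, so $(I,J_2)$ is vacuous and should simply be discarded — is the right fix, though it is worth stressing (as you do) that this amounts to a small amendment to Algorithm~\ref{alg:covering_construction}, namely refusing to insert empty or covering-irrelevant rectangles into $\mathcal{D}$; the paper's argument tacitly assumes such a convention without stating it.
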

\begin{proof}
The only point at which we add new subproblems to $\mathcal{D}$ is the splitting phase. So let $(I,J)$ be a subproblem of type $(x,y)$ and consider the splitting of $I$ into $I_1,I_2$ and of $J$ into $J_1,J_2$, where we remove $(I,J)$ from $\mathcal{D}$ and add $(I_1,J_2), (I_2,J_1)$ to~$\mathcal{D}$. Since we split $I$ at the midpoint, we have $|I_r| \le \lceil |I|/2 \rceil$, for any $r \in \{1,2\}$. We clearly also have $|J_r| \le |J|$. Hence, the new type $(x_r,y_r) = (\lceil \log |I_r| \rceil, \lceil \log |J_r| \rceil)$ satisfies\footnote{To be precise, here we use that for any integers $z,w$ the inequalities $2^{w-1} < z \le 2^w$ imply $2^{w-2} < \lceil z/2 \rceil \le 2^{w-1}$, and thus $\lceil \log (\lceil z/2 \rceil) \rceil \le \lceil \log z \rceil - 1$.} $x_r < x$ and $y_r \le y$. As this implies $x_r+y_r < x+y$, the newly added subproblems have a strictly smaller type.
\end{proof}

\begin{claim}[Invariant on the number of subproblems] \label{lem:number_of_subproblems}
At any point during the algorithm, there are at most $2 q \log(n) \log(m)$ subproblems of any fixed type $(x,y)$.
\end{claim}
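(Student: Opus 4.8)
The plan is to prove the slightly stronger statement that at \emph{every} point in the execution the \emph{total} number of subproblems stored in $\mathcal{D}$ is $O(q\log n\log m)$; this of course bounds the number of subproblems of any single type. The proof is a straightforward amortized count over the iterations of the while loop, and the only place real care is needed is in observing that no type is ever processed twice.

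First I would establish that the while loop processes types in strictly decreasing order of $\prec$, and in particular processes each type at most once. By the preceding claim, every subproblem inserted into $\mathcal{D}$ while processing a subproblem of type $(x,y)$ has type strictly smaller than $(x,y)$. Hence once all subproblems of type $(x,y)$ have been handled in one batch, $\mathcal{D}$ contains no subproblem of type $(x,y)$ and will never again contain one. Since each iteration picks the largest type still present in $\mathcal{D}$, the sequence of processed types is strictly decreasing, so the number of iterations is at most the number of types, namely $(\lceil\log n\rceil+1)(\lceil\log m\rceil+1)=O(\log n\log m)$. (We may assume $n,m\ge 2$; otherwise, say $n=1$, the instance $(A+B)\cap[u]=(A_1+B)\cap[u]$ is solved directly in linear time.)

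Next I would bound the net change of $|\mathcal{D}|$ during the processing of a single type. In Case~1 every subproblem of the current type is merely moved from $\mathcal{D}$ to $\mathcal{C}$, so $|\mathcal{D}|$ does not increase. In Case~2, if more than $q$ subproblems of the current type are present we first move all but $q$ of them to $\mathcal{C}$; thus at most $q$ remain, and precisely these get split. Each split deletes one element $(I,J)$ from $\mathcal{D}$ and inserts the two new subproblems $(I_1,J_2)$ and $(I_2,J_1)$, while $(I_1,J_1)$ is placed in $\mathcal{C}$ and $(I_2,J_2)$ is discarded since every sum in $A_{I_2}+B_{J_2}$ exceeds $u$. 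So $|\mathcal{D}|$ grows by at most one per split, hence by at most $q$ over the whole batch, and all other operations only move rectangles into $\mathcal{C}$.

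Combining these facts: $|\mathcal{D}|$ equals $1$ initially, never increases except while processing a type, and then by at most $q$; since at most $O(\log n\log m)$ types are processed, at every point $|\mathcal{D}|\le 1+q\,(\lceil\log n\rceil+1)(\lceil\log m\rceil+1)\le 2q\log n\log m$, which proves the claim. The main (mild) obstacle is the very first step — ruling out that a type is processed twice — and this is exactly where we use the earlier invariant that splitting strictly decreases the type together with the rule of always taking the largest remaining type; the rest is bookkeeping. I note that a more refined count, tracking which parent types can spawn a type-$(x,y)$ subproblem, would even remove one logarithmic factor, but the cruder estimate above already suffices.
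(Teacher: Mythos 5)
Your proof is correct and follows essentially the same approach as the paper: count the splits (at most $q$ per type, at most $\log n \log m$ types) and observe that each type is processed only once because the algorithm handles types in decreasing order and splits only generate strictly smaller types. The one small difference is a matter of bookkeeping: the paper counts the total number of subproblems ever created (each split produces at most two new ones), whereas you track the \emph{net} change of $|\mathcal{D}|$ (each split removes one and inserts two, so net $+1$), which gives a marginally tighter constant. Both arguments share the same mild sloppiness around the exact constant for very small $n,m$, but that does not affect the $\tOh(\out^{4/3})$ bound.
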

\begin{proof}
The claim is immediate for $(x,y)=( \lceil \log n \rceil, \lceil \log m \rceil )$. Fix a type $(x,y)$ and assume that the claim is true for every type $(x',y')$ with $(x,y) \prec (x',y')$. 
Note that for any type $(x',y')$ at most $q$ subproblems of type $(x',y')$ reached the splitting phase, and each such subproblem gave rise to at most two newly added subproblems. Since the number of different types is at most $\log(n) \log (m)$, we obtain the claimed bound on the number of subproblems of type $(x,y)$.
\end{proof}


The following claim lies at the core of the analysis of our covering construction, as it bounds the running time and added cost of Case $2$.

\begin{claim} \label{claim:charging}
Fix a type $(x,y)$ with $2^{x+y} > \tilde{\out}$. Lines 11-13 of Algorithm~\ref{alg:covering_construction} take time $\tOh( \out^{4/3})$ and add rectangles of total cost $\tOh( \out^{4/3})$ to $\mathcal{C}$.
\end{claim}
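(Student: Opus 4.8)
The plan is to reduce the claim to a single estimate on the sizes $|A_I+B_J|$ of the subproblems handled in this batch, and then to establish that estimate via Ruzsa's triangle inequality together with the staircase structure of Claim~\ref{lem:staircase}.

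First I would set up notation. Fix the type $(x,y)$, and let $(I^{(1)},J^{(1)}),\ldots,(I^{(N)},J^{(N)})$ be the subproblems of type $(x,y)$ present in $\mathcal{D}$ when the batch starts, ordered along the staircase so that $\max I^{(s)}<\min I^{(s+1)}$ and $\max J^{(s+1)}<\min J^{(s)}$; by Claim~\ref{lem:staircase} this also gives $A_{I^{(r)}}+B_{J^{(p)}}\subseteq[u]$ whenever $r<p$. By Claim~\ref{lem:number_of_subproblems}, $N\le 2q\log n\log m$, and since $q=\lceil\tilde{\out}^{1/3}\rceil=\tOh(\out^{1/3})$ we get $N=\tOh(\out^{1/3})$. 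In lines~11--13 we run the $N$ sparse-sumset computations of Theorem~\ref{thm:sparse_convolution} in parallel and stop as soon as $N-q$ of them have terminated, moving exactly those to $\mathcal{C}$. Since the $s$-th computation runs in time $\tOh(|A_{I^{(s)}}+B_{J^{(s)}}|)$, the calls finish in nondecreasing order of sumset size, so when we stop every call has been running for time $\tOh(\theta)$, where $\theta$ is the $(q+1)$-st largest of the values $|A_{I^{(1)}}+B_{J^{(1)}}|,\ldots,|A_{I^{(N)}}+B_{J^{(N)}}|$. Hence lines~11--13 take time $\tOh(N\theta)$ and add rectangles of cost at most $\sum_{s:\,|A_{I^{(s)}}+B_{J^{(s)}}|\le\theta}|A_{I^{(s)}}+B_{J^{(s)}}|\le N\theta$. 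Both quantities are $\tOh(\out^{4/3})$ as soon as
\[ \theta=\tOh(\out), \]
i.e.\ as soon as at most $q$ of the $N$ subproblems satisfy $|A_{I^{(s)}}+B_{J^{(s)}}|>\tilde{C}\,\out$ for a suitable $\tilde{C}=\tOh(1)$. So the whole claim reduces to this bound.

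Next I would apply Ruzsa's triangle inequality (Lemma~\ref{lem:ruzsa}). For indices $r<s<p$, using it with $X=A_{I^{(s)}}$, $Y=B_{J^{(s)}}$, $Z=B_{J^{(p)}}$, $W=A_{I^{(r)}}$ gives
\[ |A_{I^{(s)}}+B_{J^{(s)}}|\;\le\;\frac{|A_{I^{(s)}}+B_{J^{(p)}}|\cdot|A_{I^{(r)}}+B_{J^{(p)}}|\cdot|A_{I^{(r)}}+B_{J^{(s)}}|}{|J^{(p)}|\cdot|I^{(r)}|}. \]
By the staircase property each of the three numerator sumsets lies inside $[u]$ and hence has size at most $\out$, while $|J^{(p)}|>2^{y-1}$ and $|I^{(r)}|>2^{x-1}$ by the definition of type. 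This yields $|A_{I^{(s)}}+B_{J^{(s)}}|\le 4\out^3/2^{x+y}$ for every $s$ with $2\le s\le N-1$ (the two extreme indices are bounded trivially by $|I^{(s)}||J^{(s)}|\le 2^{x+y}$ and, since $q\ge 2$, can simply be counted among the $q$ exceptions). When $2^{x+y}\gtrsim\out^2$ this worst-case bound is already $\tOh(\out)$, so $\theta=\tOh(\out)$ follows at once.

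The hard part — and the only real obstacle — is the regime $\tilde{\out}<2^{x+y}\ll\out^2$, where the worst-case bound above degrades to as much as $\Theta(\out^2)$. Here I would strengthen the estimate so that it is still good for small $2^{x+y}$ but is only claimed for all but $q$ of the subproblems. Concretely, one keeps the last numerator factor (or the last two) un-relaxed, getting $|A_{I^{(s)}}+B_{J^{(s)}}|\le \frac{4\out^2}{2^{x+y}}\min_{r<s}|A_{I^{(r)}}+B_{J^{(s)}}|$, and then argues that a subproblem can be ``bad'' (i.e.\ have large $|A_{I^{(s)}}+B_{J^{(s)}}|$) only if \emph{every} set $A_{I^{(r)}}+B_{J^{(s)}}$ with $r<s$ is correspondingly large; since all of these sets lie inside $(A+B)\cap[u]$, and the $I^{(r)}$ (resp.\ $J^{(s)}$) are pairwise disjoint intervals with $\sum_r|I^{(r)}|\le n$ and $\sum_s|J^{(s)}|\le m$, a counting/averaging argument over the staircase caps the number of bad subproblems by $q$. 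This is precisely the ``ad-hoc BSG-type'' argument advertised in the overview, and getting the exponents to line up here is where all the work goes; once $\theta=\tOh(\out)$ is established, the time and cost bounds of the claim follow immediately from the reduction in the first paragraph.
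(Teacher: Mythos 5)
Your first paragraph is exactly the right reduction, and it matches the paper: with $N=\tOh(\out^{1/3})$ subproblems in the batch, both the running time and the added cost are $\tOh(N\theta)=\tOh(\out^{4/3})$ once you know that all but at most $q$ subproblems have $|A_{I^{(s)}}+B_{J^{(s)}}|=\tOh(\out)$. The staircase inclusion $A_{I^{(r)}}+B_{J^{(p)}}\subseteq[u]$ for $r<p$ and the Ruzsa application with $Z=B_{J^{(p)}}$, $W=A_{I^{(r)}}$ are also exactly what the paper uses.

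The gap is in the averaging step, and it is not a cosmetic one. Your proposed refinement "keep one (or two) of the Ruzsa numerator factors un-relaxed and bound the rest by $\out$" simply does not reach the needed exponent. Do the bookkeeping: with the diagonal argument, the total weight $\sum_{r<p}|A_{I^{(r)}}+B_{J^{(p)}}|$ is $O(R\,\out)$, so a Markov/averaging argument applied to \emph{one} factor makes it $\tOh(\out/R)$; two rounds applied to \emph{two} factors make the product $\tOh((\out/R)^2)$. Plugging into Ruzsa with the denominator $2^{x+y}>\out$ and $R\approx q\approx\out^{1/3}$ gives, for $k$ averaged factors, a bound of order $\out^{3}/(R^{k}\,2^{x+y})$, i.e.\ $\tOh(\out^{5/3})$ for $k=1$ and $\tOh(\out^{4/3})$ for $k=2$ — neither is the $\tOh(\out)$ you need (which would then give $N\theta=\tOh(\out^{4/3})$). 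You have to average \emph{all three} factors simultaneously. The paper does this by defining, for each index $r$, the single quantity
\[
S(r)\;=\;\sum_{i<r}\sum_{j>r}\Bigl(|A_{I_r}+B_{J_j}|+|A_{I_i}+B_{J_j}|+|A_{I_i}+B_{J_r}|\Bigr),
\]
bounding $\sum_r S(r)\le 6R^2\out$ via the diagonal argument, invoking Markov to get $S(r)\le 6R\,\out/\delta$ for all but $\delta R$ indices, and then averaging within the $\ge\delta R^2/2$ pairs $(i,j)$ with $i<r<j$ to find a \emph{single} pair for which all three terms are $\le 12\out/(\delta^2 R)$ at once. Only then does Ruzsa yield $|A_{I_r}+B_{J_r}|\le O\bigl(\out^3/(\delta^6 R^3\,2^{x+y})\bigr)=\tOh(\out)$, with $\delta=1/(6\log n\log m)$ chosen so that the $\le 3\delta R$ exceptional indices number at most $q$. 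You should replace the "keep one factor" heuristic by this two-round averaging over all three Ruzsa terms; your own trivial handling of the two extreme indices and the rest of your reduction can then stay as written.

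A minor point: "the calls finish in nondecreasing order of sumset size" is not literally true, since Theorem~\ref{thm:sparse_convolution} gives expected (not deterministic) time $\tOh(|A_I+B_J|)$; the correct statement is simply that once you have spent time $\tOh(\out)$ per parallel call, in expectation at most $q$ calls are still running, which is all the argument needs.
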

\begin{proof}
We denote by $(I_1, J_1), (I_2, J_2), \ldots, (I_R, J_R)$ the subproblems of type $(x,y)$. 
We can assume $R \ge q$, since otherwise lines 11-13 are not called.
Note that $R \le 2 q \log(n) \log(m)$ by Claim~\ref{lem:number_of_subproblems}.
By Claim~\ref{lem:staircase} these subproblems form a staircase, so we can assume that 
\begin{align*}
  \max(I_1) < \min(I_2), &\max(I_2) < \min(I_3), \ldots, \max(I_{R-1}) < \min(I_R) \\
  \min(J_1) > \max(J_2), &\min(J_2) > \max(J_3), \ldots, \min(J_{R-1}) > \max(J_R).
\end{align*}
Claim~\ref{lem:staircase} also implies that for any $r < \ell$ we have
\begin{align} \label{eq:myniceeq} 
  A_{I_r} + B_{J_\ell} \subseteq [u].
\end{align}

Set $\delta := 1 / (6 \log(n) \log(m))$, so that $3 \delta R \le q$. 
For any $r \in [R]$ set
\[ S(r) := \sum_{i=1}^{r-1} \sum_{j=r+1}^R |A_{I_r} + B_{J_j}| + |A_{I_i} + B_{J_j}| + |A_{I_i}+B_{J_r}|. \]
\begin{claim} \label{cla:myinterestingclaim}
  There are at least $(1-\delta)R$ indices $r$ with $S(r) \le 6 R \, \out / \delta$.
\end{claim}
\begin{proof}
  By simple counting how often a summand $|A_{I_r} + B_{J_\ell}|$ can appear, we observe that 
  \[ \sum_{r=1}^R S(r) \le 3 R \sum_{r < \ell} |A_{I_r} + B_{J_\ell}|. \]
  Note that we have $|A_{I_r} + B_{J_\ell}| \le \out$ by equation (\ref{eq:myniceeq}).
  However, we need a stronger property.
  We decompose the sum $\sum_{r < \ell} |A_{I_r} + B_{J_\ell}|$ into $2R-1$ diagonal sums of the form $\sum_r |A_{I_r} + B_{J_{r+\Delta}}|$. Since $\max(A_{I_r} + B_{J_\ell}) < \min(A_{I_{r+1}} + B_{J_{\ell+1}})$, all summands in a diagonal sum are disjoint, and thus each diagonal sum is bounded from above by $\out$. We thus obtain
  \[ \sum_{r < \ell} |A_{I_r} + B_{J_\ell}| \le (2R-1) \out, \]
  which yields
  \[ \sum_{r=1}^R S(r) \le 6 R^2 \out. \]
  By Markov's inequality, it follows that all but $\delta R$ indices $r$ satisfy $S(r) \le 6 R\, \out / \delta$. 
\end{proof}

In the remainder we consider only indices $\delta R \le r \le (1-\delta)R$ with $S(r) \le 6 R \, \out / \delta$; note that there are at least $(1-3\delta) R$ such indices $r$. For any such $r$, there are at least $\frac {\delta R^2}2$ pairs $(i,j)$ with $1 \le i < r$ and $r < j \le R$. Hence, there exist $i,j$ with $i < r < j$ and
\[ |A_{I_r} + B_{J_j}| + |A_{I_i} + B_{J_j}| + |A_{I_i}+B_{J_r}| \le \Big( \frac{6 R \, \out}\delta \Big) \Big/ \Big( \frac {\delta R^2}2 \Big) = \frac{12 \, \out}{\delta^2 R}. \]
We continue by invoking the corollary of Ruzsa's triangle inequality (Lemma~\ref{lem:ruzsa}), see Figure~\ref{fig:fig1} for an illustration:
	\begin{align*}
		|A_{I_r} + B_{J_r}| \leq
	\frac{ |A_{I_r}  + B_{J_j }| \cdot |B_{J_j} + A_{I_i}  | \cdot |A_{I_i} + B_{J_{r}}|  }{ |A_{I_i}| \cdot | B_{J_j}| } \leq \\
\frac{12^3 \out^3 / (\delta^6 R^3)}{2^{x-1} \cdot 2^{y-1}}.
\end{align*}
	By the case assumption $2^{x+y} > \tilde{\out} \ge \out$ and $R \ge q = \lceil \tilde{\out}^{1/3} \rceil \ge \out^{1/3}$, and by our choice of $\delta = 1 / (6 \log(n) \log(m))$, we obtain 
	\[ |A_{I_r} + B_{J_r}| \leq \Oh(\out \cdot \log^{12}(nm)). \]
	Recall that this holds for at least $(1-3\delta)R \ge R - q$ many indices $r$. 
	
	Since we compute all sumsets $A_{I_r} + B_{J_r}$ in parallel and stop once all but $q$ calls are finished, it follows that we stop after spending time $\tOh(\out)$ for each call. Over $R \le 2 q \log(n) \log (m) = \tOh(\out^{1/3})$ calls, this takes total time $\tOh(\out^{4/3})$. Moreover, any finished call ran in time $\tOh(\out)$, so it has output-size $\tOh(\out)$, so it contributes cost $\tOh(\out)$. Thus, we add rectangles of total cost $\tOh(\out^{4/3})$ in every invocation of lines 11-13 of Algorithm~\ref{alg:covering_construction}. 
\end{proof}

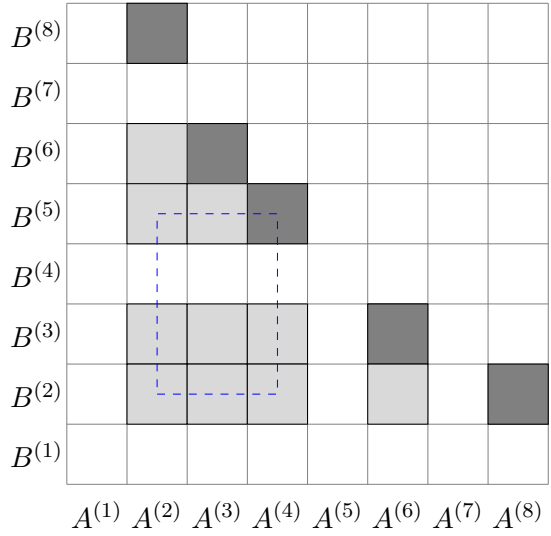
\begin{figure}
\begin{center}
\begin{tikzpicture}[scale=0.8]
\foreach \i in {\xMin,...,\xMax} {
        \draw [very thin,gray] (\i,\yMin) -- (\i,\yMax); 
    }
    \foreach \i in {\yMin,...,\yMax} {
        \draw [very thin,gray] (\xMin,\i) -- (\xMax,\i); 
    }
\foreach \i in {1,...,8}
{
	\draw node at  (\i+0.5,\yMin-0.5) {$A^{(\i)}$};
	\draw node at (\xMin-0.5,\i+0.5) {$B^{(\i)}$};
}
\def\farray{{
2,3,4,6,8
}};

\def\uarray{{
8,6,5,3,2
}};

\foreach \i in {0,...,4}
{
	\draw [fill=gray] (\farray[\i],\uarray[\i]) rectangle (\farray[\i]+1,\uarray[\i]+1)  ;
}

\foreach \i in {0,...,4}
\foreach \j in {0,...,4}
	{
		\ifnum\j>\i
			\draw [fill=gray!30] (\farray[\i],\uarray[\j]) rectangle (\farray[\i]+1,\uarray[\j]+1) ;
		\fi
	}
\draw [dashed, very thin, blue] (\farray[2] -1.5, \uarray[2] - 2.5 ) rectangle (\farray[2] +0.5,\uarray[2] +0.5);

\end{tikzpicture}
\end{center}
\caption{Illustration of the proof of Claim~\ref{cla:myinterestingclaim}. The thick gray squares depict the unprocessed subproblems $(I_1,J_1),(I_2,J_2), \ldots$  of a specific type $(x,y)$. The light gray squares depict pairs $(I_r,J_\ell)$ for $r < \ell$. The vertices of the blue rectangle correspond to (starting from the top-right corner and going clockwise) $(I_r, J_r), (I_r, J_j), (I_i, J_j),(I_i, J_r)$. From such a 4-tuple, where all sumsets $|A_{I_r} + B_{J_j}|, |A_{I_i} + B_{J_j}|,|A_{I_r} + B_{J_j}|$ are ``small'', Ruzsa's triangle inequality (Lemma~\ref{lem:ruzsa}) allows us to bound $|A_{I_r}+ B_{J_r}|$ from above. }
\label{fig:fig1}
\end{figure}

\begin{claim}[Cost Bound]
  For any type $(x,y)$ we add rectangles of total cost $\tilde{O}( \out^{4/3} )$ to $\mathcal{C}$. 
\end{claim}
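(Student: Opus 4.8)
The plan is to bound separately the three places where rectangles can enter $\mathcal{C}$ while a fixed type $(x,y)$ is being processed: (i) the wholesale moves in Case~1 ($2^{x+y}\le\tilde{\out}$), (ii) the finished parallel sumset computations in lines 11--13 of Case~2 ($2^{x+y}>\tilde{\out}$), and (iii) the rectangles $(I_1,J_1)$ produced in the splitting phase of Case~2. I would bound each of these by $\tOh(\out^{4/3})$ and then add them. Throughout I would use that $q=\lceil\tilde{\out}^{1/3}\rceil=O(\out^{1/3})$ (by the promise $\out\le\tilde{\out}\le O(\out)$) and that, by Claim~\ref{lem:number_of_subproblems}, at any moment there are at most $2q\log(n)\log(m)$ subproblems of any fixed type; since a type is processed in a single batch, this also bounds the number of type-$(x,y)$ subproblems handled during its processing.

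First I would handle Case~1. Here each subproblem $(I,J)$ of type $(x,y)$ is moved directly to $\mathcal{C}$ and contributes cost $|A_I+B_J|\le|A_I|\cdot|B_J|=|I|\cdot|J|\le 2^x\cdot 2^y=2^{x+y}\le\tilde{\out}=O(\out)$, where I used $|I|\le 2^{\lceil\log|I|\rceil}=2^x$ and likewise for $J$. Multiplying by the number $2q\log(n)\log(m)=\tOh(\out^{1/3})$ of such subproblems gives total cost $\tOh(\out^{4/3})$ for this type.

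Next I would handle Case~2. The rectangles moved into $\mathcal{C}$ in lines 11--13 already cost only $\tOh(\out^{4/3})$ by Claim~\ref{claim:charging}, so it remains to bound the splitting phase. At most $q$ subproblems of type $(x,y)$ reach this phase (either $\mathcal{D}$ held at most $q$ of them from the outset, or lines 11--13 reduced their number to $q$), and each such $(I,J)=([i_1,i_2],[j_1,j_2])$ contributes exactly one rectangle $(I_1,J_1)$ with $I_1=[i_1,i]$, $J_1=[j_1,j]$, where $i=\lfloor(i_1+i_2)/2\rfloor$ and $j$ is the maximum index in $J$ with $A_i+B_j\le u$. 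Since $A$ and $B$ are sorted, every $i'\in I_1$, $j'\in J_1$ satisfies $A_{i'}+B_{j'}\le A_i+B_j\le u$, hence $A_{I_1}+B_{J_1}\subseteq[u]$ and $|A_{I_1}+B_{J_1}|\le\out$. Summing over the at most $q$ such rectangles gives cost $q\cdot\out=\tOh(\out^{4/3})$.

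Adding the three contributions, the total cost of rectangles added to $\mathcal{C}$ while processing type $(x,y)$ is $\tOh(\out^{4/3})$, proving the claim. I do not expect a genuine obstacle here: the substantive work — combining the diagonal-disjointness bound, a Markov-type averaging, and Ruzsa's triangle inequality — is already packaged in Claim~\ref{claim:charging}, which I would invoke as a black box. The only points requiring care are the containment $A_{I_1}+B_{J_1}\subseteq[u]$ (which follows from the monotonicity of $A,B$ and the choice of the split index $j$) and correctly invoking the $2q\log(n)\log(m)$ bound on the number of subproblems of a type from Claim~\ref{lem:number_of_subproblems}.
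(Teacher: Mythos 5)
Your proof is correct and follows the paper's argument essentially verbatim: the same three-way decomposition (Case 1 wholesale moves, Case 2 finished parallel calls via Claim~\ref{claim:charging}, Case 2 splitting-phase rectangles), the same use of the trivial bound $|A_I+B_J|\le|I|\cdot|J|\le 2^{x+y}$ in Case 1, and the same bound $|A_{I_1}+B_{J_1}|\le\out$ from $A_{I_1}+B_{J_1}\subseteq[u]$ in the splitting phase. The extra detail you give on why $A_{I_1}+B_{J_1}\subseteq[u]$ (monotonicity and choice of $j$) is correct and worth spelling out.
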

\begin{proof}
  If $2^{x+y} > \tilde{\out}$, then the cost contributed by lines 11-13 of Algorithm~\ref{alg:covering_construction} is bounded in Claim~\ref{claim:charging}. Additionally, in the splitting phase (line 18) we add a rectangle $(I_1,J_1)$ satisfying $A_{I_1} + B_{J_1} \subseteq [u]$, and thus $|A_{I_1} + B_{J_1}| \le \out$. Since we split at most $q$ subproblems, we add a total cost of $\tOh(q \cdot \out) = \tOh(\out^{4/3})$. 
  
  If $2^{x+y} \le \tilde{\out}$, then by the trivial bound each added rectangle $(I,J)$ has cost at most $|I| \cdot |J| \le 2^{x+y} \le \tilde{\out} = O(\out)$. By Claim~\ref{lem:number_of_subproblems} we have $\tOh(q)$ subproblems, and thus we add a total cost of $\tOh(q \cdot \out) = \tOh(\out^{4/3})$. 
\end{proof}

Over all types $(x,y)$, we obtain a total cost of $\tOh(\out^{4/3})$. The running time bound of $\tOh(\out^{4/3})$ follows along the same lines.
This finishes the proof of Theorem~\ref{thm:technical_core}.


\section{Lower Bound on Coverings} \label{sec:covering_lb}

This section is devoted to proving Theorem \ref{thm:covering_lb}. 
We will use the following notation. For a vector~$x$ we write $w(x)$ for its Hamming weight, i.e., its number of non-zero coordinates.
For vectors $x,y$ we write $d_H(x,y)$ for their Hamming distance, i.e., the number of coordinates in which they differ.

For sets $X,Y$ we denote their symmetric difference by $X \triangle Y := (X \setminus Y) \cup (Y \setminus X)$.
We identify sets with their indicator vectors. In particular, note that if $x,y$ are the indicator vectors of sets $X,Y$, respectively, then $d_H(x,y) = |X \triangle Y|$.

Recall  the definition of the binary entropy function $h(x) = -x \log x - (1-x) \log (1-x)$ and note that $h(1/2) = 1$.
Finally, we use notation $\Oh_\delta(.)$, $\Omega_\delta(.)$, and $\Theta_\delta(.)$ to hide constants that only depend on the parameter $\delta$.

The following is a standard construction of an error correcting code.

\begin{lemma}[Constant-weight Binary Code]\label{lem:code}
Fix $0 \le \delta < 1/2$. For any even integer $t$ there exists a code $ \mathcal{E} \subseteq \{0,1\}^t$ such that:
\begin{itemize}
\item Each codeword $x \in \mathcal{E}$ has weight $w(x) = t/2$,
\item Any two codewords $x,y \in \mathcal{E}$ with $x \neq y$ satisfy $d_H(x,y) \geq \delta t$, and
\item The number of codewords is $|\mathcal{E}| = \Omega_\delta(2^{(1-h(\delta))t})$.
\end{itemize}
\end{lemma}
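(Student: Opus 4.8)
The plan is to obtain $\mathcal{E}$ by a greedy / probabilistic deletion argument on the set of all weight-$t/2$ vectors in $\{0,1\}^t$. First I would set $W := \{x \in \{0,1\}^t : w(x) = t/2\}$, so that $|W| = \binom{t}{t/2} = \Theta(2^t/\sqrt{t})$, and then repeatedly pick an arbitrary remaining vector $x$, add it to $\mathcal{E}$, and delete from $W$ all vectors $y$ (including $x$ itself) with $d_H(x,y) < \delta t$. This guarantees the first two bullet points by construction: every chosen codeword has weight $t/2$, and any two chosen codewords are at Hamming distance at least $\delta t$ since the second one was not deleted when the first was processed.

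The crux is the third bullet, the lower bound $|\mathcal{E}| = \Omega_\delta(2^{(1-h(\delta))t})$, which amounts to bounding how many vectors get deleted per step. Each step deletes at most the number of weight-$t/2$ vectors in a Hamming ball of radius $\delta t$ around $x$; a crude upper bound is the size of the whole Hamming ball, $\sum_{i < \delta t}\binom{t}{i} \le 2^{h(\delta)t}$ (valid for $\delta < 1/2$, by the standard entropy estimate on binomial sums, e.g.\ $\sum_{i \le \alpha t}\binom{t}{i} \le 2^{h(\alpha)t}$ for $\alpha \le 1/2$). Hence the process runs for at least
\[
  |\mathcal{E}| \;\ge\; \frac{|W|}{2^{h(\delta)t}} \;=\; \frac{\binom{t}{t/2}}{2^{h(\delta)t}} \;=\; \Omega\!\left(\frac{2^{t}}{\sqrt{t}\,2^{h(\delta)t}}\right) \;=\; \Omega_\delta\!\left(2^{(1-h(\delta))t}\right)
\]
steps, where the last equality absorbs the $1/\sqrt{t}$ factor into the $\Omega_\delta$ (formally one can also note $h(\delta) < 1$ for $\delta < 1/2$, so even a slightly lossier bound suffices; the statement as written with $\Omega_\delta$ hiding only $\delta$-dependent constants is fine because $2^{t(1-h(\delta))}/\sqrt t = \Omega_\delta(2^{t(1-h(\delta)-o(1))})$, but actually one keeps the exact exponent and only loses the $\sqrt t$, which is dominated; I would just state the Gilbert--Varshamov-style bound directly). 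I should double-check the exact form expected: the cleanest phrasing is $|\mathcal{E}| \ge \binom{t}{t/2}/|B(\delta t)| \ge \Omega_\delta(2^{(1-h(\delta))t})$.

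The one point needing a little care is the edge case $\delta = 0$: then the distance condition is vacuous, one can take $\mathcal{E} = W$, and $h(0) = 0$ gives $|\mathcal{E}| = \Omega(2^t/\sqrt t) = \Omega(2^{(1-h(0))t - o(t)})$ — here the $\sqrt t$ loss means the bound should be read with the implicit understanding that polynomial-in-$t$ factors are harmless, or one restricts to $0 < \delta < 1/2$ where the gap $1 - h(\delta) > 0$ makes the polynomial factor irrelevant; since the lemma statement allows $\delta = 0$ I would simply remark that in that case $\mathcal{E} := W$ works and $\binom{t}{t/2} \ge 2^t/(t+1) = \Omega_\delta(2^{(1-h(\delta))t})$ is not literally true but $\binom{t}{t/2} \ge 2^{t}/(t+1)$ suffices for all downstream uses; more honestly, I expect the intended reading is $0 < \delta < 1/2$, and for $\delta = 0$ one uses the trivial code. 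The main (only) obstacle is thus purely bookkeeping: making sure the entropy bound $\sum_{i \le \delta t}\binom{t}{i} \le 2^{h(\delta)t}$ is invoked correctly for $\delta < 1/2$ and that the resulting constant is absorbed into $\Omega_\delta(\cdot)$; there is no conceptual difficulty, as this is the classical Gilbert--Varshamov bound specialized to constant-weight codes.
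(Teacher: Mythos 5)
Your approach — greedily picking weight-$t/2$ vectors and deleting the Hamming ball of radius $\delta t$ around each pick — is exactly the paper's construction, so on the conceptual level the two proofs coincide.

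The one real issue is the counting. You bound the number of deletions per step by the whole Hamming ball $\sum_{i < \delta t}\binom{t}{i} \le 2^{h(\delta)t}$, which gives $|\mathcal{E}| \ge \binom{t}{t/2}/2^{h(\delta)t} = \Omega\bigl(2^{(1-h(\delta))t}/\sqrt{t}\bigr)$. That extra $1/\sqrt t$ cannot be absorbed into $\Omega_\delta(\cdot)$ as stated, since $\Omega_\delta$ hides only a $\delta$-dependent constant, not a factor vanishing with $t$; your aside acknowledges this but then waves it away rather than fixing it. The paper avoids the loss by using the two-sided estimates $\sum_{i \le \alpha t}\binom{t}{i} = \Theta_\alpha\bigl(\binom{t}{\alpha t}\bigr)$ and $\binom{t}{\alpha t} = \Theta_\alpha\bigl(2^{h(\alpha)t}/\sqrt{t}\bigr)$, so that the numerator $\binom{t}{t/2} = \Theta(2^t/\sqrt t)$ and the denominator $\Theta_\delta(2^{h(\delta)t}/\sqrt t)$ each carry a $1/\sqrt t$ and they cancel, giving $|\mathcal{E}| = \Omega_\delta\bigl(\binom{t}{t/2}/\binom{t}{\delta t}\bigr) = \Omega_\delta\bigl(2^{(1-h(\delta))t}\bigr)$ exactly as claimed. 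Replacing your one-sided entropy bound by these standard tight estimates closes the gap; everything else in your argument is fine, and the $\delta=0$ degenerate case you flag is indeed handled by taking $\mathcal{E} = W$ (the lemma is only used with $\delta$ bounded away from $0$, so this does not matter downstream).
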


\begin{proof}
We pick $\mathcal{E}$ greedily among all ${t \choose t/2}$ many vectors $x \in \{0,1\}^t$ with Hamming weight~$t/2$. Whenever we pick a vector $x$, we mark all vectors $y$ within distance $\delta t$ of $x$ as unpickable. 
Note that we mark at most $\sum_{i=0}^{\delta t} {t \choose i}$ many vectors $y$. In total, this process picks $|\mathcal{E}| \ge {t \choose t/2} \big/ \big(\sum_{i=0}^{\delta t} {t \choose i} \big)$ vectors. 
The claim now follows from the following facts about binomial coefficients:
\begin{itemize}
  \item $\sum_{i=0}^{\alpha n} {n \choose i} = \Theta_\alpha \big({n \choose \alpha n}\big)$ for any $0 \le \alpha < 1/2$,
  \item ${n \choose  \alpha n} = \Theta_\alpha \big(2^{h(\alpha)n}/\sqrt{n}\big)$.
\end{itemize}
Indeed, with these facts we obtain $|\mathcal{E}| \ge {t \choose t/2} \big/ \big(\sum_{i=0}^{\delta t} {t \choose i} \big) = \Omega_\delta\big( {t \choose t/2} \big/ {t \choose \delta t} \big) = \Omega_\delta\big( 2^{(1-h(\delta))t} \big)$.
\end{proof}

We next lift the above code to a family of sets where the sumset $X^{(i)} + Y^{(j)}$ has large cardinality if $i=j$, and small cardinality if $i \ne j$.

\begin{lemma} \label{lemma:covering_lb}
Fix $0 \le \delta < 1/2$. For any integers $m,t \ge 2$, where $t$ is even, for 
\[ \sigma := m^t, \quad \alpha := 2^{\delta t/2} m^{(1-\delta/2)t}, \quad \text{and some} \quad g = \Omega_\delta\big(2^{(1-h(\delta))t}\big), \]
there exist sets $X^{(1)},..,X^{(g)}, Y^{(1)},..,Y^{(g)} \subseteq [\sigma]$ satisfying
\begin{enumerate}
\item $|X^{(i)}| = |Y^{(i)}| = \sigma^{1/2}$ for any $1 \le i \le g$,
\item $|X^{(i)}+Y^{(i)}| = \sigma$ for any $1 \le i \le g$, and
\item $|X^{(i)} + Y^{(j)}| \leq \alpha$ for any $i \neq j$.
\end{enumerate}
\end{lemma}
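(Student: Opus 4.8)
The plan is to realize the sets via $m$-ary digit patterns dictated by the code of Lemma~\ref{lem:code}, exactly as outlined in the techniques overview. First I would apply Lemma~\ref{lem:code} to obtain a constant-weight code $\mathcal{E} = \{c^{(1)},\dots,c^{(g)}\} \subseteq \{0,1\}^t$ with $g = |\mathcal{E}| = \Omega_\delta(2^{(1-h(\delta))t})$, each codeword of weight $t/2$ and pairwise Hamming distance $\ge \delta t$; identify $c^{(\ell)}$ with its support $S_\ell \subseteq \{1,\dots,t\}$, so $|S_\ell| = t/2$. Writing each $z \in \{0,\dots,m^t-1\}$ in base $m$ as $z = \sum_{r=1}^t z_r m^{r-1}$ with $z_r \in \{0,\dots,m-1\}$, define $X^{(\ell)}$ to be the numbers whose digits vanish on $S_\ell$ and $Y^{(\ell)}$ the numbers whose digits vanish on the complement $\overline{S_\ell} := \{1,\dots,t\}\setminus S_\ell$. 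Since $\sigma = m^t$, both families lie in $\{0,\dots,\sigma-1\} \subseteq [\sigma]$, and property~1 is immediate: each set is parameterized by $t/2$ free base-$m$ digits, hence has cardinality $m^{t/2} = \sigma^{1/2}$.

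For property~2 I would observe that for $x \in X^{(\ell)}$ and $y \in Y^{(\ell)}$ the digit vectors of $x$ and $y$ have disjoint supports (namely $\overline{S_\ell}$ and $S_\ell$), so the base-$m$ addition $x+y$ produces no carries; the digit of $x+y$ at position $r$ is $x_r$ if $r \notin S_\ell$ and $y_r$ if $r \in S_\ell$. Letting $x$ and $y$ vary independently one realizes every length-$t$ base-$m$ string, so $X^{(\ell)} + Y^{(\ell)} = \{0,\dots,m^t-1\}$, which has size $\sigma$.

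For property~3, fix $i \ne j$ and partition $\{1,\dots,t\}$ into the four cells $A_1 = \overline{S_i}\cap\overline{S_j}$, $A_2 = \overline{S_i}\cap S_j$, $A_3 = S_i\cap\overline{S_j}$, $A_4 = S_i\cap S_j$. From $|S_i| = |S_j| = t/2$ one gets $|A_2| = |A_3|$, whence $d_H(c^{(i)},c^{(j)}) = |A_2| + |A_3| = 2|A_2| \ge \delta t$ forces $|A_2| \ge \delta t/2$, and also $|A_1| + |A_2| + |A_4| = t - |A_2|$. Now for $x \in X^{(i)}$ (digits supported on $A_1 \cup A_2$) and $y \in Y^{(j)}$ (digits on $A_2 \cup A_4$) I would write the integer identity $x+y = x|_{A_1} + (x|_{A_2} + y|_{A_2}) + y|_{A_4}$, where $z|_S$ means $z$ with all digits outside $S$ zeroed. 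Hence $x+y$ is determined by the triple $\bigl(x|_{A_1},\, x|_{A_2}+y|_{A_2},\, y|_{A_4}\bigr)$, so $|X^{(i)}+Y^{(j)}|$ is at most $m^{|A_1|}\cdot(2m-1)^{|A_2|}\cdot m^{|A_4|}$, the middle factor counting the possible coordinatewise super-sums $x_r + y_r \in \{0,\dots,2m-2\}$ over $r \in A_2$. This is at most $2^{|A_2|} m^{|A_1|+|A_2|+|A_4|} = 2^{|A_2|} m^{\,t-|A_2|}$; since $m \ge 2$ the function $k \mapsto 2^k m^{t-k}$ is non-increasing, so substituting $|A_2| \ge \delta t/2$ gives $|X^{(i)}+Y^{(j)}| \le 2^{\delta t/2} m^{(1-\delta/2)t} = \alpha$.

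I expect no serious obstacle; the only delicate-looking point is carries in base-$m$ addition, and this is handled painlessly. Property~2 needs a carry-\emph{free} sum, which is guaranteed by disjoint digit supports, and property~3 needs only an \emph{upper} bound on a set of values, so the decomposition $x+y = x|_{A_1} + (x|_{A_2}+y|_{A_2}) + y|_{A_4}$ lets me count digit patterns term by term without ever tracking carries. The remaining work is routine bookkeeping ($|A_2| = |A_3|$, $|A_1|+|A_2|+|A_4| = t-|A_2|$, monotonicity of $k \mapsto 2^k m^{t-k}$, and matching exponents with $\alpha$), together with reading off $g = \Omega_\delta(2^{(1-h(\delta))t})$ directly from Lemma~\ref{lem:code}.
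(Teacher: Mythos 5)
Your proof is correct and takes essentially the same approach as the paper: apply Lemma~\ref{lem:code} to get a constant-weight code, realize $X^{(\ell)},Y^{(\ell)}$ as $m$-ary numbers with digit supports on a codeword's support and its complement, use carry-free addition of complementary supports for property~2, and for property~3 bound the number of possible coordinatewise ``super-digit'' patterns (your $m^{|A_1|}(2m-1)^{|A_2|}m^{|A_4|}$ is exactly the paper's $(2m-1)^{|I\cap J|}m^{|I\triangle J|}$ after identifying $A_2$ with the intersection of supports and $A_1\cup A_4$ with the symmetric difference), then convert $(2m-1)\le 2m$ and apply monotonicity with the distance lower bound $|A_2|\ge \delta t/2$.
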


\begin{proof}
Let $\mathcal{E} \subseteq \{0,1\}^t$ be the code given by Lemma~\ref{lem:code} and set $g := |\mathcal{E}|$.
Let $I^{(1)},\ldots,I^{(g)} \subseteq \{0,1,\ldots,t-1\}$ be sets such that the codewords in $\mathcal{E}$ are the indicator vectors of $I^{(1)},\ldots,I^{(g)}$.
For any $1\le \ell \le g$, we write 
\[ \overline{I}^{(\ell)} := \{0,\ldots,t-1\} \setminus I^{(\ell)}. \] 
Moreover, for any $I \subseteq \{0,..,t-1\}$ we set
\[	S(I) := \Big\{	\sum_{i \in I} \eta_i \cdot m^i \;\Big\vert\; 0 \le \eta_i < m \Big\},	\]
where $\eta_i$ ranges over all integers between 0 and $m-1$.
Note that $S(I)$ is the set of all numbers with $t$ digits in the $m$-ary system for which every digit outside of $I$ is $0$. Alternatively, it can be viewed as the sumset of $|I|$ many arithmetic progressions of length $m$ and step sizes $\{m^i\}_{i \in I}$. 
Finally, we set 
\[ X^{(\ell)} := S(I^{(\ell)}) \qquad \text{and} \qquad Y^{(\ell)} = S(\overline{I}^{(\ell)}). \]

\medskip
Note that any number in $S(I)$ is bounded from above by $\sum_{i=0}^{t-1} (m-1) \cdot m^i = m^t-1$, and thus we have $X^{(\ell)},Y^{(\ell)} \subseteq [m^t]$ for any $\ell$.
It remains to verify the three claims.

For (1.), note that for any $\ell$ we have $|X^{(\ell)}| = |S(I^{(\ell)})| = m^{|I^{(\ell)}|} = m^{t/2}$, since the code $\mathcal{E}$ has constant weight~$t/2$. The same holds for $Y^{(\ell)}$.

For the remaining claims, for any $\ell,h$ we write
\begin{align}
\big|X^{(\ell)}+Y^{(h)}\big| &\;=\; \big|S(I^{(\ell)}) + S(\overline{I}^{(h)})\big|  
\;=\; \bigg| \bigg\{ \sum_{i \in I^{(\ell)} } \eta_i \cdot m^i  + \sum_{j \in \overline{I}^{(h)} } \eta'_j \cdot m^j \;\bigg\vert\; 0 \le \eta_i, \eta'_j < m \bigg\} \bigg|. \label{eq:wohfdf}
\end{align}

For (2.), we use that $I^{(\ell)} \cup \overline{I}^{(\ell)}$ is a partitioning of $\{0,\ldots,t-1\}$ to obtain for any $\ell$
\[
\big|X^{(\ell)}+Y^{(\ell)}\big| \;=\; \bigg| \bigg\{ \sum_{i=0}^{t-1} \eta_i \cdot m^i \;\bigg\vert\; 0 \le \eta_i < m \bigg\} \bigg|  
 \;=\; m^t.
\]

For (3.), for any $\ell \ne h$ we write $I := I^{(\ell)}$ and $J := \overline{I}^{(h)}$ and express the right hand side of (\ref{eq:wohfdf}) in terms of the intersection $I \cap J$ and the symmetric difference $I \triangle J$ as
\begin{align*}
\big|X^{(\ell)}+Y^{(h)}\big| &\;=\; \bigg| \bigg\{ \sum_{i \in I \cap J } \eta_i \cdot m^i  + \sum_{j \in I \triangle J } \eta'_j \cdot m^j \;\bigg\vert\; 0 \le \eta_i \le 2 m - 2,\, 0 \le \eta'_j < m \bigg\} \bigg|.
\end{align*}

This allows us to bound
\[ \big|X^{(\ell)}+Y^{(h)}\big| \;\le\; (2m-1)^{|I \cap J|} \cdot m^{|I \triangle J|} \;\le\; 2^{|I \cap J|} \cdot m^{|I \cap J|+|I \triangle J|} \;=\; 2^{|I \cap J|} \cdot m^{|I \cup J|} \;=\; 2^t \Big( \frac m2 \Big)^{|I \cup J|}, \]
where we have used inclusion-exclusion $|I \cup J| = |I|+|J|-|I \cap J|$ combined with $|I|=|J|=t/2$ in the last step.
We now use the fact that for any $S,T \subseteq \{0,\ldots,t-1\}$ we have
\[ \big| S \cup \big(\{0,\ldots,t-1\} \setminus T\big) \big| = \frac 12 \big( 2t + |S| - |T| - |S \triangle T| \big). \]
Plugging in the bounds of $|I^{(\ell)}| = t/2$ and $|I^{(\ell)} \triangle I^{(h)}| \ge \delta t$ for any $\ell \ne h$ by the properties of the code $\mathcal{E}$, we obtain the bound
\[ |I \cup J| \;=\; \big| I^{(\ell)} \cup \overline{I}^{(h)} \big| \;\le\; \Big(1 - \frac \delta 2 \Big) t. \]
Together, this yields
\[ \big|X^{(\ell)}+Y^{(h)}\big| \;\le\; 2^t \Big( \frac m2 \Big)^{(1-\delta/2)t} = 2^{\delta t/2} m^{(1-\delta/2)t}, \]
finishing the proof.

\end{proof}


\begin{lemma} \label{lem:lb_setsAB}
  With parameters $\sigma,\alpha,g$ as in Lemma~\ref{lemma:covering_lb}, there exist sets $A,B \subseteq \mathbb{N}$ and an integer~$u$ such that 
  \begin{enumerate}
    \item $|A|,|B| = g \cdot \sigma^{1/2}$, 
    \item $\out := |(A+B) \cap [u]| = \Oh(g^2 \cdot \alpha + \sigma)$, and
    \item Any rectangle covering of $(A,B,[u])$ has cost $\Omega(g \cdot \sigma)$.
  \end{enumerate}
\end{lemma}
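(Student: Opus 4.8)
The plan is to build $A$ and $B$ out of the sets $X^{(\ell)}, Y^{(\ell)}$ from Lemma~\ref{lemma:covering_lb} by placing $g$ shifted copies of each family along the number line, with the shifts designed so that (i) the ``diagonal'' products $X^{(\ell)}+Y^{(\ell)}$ are pushed \emph{above} the threshold $u$, while (ii) every ``off-diagonal'' product $X^{(\ell)}+Y^{(\ell')}$ (which is small, of size $\le \alpha$) lands below $u$, so that the output $\out$ stays small; and yet (iii) any rectangle in a covering that wants to cover many of the off-diagonal blocks is geometrically forced to also contain an entire on-diagonal block $X^{(\ell)}+Y^{(\ell)}$, which costs $\sigma$. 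Concretely, I would use a two-scale construction: a ``coarse'' coordinate (a multiple of some large modulus $N \gg \sigma$) that indexes which copy $\ell$ or $\ell'$ we are in, and a ``fine'' coordinate living in $[\sigma]$ that carries the actual elements of $X^{(\ell)}$ or $Y^{(\ell')}$. Thus $A := \bigcup_{\ell=1}^g \big( c_\ell N + X^{(\ell)} \big)$ and $B := \bigcup_{\ell=1}^g \big( d_\ell N + Y^{(\ell)} \big)$ for carefully chosen integer coefficients $c_\ell, d_\ell$, and $u$ chosen so that $c_\ell N + d_{\ell'} N + (\text{anything in } [\sigma])$ lies in $[u]$ exactly when $\ell \ne \ell'$ (or some asymmetric variant thereof).

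First I would pin down the shift coefficients. The cleanest choice is to make the coarse coordinate behave like a sumset in its own right: take $c_\ell, d_\ell$ so that $c_\ell + d_{\ell'}$ takes the value $0$ when $\ell = \ell'$ and a value in a fixed positive range when $\ell \ne \ell'$ — for instance $c_\ell = \ell$ and $d_\ell = -\ell$, after a global shift to keep everything non-negative. Then $c_\ell N + d_{\ell'} N = (\ell - \ell')N$, which is $0$ on the diagonal and has absolute value between $N$ and $(g-1)N$ off it. Setting $u$ just below $N$ (after the global shift, $u$ around $(g-1)N + \sigma$ with the construction shifted so that negative differences also fall outside) forces the on-diagonal blocks out of $[u]$ and keeps a controlled set of off-diagonal blocks in. Because $N \gg \sigma$, there is no ``carry'' interaction between the coarse and fine coordinates, so $(A+B) \cap [u]$ is exactly a disjoint union of at most $g^2$ translated copies of off-diagonal sumsets $X^{(\ell)}+Y^{(\ell')}$, each of size $\le \alpha$, giving item 2 with $\out = O(g^2 \alpha + \sigma)$ (the additive $\sigma$ being slack for small cases / the zero element). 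Item 1 is immediate since the copies are disjoint in the coarse coordinate: $|A| = \sum_\ell |X^{(\ell)}| = g\sigma^{1/2}$.

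The heart of the argument is item 3, the covering lower bound. Here I would exploit that a \emph{rectangle} covering uses sets $I, J$ that are contiguous in the sorted order of $A$ and $B$; because the elements of $A$ are grouped into $g$ coarse blocks appearing in order of $\ell$, a contiguous $I$ that touches two different blocks $\ell_1 < \ell_2$ must contain \emph{all} of $X^{(\ell)}$ for every intermediate $\ell$, and similarly for $J$ and the $Y$-blocks. So if a single rectangle $(I,J)$ is responsible for covering off-diagonal pairs from $\ge 2$ distinct $X$-blocks and $\ge 2$ distinct $Y$-blocks, the index ranges overlap in such a way that $I \times J$ necessarily contains some on-diagonal pair $(X^{(\ell)}, Y^{(\ell)})$ — hence $A_I + B_J \supseteq X^{(\ell)} + Y^{(\ell)}$, so $|A_I + B_J| \ge \sigma$. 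On the other hand, a rectangle whose $I$-range is confined to (essentially) one $X$-block can cover off-diagonal pairs involving only $O(1)$ distinct $Y$-blocks worth of ``new'' content before it, too, has to span multiple blocks. Quantifying this: there are $\Theta(g^2)$ off-diagonal block-pairs that must each be covered; a rectangle that avoids containing any diagonal block can cover pairs from at most one $X$-block (and then the $J$-range can be arbitrary but the pairs it covers all share that one $\ell$, so it covers at most $g$ block-pairs), so either we need $\Omega(g)$ such ``cheap'' rectangles each of cost $\ge$ (something), or we use a ``big'' rectangle of cost $\ge \sigma$. A careful accounting — summing over all rectangles, charging each block-pair to the rectangle covering it, and using that cost is $\sum |A_I+B_J|$ — should give total cost $\Omega(g\sigma)$: morally, you need to ``pay $\sigma$ per $X$-block'' because each $X$-block's elements participate in off-diagonal sums that, when covered by rectangles, drag in $\Omega(\sigma)$ worth of sumset, either via a diagonal block or via $\Omega(g)$ separate small rectangles whose costs add up. I would formalize this by: for each $\ell$, either there is a covering rectangle containing $X^{(\ell)}+Y^{(\ell)}$ (cost $\ge \sigma$ chargeable to $\ell$), or the off-diagonal pairs $(\ell, \ell')$ for the $\Omega(g)$ choices of $\ell'$ are covered by rectangles each confined near block $\ell$ in one coordinate, and summing their sizes over those $\Omega(g)$ pairs gives $\Omega(g \cdot |X^{(\ell)} + Y^{(\ell')}|)$ — but this last quantity could be as small as $g$, not $\sigma$, so the bound would have to come from a more clever pigeonhole showing the small rectangles cannot be reused across the $\ell'$'s.

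\textbf{The main obstacle} I anticipate is exactly this last point: ruling out that a covering exploits the smallness of the off-diagonal sumsets to get away cheaply. A rectangle confined to block $\ell$ on the $A$-side but spanning \emph{all} $Y$-blocks on the $B$-side has $I = $ (indices of $X^{(\ell)}$), $J = \{1,\dots,|B|\}$, and $A_I + B_J = \bigcup_{\ell'} (X^{(\ell)} + Y^{(\ell')} + \text{shift})$, whose size is $\sigma$ (the diagonal term $\ell' = \ell$ is included!) — so even this ``confined'' rectangle pays $\sigma$. This is the crucial observation that makes item 3 work: you cannot cover all the off-diagonal pairs involving a fixed block $X^{(\ell)}$ with $B$-side ranges avoiding $Y^{(\ell)}$, because the off-diagonal $\ell'$ values straddle $\ell$ (both below and above), so a contiguous $J$ covering an $\ell' < \ell$ pair and an $\ell' > \ell$ pair must pass through $Y^{(\ell)}$. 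Hence the only escape is to use $\ge 2$ rectangles per block $\ell$ (one for $\ell' < \ell$, one for $\ell' > \ell$) — still only $O(g)$ rectangles total and no $\sigma$ forced. So the real bound must instead track the \emph{sizes} of the many small rectangles; the delicate step is showing these cannot all be of size $o(\sigma/g)$ simultaneously while still covering all $\Theta(g^2)$ pairs — plausibly via a counting/LP-duality argument that each element of $(A+B)\cap[u]$ gets covered, there are $\Theta(g^2 \alpha)$ such elements in the off-diagonal part plus the structure forcing overlap, yielding $\sum_{(I,J)} |A_I + B_J| = \Omega(g\sigma)$ once one notes $\sigma = \alpha \cdot 2^{(1-\delta/2)t}/\!\dots$ makes $g\sigma$ the right target. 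I would work out the precise shift coefficients and the exact pigeonhole in the full proof, but the architecture above — disjoint coarse blocks, diagonal-pushed-out, contiguity of rectangles forcing block-straddling — is the route I would take.
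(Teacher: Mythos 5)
Your items 1 and 2 are fine, and your high-level architecture (coarse shifts indexing the block $\ell$, fine coordinate carrying $X^{(\ell)}$ or $Y^{(\ell)}$, contiguity of rectangles forcing block-straddling) matches the paper. But item 3 is where your proposal breaks, and you have in fact half-noticed why: the construction you propose --- \emph{all} diagonal blocks $X^{(\ell)}+Y^{(\ell)}$ pushed above $u$, all off-diagonal blocks kept below --- does not satisfy the lower bound. For each $\ell$, take the two rectangles $(I_\ell, J_{<\ell})$ and $(I_\ell, J_{>\ell})$, where $I_\ell$ is the index range of $X^{(\ell)}$ and $J_{<\ell}, J_{>\ell}$ are the (contiguous) index ranges of the $Y$-blocks before and after $Y^{(\ell)}$. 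This is a valid rectangle covering: every pair with $\ell \ne \ell'$ is covered, and the diagonal pairs need no coverage since they exceed $u$. Its cost is at most $2g \cdot (g-1)\alpha = O(g^2\alpha)$, and with the parameters used downstream ($m=10$, $\delta \approx 0.27$) one has $g\alpha = o(\sigma)$, so this covering has cost $o(g\sigma)$. Hence no counting or LP-duality argument can rescue item 3 for your construction; the construction itself must change. Your "main obstacle" paragraph correctly identifies the $2$-rectangles-per-block escape but then hopes a size argument closes the gap --- it cannot, because the escape is an actual cheap covering.

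The paper's fix is an \emph{alternating} diagonal: the odd-indexed diagonal blocks are placed \emph{inside} $[u]$, all translated so that they land in the \emph{same} length-$2\sigma$ output window (so they add only $O(\sigma)$ to $\out$, preserving item 2), while the even-indexed diagonal blocks are shifted by an extra $E(i)\cdot M$ so that they lie just \emph{outside} $[u]$ and have pairwise disjoint sumsets. Now each odd diagonal block is a full direct sum of size $\sigma$ that \emph{must} be covered, so it forces $\sigma$ worth of coverage; and any rectangle that tries to amortize by covering pairs from two consecutive odd diagonal blocks $i$ and $i+2$ necessarily contains the entire even block $A^{(i+1)} \times B^{(i+1)}$ in between, paying $\sigma$ for a sumset disjoint from everything else. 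A graph argument on the odd indices (many edges $\Rightarrow$ many disjoint even-diagonal sumsets swallowed; few edges $\Rightarrow$ many components each independently demanding cost $\sigma$) then gives $\Omega(g\sigma)$. In short, the lower bound in the paper comes from the diagonal blocks, not from the off-diagonal ones, which is the piece your proposal is missing.
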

\begin{proof}
  Let $X^{(1)},..,X^{(g)}, Y^{(1)},..,Y^{(g)}$ be the sets constructed in Lemma~\ref{lemma:covering_lb}. 
  For $i \in \mathbb{N}$ we let
  \[ E(i) := \begin{cases} i, & \text{if $i$ is even} \\0, & \text{otherwise.} \end{cases} \] 
  Moreover, we let $M$ be a sufficiently large integer; setting $M := 100 (\sigma+g)$ suffices.
  With this setup, for any $1 \le i,j \le g$ we define $A^{(i)}$ and $B^{(j)}$ as appropriate shifts of $X^{(i)}$ and $Y^{(j)}$, respectively, more precisely we set
  \begin{align*}
    A^{(i)} := X^{(i)} + \big\{ i \cdot M^2 + E(i) \cdot M \big\} \qquad &\text{and} \qquad A := \bigcup_{i=1}^g A^{(i)},  \\
    B^{(j)} := Y^{(j)} + \big\{ (g-j) \cdot M^2 \big\} \qquad &\text{and} \qquad B := \bigcup_{j=1}^g B^{(j)}.
  \end{align*}
Finally, we set
\[ u := g \cdot M^2 + 2\sigma. \]

\medskip
We now verify the three claims. The size bound $|A|,|B| = g \cdot \sigma^{1/2}$ is immediate from the property $|X^{(i)}|,|Y^{(j)}| = \sigma^{1/2}$.

For the output-size, we use the following claim.
\begin{claim}
  The following properties hold.
  \begin{enumerate}
    \item For any $i \ne j$ we have $|A^{(i)} + B^{(j)}| \le \alpha$,
    \item For any even $i$ we have $(A^{(i)} + B^{(i)}) \cap [u] = \emptyset$, and
    \item For any odd $i$ we have $A^{(i)} + B^{(i)} \subseteq [g \cdot M^2, g\cdot M^2 + 2\sigma] \subseteq [u]$.
  \end{enumerate}
\end{claim}
\begin{proof}
  Claim 1.\ is immediate from Lemma~\ref{lemma:covering_lb}.3, since $A^{(i)}$ and $B^{(j)}$ are just shifts of $X^{(i)}$ and $Y^{(j)}$.
  
  For even $i$ we have $\min(A^{(i)}) \ge i \cdot M^2 + M$ and $\min(B^{(i)}) \ge (g-i)\cdot M^2$ and thus $\min(A^{(i)} + B^{(i)}) \ge g \cdot M^2 + M > u$, which shows claim 2.
  
  For odd $i$ we have $\min(A^{(i)}) \ge i \cdot M^2$ and thus $\min(A^{(i)} + B^{(i)}) \ge g \cdot M^2$. Similarly, we have $\max(A^{(i)}) \le i \cdot M^2 + \sigma$ and $\max(B^{(j)}) \le (g-i)\cdot M^2 + \sigma$ and thus $\max(A^{(i)} + B^{(i)}) \le g \cdot M^2 + 2 \sigma$.
\end{proof}
The above claim allows us to bound
\[
  \out \;=\; \big( A \cap B \big) \cap [u] \;\le\; \bigg( \sum_{i \ne j} |A^{(i)} + B^{(j)}| \bigg) + \bigg| \bigcup_{\text{odd }i} \big( A^{(i)} + B^{(i)} \big) \cap [u] \bigg| 
  \;\le\; g^2 \cdot \alpha + 2\sigma + 1.
\]

It remains to analyze coverings. So let $\mathcal{C}$ be a rectangle covering of $(A,B,[u])$. See Figure~\ref{fig:lb_covering} for an illustration.
We construct a graph $G$ with vertex set $V(G)$ consisting of all odd integers $1 \le i \le g$. We put an edge $(i,i+2)$ into $E(G)$ if there exists a rectangle in $\mathcal{C}$ that contains a pair in $A^{(i)} \times B^{(i)}$ as well as a pair in $A^{(i+2)} \times B^{(i+2)}$. Note that such a rectangle contains \emph{all} pairs in $A^{(i+1)} \times B^{(i+1)}$ and thus has cost at least 
\[ \big| A^{(i+1)} + B^{(i+1)} \big| = \sigma. \]
We now consider two cases, depending on the number of edges in $G$. 

If $G$ has more than $g/4$ edges, then for at least $g/4$ even integers $i$ all pairs in $A^{(i)} \times B^{(i)}$ are covered by $\mathcal{C}$. Since $A^{(i)} + B^{(i)} \subseteq [g \cdot M^2 + i \cdot M, g \cdot M^2 + i \cdot M + 2\sigma]$ has size $\sigma$, and $M$ is large, all sumsets $A^{(i)} + B^{(i)}$ for even $i$ are disjoint, and therefore any covering of $g/4$ such sumsets has cost at least $g \sigma /4$.

Otherwise, if $G$ has at most $g/4$ edges, then it has at least $g/4$ components. Note that each component must be covered by distinct rectangles in $\mathcal{C}$. Since each component requires cost at least $|A^{(i)} + B^{(i)}| = \sigma$, the covering has a total cost of at least $g \sigma / 4$. This finishes the proof.
\end{proof}

We are now ready to prove Theorem \ref{thm:covering_lb}.

\begin{proof}
Lemma~\ref{lem:lb_setsAB} yields for any $\delta > 0$ and integers $m,t$, where $t$ is even, a tuple $(A,B,[u])$ with
\[ \out \;:=\; \big| (A+B) \cap [u] \big| \;=\; \Oh_\delta\big( 2^{(2 - 2h(\delta) + \delta/2)t} m^{(1-\delta/2)t} + m^t \big), \]
and any rectangle covering of $(A,B,[u])$ has cost $\Omega_\delta( 2^{(1-h(\delta))t} m^t )$.
  Observe that we can write this cost bound in the form $\Omega_\delta (\out^c)$ for 
  \[ c := \frac{ (1 - h(\delta))t + t \log m }{ \max\{ (2-2h(\delta)+\delta/2)t + (1-\delta/2) t \log m,\, t \log m\} }. \]
  Note that $t$ cancels in this expression. We numerically optimize $c = c(\delta,m)$ by setting $m := 10$ and $\delta := 0.2709$, obtaining $c \ge 1.047$.
  In particular, we constructed an infinite sequence of tuples $(A,B,[u])$ for which any rectangle covering has cost $\Omega(\out^{1.047})$, which proves Theorem~\ref{thm:covering_lb}.
\end{proof}

\begin{figure}
\begin{center}
\begin{tikzpicture}

\def\foo{{
0,1,2,3,4,5,6,7,8,9,
}};

\foreach \i in {1,...,9} {
        \draw [very thin,gray] (\i,\yMin) -- (\i,\xMax+1);
    }
\foreach \i in {1,...,9} {
        \draw [very thin,gray,] (\xMin,\i) -- (\xMax+1,\i); 
    }
\foreach \i in {1,...,9}{
	\draw node at  (\i+0.5,\yMin-0.5) {$A^{(\i)}$};
}
\foreach \i in {1,...,9}{
	\draw node at (\xMin-0.5,10-\i+0.5) {$B^{(\i)}$};
}

\foreach \i in {0,...,8} {
	\foreach \j in {0,...,8} {
		\ifnum \j < \i
				\draw [fill=gray!50] (9-\foo[\i],\foo[\j]+1) rectangle (10 -\foo[\i] ,\foo[\j] +2);
		\fi
	}
}
	\foreach \i in {0,2,4,6,8} \draw [fill=gray!50] (9-\foo[\i],\foo[\i]+1) rectangle (10-\foo[\i],\foo[\i]+2);
	\foreach \i in {0,2,4,6,8} \draw node[red,scale=1] at  (\xMin+ \foo[\i] +0.45,\yMax - \foo[\i] + 0.5) { $0$ };
	\foreach \i in {2,4,6,8} \draw node[red,scale=0.8] at  (\xMin+ \foo[\i-1] +0.45,\yMax - \foo[\i-1] + 0.5) { $\i$ };
	\draw [thick, blue] (\xMin+2, \yMin) rectangle (\xMin + 6.75, \yMin+5.25);

\end{tikzpicture}
\end{center}
\caption{A tuple $(A,B,[u])$ constructed in the proof of Theorem \ref{thm:covering_lb}. In gray we mark all the pairs $(i,j)$ such that $A^{(i)} + B^{(j)}$ is contained in $[u]$. On the diagonal only every second pair belongs to the output. The numbers along the diagonal indicate the shift $E(i)$. The five boxes marked as ``0'' contribute to $(A+B)\cap [u]$ exactly the same numbers.  The rectangles marked with $2,4,6$, and $8$ have disjoint sumsets. By definition, a covering algorithm should cover all the boxes marked with~$0$. The indicated blue rectangle covers numbers from two boxes marked by $0$, and thus also fully contains the box corresponding to pair $(A^{(6)},B^{(6)})$ in between, which does not belong to the output.}
\label{fig:lb_covering}
\end{figure}
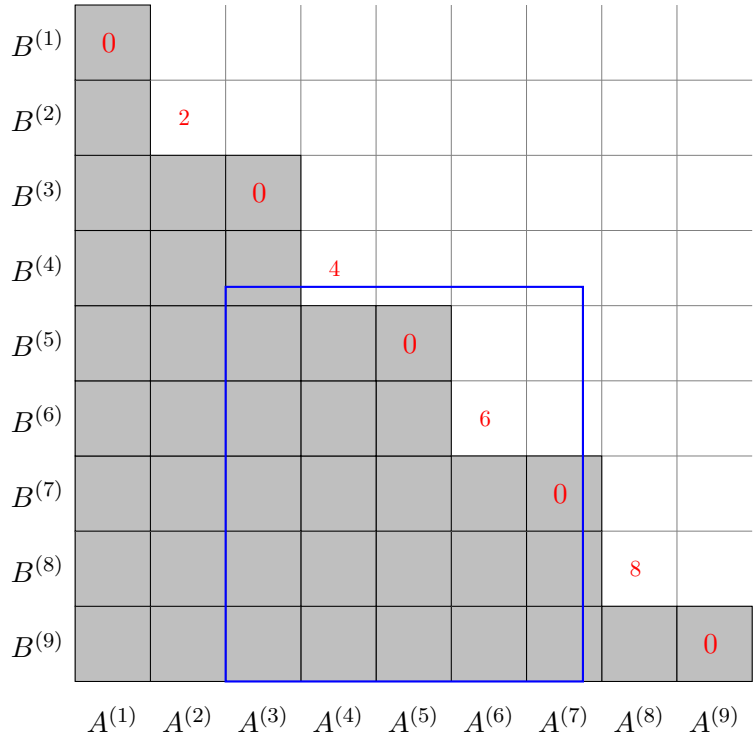

\section{Reducing \textsc{SubsetSum} to Prefix-Restricted Sumset Computation} \label{sec:subset_sum}

This section is devoted to proving Theorems~\ref{thm:subsetsum_result} and~\ref{thm:add_relaxed_upper_bound}. In particular, we give an output-sensitivity-preserving reduction from \textsc{SubsetSum} to top-$k$-convolution. For that, we need the following definition.
\begin{definition}[$(\alpha,\zeta)$-effective algorithm] 
Let sets $A,B \subseteq [u]$ with $|A|=n,|B|=m$.
An $(\alpha,\zeta)$-effective algorithm is an algorithm for prefix-restricted sumset computation algorithm on instance $(A,B,[u])$ which runs in time 
\[\tilde{O}(m +n  + |(A+B)\cap [(1+\zeta)u]|^{1+\alpha}).\]
\end{definition}

The reduction is based on randomly dividing the input and conquering with a prefix-restricted sumset computation. In order to prove correctness, apart from the argumentation in~\cite{Bring17},  we additionally we need number-theoretic condition on how the set of attainable subset sums decreases under partition; this is captured in Lemma~\ref{lem:splitting_lemma}. The base cases of the algorithm are those instances where either there is only one element or every element is sufficiently large with respect to the target. 

\subsection{Handling Large Elements}

In this subsection we treat one of the base instances of the more general algorithm, which are the instances where all elements are large with respect to the target. For a small technical reason in later subsections (in particular, in order to afford to take a union bound over all recursive calls), we need to define small numbers with respect to two parameters $u,t$. We shall analyze Algorithm~\ref{alg:large_elements}.  We classify an element as ``heavy'' if it is larger than $\frac{u}{\beta \log^3 t}$.

\begin{algorithm}[!hb]\caption{}\label{alg:large_elements}
\begin{algorithmic}[1]
\Procedure{\textsc{SubsetSumforLargeElements}}{$X,u,t,\delta$}  \Comment{$x \in \left[\frac{u}{ \beta\log^3 t},u\right], \forall x \in X; \beta = \Theta(1)$}
\State $O \leftarrow \emptyset$					
\State $R \leftarrow \Theta( \log(t/\delta) )$
\State $B \leftarrow 2\beta^2 \log^6 t$
\For { $ r \in [R]$}
	\State Color $X$ randomly with $B$ different colors.	\label{lin:color_coding_large}
	\For {$b \in [B]$}
	\State $X^{(r,b)} \leftarrow $ (elements of $X$ which have received color $b$)$\bigcup \{0\}$.
	\EndFor
	\State $O^{(r)} \leftarrow \emptyset$
	\For {$b \in [B]$} \label{lin:large_sumset1}
	\State $O^{(r)} \leftarrow \left( O^{(r)} + X^{(r,b)} ) \right)\bigcap [u]$ \Comment{Oracle call with failure probability $\frac{\delta}{2 B R}$} \label{lin:boost}
	\EndFor\label{lin:large_sumset2}
	\State $ O \leftarrow O \cup O^{(r)}$
\EndFor
\State Return $O$
\EndProcedure
\end{algorithmic}
\end{algorithm}

The instance consisting solely of large items is much easier to solve, since only a polylogarithmic number of elements can participate in a subset sum which is at most $t$. Our algorithm in the next subsection will be recursive, and the next lemma will serve as one of the the bases of the recursion. 
 
\begin{lemma}[Guarantee for large elements]\label{lem:large_items}
Suppose there exists an $(\alpha,\zeta)$-effective algorithm for solving $(A,B,u)$ prefix-resticted sumset computation with probability $9/10$. Let parameters $u\leq t$. Let $X \subseteq \left[\frac{u}{ \beta\log^3 t},u\right]$ a set of positive integers, where $\beta$ is a sufficiently large absolute constant. We can find $\mathcal{S}(X,u)$ using Algorithm \ref{alg:large_elements} in time 
\[	 \tilde{O} \left( |\mathcal{S}(X,u+\zeta \cdot u)|^{1+\alpha} \right), \]
 with probability $1-\delta$. We remind the reader that $\tilde{O}(\cdot)$ hides factors also in $\log t$.
\end{lemma}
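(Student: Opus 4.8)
The plan is to analyze Algorithm~\ref{alg:large_elements} via the standard color-coding argument from~\cite{Bring17}, pushing the entire difficulty into two places: (i) arguing that the oracle calls stay output-sensitive with respect to $|\mathcal{S}(X,(1+\zeta)u)|$, and (ii) handling the union bound over all recursive invocations (which is why the failure probability is split so finely). First I would observe that since every element of $X$ exceeds $u/(\beta\log^3 t)$, any subset $Y\subseteq X$ with $\Sigma(Y)\le u$ has $|Y|\le \beta\log^3 t$. Fix such a $Y$. Coloring $X$ with $B=2\beta^2\log^6 t$ colors, the probability that two distinct elements of $Y$ receive the same color is at most $\binom{|Y|}{2}/B \le \tfrac12$ by a union bound, so with probability $\ge \tfrac12$ all elements of $Y$ land in distinct color classes $X^{(r,b)}$. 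When that happens, picking each element of $Y$ from its own class and $0$ from every other class shows $\Sigma(Y)\in \big(X^{(r,1)}+\dots+X^{(r,B)}\big)\cap[u] = O^{(r)}$. Over $R=\Theta(\log(t/\delta))$ independent repetitions, $\Sigma(Y)\in O$ with probability $\ge 1-\delta/(\text{something})$; taking a union bound over all $|\mathcal{S}(X,u)|\le \poly(t)$ target values, all of $\mathcal{S}(X,u)$ is found whp. Conversely $O\subseteq \mathcal{S}(X,u)$ always, since each $O^{(r)}$ is a sumset of subsets of $X$ intersected with $[u]$, so correctness is clean.

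Next I would bound the running time. The key point is that the intermediate sets stay small: in line~\ref{lin:boost} we iteratively compute $O^{(r)} \leftarrow (O^{(r)}+X^{(r,b)})\cap[u]$, and every element ever produced is a subset sum of $X$ that is $\le u$, hence lies in $\mathcal{S}(X,u)$. When we feed this into the $(\alpha,\zeta)$-effective oracle for prefix-restricted sumset computation with target $u$, its running time is $\tilde O(|O^{(r)}| + |X^{(r,b)}| + |(O^{(r)}+X^{(r,b)})\cap[(1+\zeta)u]|^{1+\alpha})$. Here the crucial observation is that $(O^{(r)}+X^{(r,b)})\cap[(1+\zeta)u]$ is still contained in $\mathcal{S}(X,(1+\zeta)u)$, because every such element is a genuine subset sum of $X$ bounded by $(1+\zeta)u$. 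Therefore each oracle call costs $\tilde O(|\mathcal{S}(X,(1+\zeta)u)|^{1+\alpha})$. Since $B=\polylog(t)$ and $R=\polylog(t/\delta)=\polylog(t)$ (absorbing $\log(1/\delta)$ into $\tilde O$), the total number of oracle calls is $BR=\polylog(t)$, and $|X|\le|\mathcal{S}(X,u)|$ as well, giving total time $\tilde O(|\mathcal{S}(X,(1+\zeta)u)|^{1+\alpha})$ as claimed.

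The third ingredient is the probability bookkeeping. Each oracle call is only correct with probability $9/10$; we boost to failure probability $\delta/(2BR)$ per call by standard repetition (running the oracle $O(\log(BR/\delta))$ times and taking the union of outputs — note that for sumset computation, a Las-Vegas-style union of several runs only ever \emph{loses} elements when a run fails, and each individual run's output is always a subset of the true answer, so majority/union is safe). A union bound over the $BR$ calls within all $R$ rounds shows that with probability $\ge 1-\delta/2$ every oracle call returns the correct sumset; conditioned on that, the color-coding argument above succeeds for each fixed target subset sum with probability $\ge 1-\delta/(2|\mathcal{S}(X,u)|)$ after the $R$ repetitions, and another union bound over the $\le|\mathcal{S}(X,u)|$ relevant targets gives overall success probability $\ge 1-\delta$. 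Finally, I would note the reason the definition of ``heavy'' is phrased with two parameters $u,t$: in the recursive algorithm of the next subsection this lemma is invoked on many subinstances with a common global target $t$ but varying local bounds $u$, and we want a \emph{single} union bound over all of them; phrasing thresholds and repetition counts in terms of $\log t$ (not $\log u$) makes $R=\Theta(\log(t/\delta))$ uniform across all calls.

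The main obstacle I anticipate is not any single step but making precise the claim that the intermediate sets in lines~\ref{lin:large_sumset1}--\ref{lin:large_sumset2} never blow up beyond $|\mathcal{S}(X,(1+\zeta)u)|$ — one must be careful that the oracle's running time is measured against $|(A+B)\cap[(1+\zeta)u]|$ for the \emph{current} $A=O^{(r)}$, $B=X^{(r,b)}$, and argue that this set injects into $\mathcal{S}(X,(1+\zeta)u)$; this is where the hypothesis $X\subseteq[u/(\beta\log^3 t),\,u]$ and the resulting bound on $|Y|$ is not actually needed for the running time but \emph{is} needed for correctness (to guarantee few color classes suffice). Keeping these two uses of the largeness hypothesis separate, and not conflating "output-size of the oracle call" with "output-size of the final answer", is the delicate part.
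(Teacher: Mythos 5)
Your proposal matches the paper's argument essentially step for step: the $|Y|\le\beta\log^3 t$ bound from the largeness hypothesis, the birthday-paradox calculation showing a single coloring succeeds with probability $\ge 1/2$, the $R$-fold repetition followed by a union bound over the at most $|\mathcal{S}(X,u)|\le u$ targets, and—most importantly for the running time—the observation that every element produced by an oracle call lies in $\mathcal{S}(X,(1+\zeta)u)$, which is the step the paper states more tersely ("each computation involved is of the form $(A+B)\cap[u]$ with $A,B\subseteq\mathcal{S}(X,u)$"). The one small deviation is your boosting step: you take a union of outputs of repeated oracle calls, justifying it by the assertion that each run's output is always a subset of the true sumset; this is true for the Las Vegas algorithms actually used in the paper, but it is not guaranteed by the bare hypothesis of an $(\alpha,\zeta)$-effective algorithm succeeding with probability $9/10$, which is why the paper instead runs $\Theta(\log(1/\delta))$ copies in parallel, waits until a $7/10$ fraction halt, and returns the majority answer. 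If you replace the union by this majority-vote mechanism (or explicitly add the Las Vegas assumption), your argument is airtight.
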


\begin{proof}
First of all, to accommodate the call in line \ref{lin:boost} we note that any $(a,\zeta)$-effective algorithm with success probability $9/10$ can be turned  to an $(a,\zeta)$-effective algorithm with success probability $1-\delta$ by running in parallel $\Theta(\log (1/\delta))$ independent copies and stopping when a $7/10$ fraction of them has halted. Then at least half of them will have returned the same answer, and we can return this. Fix $x \in \mathcal{S}(X,t)$ and let $I \subseteq X$ such that $\Sigma(I) = x$. Since every element of $X$ is at least $u/(\beta \log^3 t)$, it should be the case that $|I| \leq \beta \log^3 t$. For a fixed $r \in [R]$, the probability that all elements in $I$ are colored with a different color is at least $\frac{1}{2}$ by standard arguments. Conditioning on the latter event happening, the computation in Lines~\ref{lin:large_sumset1},~\ref{lin:large_sumset2} will contain $x$ with probability, by the same argument as the one appearing in~\cite[Lemma 3.1]{Bring17}. In particular, $x \in X^{(r,1)} + \ldots + X^{(r,B)}$, by the fact that every element of $I$ belongs to a distinct set $X^{(r,b)}$, and those sets contain $0 \in X^{(r,b)}$ for all $b$.

All calls to prefix-restricted sumset computation will succeed with probability $\delta/2$ by a union bound. Repeating $R= \Theta(\log(u/\delta))$ times ensures that the coloring in line \ref{lin:color_coding_large} will color every element of $I$ with a different color, i.e. for some $r$ it hold that $x \in X^{(r,1)} + \ldots + X^{(r,B)}$ except with probability $\left(\frac{1}{2}\right)^R = \delta/ (2u )$. In turn, this implies that $x$ will be found with probability $ 1- \frac{\delta}{2u}$. Taking a union-bound over the at most $u$ possible values of $x$ we obtain that $O$ shall contain every $x \in S(X,u)$ with probability $1- \left(\frac{\delta}{2} + \frac{\delta}{2}\right) = 1-\delta$. The running time of the algorithm follows by the fact that each computation involved is of the form $(A + B) \cap [u]$ with $A,B\in \mathcal{S}(X,u)$, and there are $R \cdot B = O( \log^6t \cdot \log(u/\delta))$ such computations. This finishes the proof.

\end{proof}

\subsection{General Algorithm}\label{sec:main_subset_sum}

\begin{algorithm}[!t]\caption{}\label{alg:general_algo}
\begin{algorithmic}[1]
\Procedure{\textsc{SubsetSumReduction}}{$X,u,t$}   \Comment{ $u \leq t$}
\State $X^{(S)} \leftarrow X \cap \left[\frac{u}{\beta \log^3 t}\right] $	\Comment{Partition $X$ to small and large element}
\State $ O \leftarrow \textsc{SubsetSumforLargeElements}(X\setminus X^{(S)},u,t,1/\poly(u))$\label{lin:large_call}
\State $X^{(1)} \leftarrow$ Sample $X^{(S)}$ at rate $1/2$  \label{line:splitting}  
\State $X^{(2)} \leftarrow X^{(S)} \setminus X^{(1)}$
\State $\epsilon \leftarrow \frac{1}{\log t}$
\State $O_1 \leftarrow \textsc{SubsetSumReduction}\left(X^{(1)},(1+\epsilon)\frac{u}{2},t\right)$ \label{lin:general_lin1}
\State $O_2 \leftarrow \textsc{SubsetSumReduction}\left(X^{(2)},(1+\epsilon)\frac{u}{2},t\right)$ \label{lin:general_lin2}
 \State Return $ (O + (O_1 + O_2) \cap [u]) \cap [u]$ \Comment{Oracle call with failure probability $1/\poly(t)$}
\EndProcedure
\end{algorithmic}
\end{algorithm}

The reduction is presented in Algorithm~\ref{alg:general_algo}. As mentioned in the overview, the algorithm partitions the set $X$ to small elements (set $X^{(S)}$) and large elements (set $X\setminus X^{(S)}$).  The large elements are handled by the algorithm presented in the previous subsection, Line~\ref{lin:large_call}. Next, $X^{(S)}$ is split to $X^{(1)},X^{(2)}$ randomly in Line~\ref{line:splitting}, and recursion takes places in each part with the appropriate change of the target (Lines~\ref{lin:general_lin1},\ref{lin:general_lin2}).

\subsection{A Number-Theoretic Lemma for the decrease of Subset Sums}

The crux of the analysis is the following technical lemma, which postulates that the number of subsets sums of a set decreases in a suitable way when halving the set. This will allows us to control the running time of Algorithm~\ref{alg:general_algo}. Note that this holds \emph{for any} partitioning, not only for a random one.

\begin{lemma} [Number of subset sums decreases appropriately]\label{lem:splitting_lemma}
 Let $Z \subseteq [u]$ be partitioned into $Z = Z^{(1)} \cup Z^{(2)}$. Set $\mu := \mathrm{max}(Z) / u$ and assume $\mu \leq \frac{1}{16}$. Then for any $0 \le \epsilon \leq \frac{1}{4}$ we have
\begin{align*}
\left|\mathcal{S}\left(Z^{(1)},(1+\epsilon)\frac{u}{2}\right)\right| + \left|\mathcal{S}\left(Z^{(2)},(1+\epsilon)\frac{u}{2}\right)\right| \leq \frac{|\mathcal{S}(Z,u)| +1}{ 1 - 2 \epsilon - 4 \mu}
\end{align*}
\end{lemma}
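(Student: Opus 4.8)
The plan is to construct, for each "overflowing" subset sum of $Z^{(1)}$ or $Z^{(2)}$, a distinct witness inside $\mathcal{S}(Z,u)$, thereby bounding the two restricted sizes in terms of $|\mathcal{S}(Z,u)|$. Write $\sigma_1 := \Sigma(Z^{(1)})$ and $\sigma_2 := \Sigma(Z^{(2)})$, so $\sigma_1+\sigma_2 = \Sigma(Z)$. First I would handle the case where \emph{both} $\sigma_i \le (1+\epsilon)\tfrac u2$: then $\Sigma(Z) = \sigma_1+\sigma_2 \le (1+\epsilon)u$, but since all elements are at most $\mu u \le \tfrac u{16}$, one can check (greedily removing elements) that $\mathcal{S}(Z,u)$ actually contains every attainable subset sum in the "dense" regime, and in particular $|\mathcal{S}(Z^{(i)},(1+\epsilon)\tfrac u2)|$ is already $\le |\mathcal{S}(Z^{(i)}, u)| \le |\mathcal{S}(Z,u)|$; the factor $1/(1-2\epsilon-4\mu)$ is then trivially enough (this sub-case is the "easy" one and likely reduces to the argument in the commented-out baby lemma \texttt{lem:baby\_sumset\_dropdown}). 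The interesting case is when at least one $\sigma_i$, say $\sigma_2$, satisfies $\sigma_2 > (1+\epsilon)\tfrac u2$.

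The main step is the injection. Fix $s \in \mathcal{S}(Z^{(1)},(1+\epsilon)\tfrac u2)$ with witness $Y \subseteq Z^{(1)}$, $\Sigma(Y)=s$. I want to map $Y$ to a subset $\phi(Y) \subseteq Z$ with $\Sigma(\phi(Y)) \le u$ (so it lies in $\mathcal{S}(Z,u)$) in an injective, "type-tagged" way so that the images of $Z^{(1)}$-witnesses are disjoint from the images of $Z^{(2)}$-witnesses. The natural choice: since $\sigma_2$ is large, I would add to $Y$ a carefully chosen sub-collection $W \subseteq Z^{(2)}$ whose sum lands $\Sigma(Y)+\Sigma(W)$ in a prescribed window of width $< \min(Z^{(2)}) \le \mu u$ just below $u$; because consecutive partial sums of (a sorted prefix of) $Z^{(2)}$ jump by at most $\mu u$, such a $W$ exists once there is "room", i.e. once $s \le u - (\text{something})$. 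For the few values of $s$ too close to $(1+\epsilon)\tfrac u2$ where this padding cannot bring the sum up to near $u$, I would instead pad toward a \emph{different}, lower target window, again of width $\le \mu u$; the point is that the total number of distinct target windows used is $O(1)$ and between $Z^{(1)}$- and $Z^{(2)}$-sourced elements we tag with complementary windows so the images never collide. Counting: each of $|\mathcal{S}(Z^{(1)},(1+\epsilon)u/2)| + |\mathcal{S}(Z^{(2)},(1+\epsilon)u/2)| - O(1)$ sums maps to a distinct element of $\mathcal{S}(Z,u)$ lying in a union of windows of total length $\le (2\epsilon + 4\mu)u \cdot |\mathcal{S}(Z,u)|$-ish — more precisely, the fraction of $\mathcal{S}(Z,u)$ that is \emph{not} hit is at least $(1 - 2\epsilon - 4\mu)$ of it, which after rearranging gives
\[
  \left|\mathcal{S}\!\left(Z^{(1)},(1+\epsilon)\tfrac u2\right)\right| + \left|\mathcal{S}\!\left(Z^{(2)},(1+\epsilon)\tfrac u2\right)\right| \;\le\; \frac{|\mathcal{S}(Z,u)|+1}{1-2\epsilon-4\mu}.
\]

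The hard part will be making the injection genuinely \emph{simultaneous} for both $Z^{(1)}$ and $Z^{(2)}$ witnesses with the right quantitative window budget: the numerology $2\epsilon$ (coming from the slack in the target $(1+\epsilon)\tfrac u2$, doubled because there are two halves) plus $4\mu$ (coming from the at-most-$\mu u$ granularity of the padding sets, with a small constant loss for boundary windows) has to come out exactly, and one must verify that $\mu \le \tfrac1{16}$, $\epsilon \le \tfrac14$ keep all the intermediate intervals nonempty and inside $[0,u]$ so that the padding sets actually exist. I expect the cleanest route is to sort $Z^{(2)}$, use the standard fact that the partial sums of a sequence with max element $\le \mu u$ and total $> (1+\epsilon)u/2$ form an $\mu u$-net of $[0,(1+\epsilon)u/2]$, and then do a direct interval-packing computation; a symmetric argument with the roles of $Z^{(1)},Z^{(2)}$ swapped (using that $\sigma_1 \le \Sigma(Z) - \sigma_2$) handles the $\mathcal{S}(Z^{(2)},\cdot)$ term and is where the second copy of $\epsilon$ enters. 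I would present the two injections in parallel and only track how much of $\mathcal{S}(Z,u)$ remains uncovered after both.
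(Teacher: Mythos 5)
Your high-level instinct—pad a witness with a partial sum of the other side to land inside $[0,u]$ and thereby produce elements of $\mathcal{S}(Z,u)$—matches the mechanism used in the paper's proof, but the counting you sketch does not cohere and the one idea that actually produces the $1/(1-2\epsilon-4\mu)$ factor is absent.

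Concretely: you first assert that ``each of $|A|+|B|-O(1)$ sums maps to a \emph{distinct} element of $\mathcal{S}(Z,u)$,'' which, taken literally, would prove the much stronger bound $|A|+|B|\le |\mathcal{S}(Z,u)|+O(1)$. But in the same breath you say ``the fraction of $\mathcal{S}(Z,u)$ not hit is at least $(1-2\epsilon-4\mu)$,'' which would instead give $|A|+|B|\le (2\epsilon+4\mu)|\mathcal{S}(Z,u)|+O(1)$—a bound in the wrong direction. Neither reading ``rearranges'' to the target inequality. The factor $1/(1-2\epsilon-4\mu)$ cannot be recovered from a global one-shot injection (the strong bound it would yield is simply not what the lemma claims, and the lemma is stated that way because boundary collisions really do occur), so the plan as written has no valid path to the stated inequality.

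The missing idea is structural. The paper isolates the \emph{boundary band} $A':=A\cap\big((1-\epsilon-2\mu)\tfrac u2,(1+\epsilon)\tfrac u2\big]$ and proves two separate counts. The first (your padding idea, but done carefully) gives $|C|\ge |A|+|B|-|A'|-1$: the boundary elements are the ones you can lose. The second, which you do not mention, shows $|C|\ge |A|+|A'|\big(\tfrac1{2\epsilon+4\mu}-1\big)$ by constructing a \emph{chain} of nested padding sets $P_{i_0}\subseteq P_{i_0+1}\subseteq\cdots$ in $Z^{(2)}$ whose sums hit $\approx 1/(2\epsilon+4\mu)$ disjoint windows of width $\mu u$, and shifting the \emph{same} set $A'$ by each $\Sigma(P_i)$—recycling $A'$ roughly $1/(2\epsilon+4\mu)$ times. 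The lemma then follows by a case split on whether $|A'|\le(2\epsilon+4\mu)|B|$, combining the two claims. This recycling-plus-case-split is exactly where the rational factor comes from, and is the technical content of the proof; without it your window budget does not close. Your ``easy case'' ($\sigma_1,\sigma_2\le(1+\epsilon)u/2$) hand-off to the baby lemma is also not sound as stated: that lemma requires a particular interleaved split and a $\sigma_1\le\sigma_2$ normalization that you have not arranged, and the paper's actual case analysis conditions on $\max(A),\max(B)$ versus $(1-\epsilon)\tfrac u2$, not on $\sigma_1,\sigma_2$.
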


\begin{proof}
We denote 
\begin{align*}
A &:= \mathcal{S}\left(Z^{(1)},(1+\epsilon)\frac{u}{2}\right),\\
B &:= \mathcal{S}\left(Z^{(2)},(1+\epsilon)\frac{u}{2}\right), \\
C &:= \mathcal{S}(Z,u),\\
A' &:= A \cap\left((1-\epsilon-2\mu)\frac{u}{2}, (1+\epsilon)\frac{u}{2}\right].
\end{align*}
With this notation, our goal is to show
\begin{align} \label{eq:mytoshow}
  |C| \ge (|A|+|B|)(1-2\epsilon-4\mu) - 1.
\end{align}



By symmetry, without loss of generality we can assume $\max(A) \le \max(B)$. 

It remains to consider the case $\max(A), \max(B) > (1-\epsilon) \frac u2$.

 The statement is trivial if $\mathrm{max}(A) \leq (1-\epsilon)\frac{u}{2}$ (or $\mathrm{max}(B) \leq (1-\epsilon)\frac{u}{2})$, since then for $x \in B \setminus \{ 0 \}$ we can map $I_Y(x)$ to $I_Y(x) \cup W$, yielding the following:  $\Sigma(I_Y(x) \cup W) = x + \Sigma(W) > \mathrm{max}(A)$, obtaining $|B|-1$ distinct subset sums above $\mathrm{max}(A)$, from which we conclude that $|C| \geq |A| + |B| - 1$.

Let us now assume that this is not the case, and thus $\mathrm{max}(A) \geq (1-\epsilon)\frac{u}{2}$ and $\mathrm{max}(B) \geq (1-\epsilon)\frac{u}{2}$. We shall need the following two claims.

\begin{claim} $ |C| \geq |A| + |B| - |A'|-1$ \end{claim} \label{claim:claim_1}
\begin{proof}
Note that
\begin{align*}
\left|C \cap \left[0,(1-\epsilon-2\mu)\frac{u}{2} \right]\right| \geq \left|A \cap \left[0,(1-\epsilon-2\mu)\frac{u}{2} \right ]\right|
= |A|-|A'|
\end{align*}

Moreover, since all items are bounded by $\mu u$, we can choose $P \subseteq W$ such that 
\[	\Sigma(P) \in \left[(1-\epsilon-2\mu)\frac{u}{2}, (1-\epsilon)\frac{u}{2}\right].\] 
To see that, let any ordering of elements of $W$, initialize $P$ to the empty set, and start adding elements to it one by one; clearly at some point $\Sigma(P)$ will fall inside the aforementioned interval. Now, for every $x \in B\setminus \{0 \}$, map $I_Y(x)$ to $I_Y(x) \cup P$ to obtain number $\Sigma(I_Y(x) \cup P) = x + \Sigma(P)$; this yields $|B|-1$ different sums, all in the interval $\left[(1-\epsilon-2\mu)\frac{u}{2} +1, u\right]$, and thus disjoint from the numbers in $A$ counted above. Thus, we obtain at least $(|A|-|A'|)+(|B|-1)$ different numbers in $C$, yielding the proof of the claim.

\end{proof}

\begin{claim} $|C| \geq |A| + |A'|  \left(\frac{1}{2\epsilon+4\mu} - 1\right).$  \end{claim} \label{claim:claim_2}

\begin{proof}
In order to prove the desired lower bound, we shall look at two disjoint intervals  $\left[0,(1+\epsilon)\frac{u}{2}\right]$, and $\left[(1+\epsilon)\frac{u}{2}+1,t\right]$.

In the interval $\left[0,(1+\epsilon)\frac{u}{2}\right]$ we shall simply count the numbers in $A$, $\left|A\cap\left[0,(1+\epsilon)\frac{u}{2}\right] \right|\leq |A|$.  

For the other interval we argue as follows. There exists a sequence of sets 
\[	P_{i_0} \subseteq P_{i_0+1} \subseteq P_{i_0+2} \subseteq \ldots \subseteq Y\]
 satisfying

\[	\Sigma(P_i) \in \left[i(\epsilon+2\mu) u +1 , i (\epsilon+2\mu) u + \mu u)	\right],	\]

for all $i \geq 1$ such that
\begin{align}
i (\epsilon+2\mu) u + \mu u \leq \mathrm{max}(B).
\end{align}

Note that $i_0$ is the smallest non-zero $i$ such that the above inequality holds. Call such $i$ good.

To see the existence of such a sequence, intialize $P$ to the empty set and starting adding elements of $Y$ one by one. Since every element is at most $\mu u$ and the $ \left[i(\epsilon+2\mu) u +1 , i (\epsilon+2\mu) u + \mu u)	\right]$ is of length $\mu$ we obtain that existence of such a sequence.

Since $ \mathrm{max}(B) \geq (1-\epsilon)\frac{u}{2}$  all $i$ smaller than 
\begin{align*}
\frac{(1-\epsilon)\frac{u}{2} - \mu u}{ (\epsilon+2\mu)u} = \frac{ 1-\epsilon - 2\mu}{2(\epsilon+2\mu)}\\
= \frac{1}{2\epsilon+4\mu} - \frac{1}{2}.
\end{align*}

are good.

For any $ x \in A' \subseteq \mathcal{S}(X,(1+\epsilon)\frac{u}{2})$ map $I_{W}(x)$ to $I_{W}(x) \cup P_i$, to obtain $|A'|$ different numbers in the interval
\[ \left[i(\epsilon+2\mu) u+1, i (\epsilon+2\mu) u + \mu u)\right] +  \left[(1-\epsilon-2\mu)\frac{u}{2}, (1+\epsilon)\frac{u}{2}\right]  = \] 
\[ \left[i(\epsilon+2\mu)u + (1-\epsilon-2\mu)\frac{u}{2} +1 ,
(i+1)(\epsilon+2\mu)u + (1-\epsilon-2\mu)\frac{u}{2} \right]. \]
The collection of those intervals across all $i$ are pairwise disjoint as well as disjoint from the initial interval $\left[0,(1+\epsilon)\frac{u}{2}\right]$. 

In order for all generated sums to be at most $u$, we also need
$(i+1)(\epsilon+2\mu) u + (1-\epsilon - 2\mu)\frac{u}{2} \leq u$, which boils down to 
$i +1 \leq \frac{ 1+ \epsilon + 2\mu}{2\epsilon+4\mu} = \frac{1}{2} + \frac{1}{2\epsilon+4\mu}$. Since $i$ is an integer, we have at least
$\frac{1}{2\epsilon+4\mu}-1$  valid $i$'s.
Hence, we obtain
\[	|C| \geq |A| + |A'|  \left(\frac{1}{2\epsilon+4\mu} - 1\right).	\]

\end{proof} 
To finish the proof of the lemma we combine the two claims, by considering two cases:
\begin{itemize}
\item Case 1: $|A'| \leq (2\epsilon+4\mu) |B|$.

Then Claim \ref{claim:claim_1} yields
	\[	|C| \geq |A| + |B| - |A'| -1  \geq |A| + |B|(1-2\epsilon-4\mu) -1 \geq (|A| + |B|) (1-2\epsilon-4\mu) -1. \]
\item Case 2: $|A'| \geq (2\epsilon+4\mu) \cdot |B|$

Then Claim \ref{claim:claim_2} yields
\begin{align*}
|C| \geq |A| + |A|'\left(\frac{1}{2\epsilon + 4\mu} \right)\geq \\
|A| + |B|(2\epsilon + 4\mu) \left( \frac{1}{2\epsilon + 4\mu}  \right) \geq \\
 (|A| + |B|)(1-2\epsilon-4\mu).
\end{align*}
\end{itemize}
\end{proof}

\subsection{Putting Everything Together}

We finish the reduction and the proofs of Theorems~\ref{thm:add_relaxed_upper_bound} and~\ref{thm:subsetsum_result}. The algorithms is one call to $\textsc{SubsetSumReduction}(S,t,t)$.

\paragraph{Proof of correctness.}

We show that $\textsc{SubsetSumReduction}(X,t,t)$ returns $\mathcal{S}(X,t)$ with constant probability. This will require that all recursive calls succeed. In particular, for every call in the 
\begin{claim}\label{claim:bernstein}
Consider the execution of $\textsc{SubsetSumReduction}(X,u,t)$. Fix $x \in \mathcal{S}(X,u)$ and set $I = X^{(S)} \cap I_{X}(x)$, i.e. the ``part'' of the representation $x$ which is formed by small elements.
It holds that
\[\mathbb{P} \left\{ \Sigma(I \cap X^{(1)} )  \notin \left[(1+\epsilon)\frac{u}{2}\right] \right\} \leq \frac{1}{\poly(t)}.\] In words, the sum of elements of $I$ which belong to $X^{(1)}$ will be at most $(1+\epsilon)\frac{u}{2}$ with high probability. The analogous statement holds with $X^{(1)}$ replaced with $X^{(2)}$.
\end{claim}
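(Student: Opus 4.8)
The plan is a routine concentration argument via Hoeffding's inequality. By construction in Line~\ref{line:splitting}, the set $X^{(1)}$ is obtained from $X^{(S)}$ by including each element independently with probability $1/2$; hence $I \cap X^{(1)}$ is obtained from $I \subseteq X^{(S)}$ by the same independent rate-$1/2$ sampling. Writing $I = \{y_1,\dots,y_k\}$ and letting $Z_1,\dots,Z_k$ be independent $\mathrm{Bernoulli}(1/2)$ indicators, we have $\Sigma(I \cap X^{(1)}) = \sum_{r=1}^{k} y_r Z_r =: Y$, so that $\E[Y] = \Sigma(I)/2$. Since $I \subseteq I_{X}(x)$ and all elements are positive, $\Sigma(I) \le \Sigma(I_{X}(x)) = x \le u$, and therefore $\E[Y] \le u/2$.

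Next I would reduce the target event to a one-sided deviation above the mean. Because $\E[Y] \le u/2$,
\[ \mathbb{P}\Big\{ Y > (1+\epsilon)\tfrac{u}{2} \Big\} \;\le\; \mathbb{P}\Big\{ Y - \E[Y] > \tfrac{\epsilon u}{2} \Big\}. \]
Each summand $y_r Z_r$ lies in $[0,y_r]$, and since every element of $X^{(S)}$ is small, $y_r \le u/(\beta \log^3 t)$ for all $r$. Hence $\sum_{r=1}^{k} y_r^2 \le \big(\max_r y_r\big)\cdot \Sigma(I) \le \frac{u}{\beta \log^3 t}\cdot u = \frac{u^2}{\beta \log^3 t}$. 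Applying Hoeffding's inequality with $\lambda := \epsilon u/2 = u/(2\log t)$ (recall $\epsilon = 1/\log t$),
\[ \mathbb{P}\Big\{ Y - \E[Y] > \lambda \Big\} \;\le\; \exp\!\Big(-\frac{2\lambda^2}{\sum_{r} y_r^2}\Big) \;\le\; \exp\!\Big(-\frac{2\cdot u^2/(4\log^2 t)}{u^2/(\beta\log^3 t)}\Big) \;=\; \exp\!\Big(-\frac{\beta \log t}{2}\Big) \;=\; t^{-\Omega(\beta)}, \]
which is at most $1/\poly(t)$ once $\beta$ is a sufficiently large absolute constant. The statement for $X^{(2)}$ follows identically, since $X^{(2)} = X^{(S)} \setminus X^{(1)}$, so each element of $X^{(S)}$ lies in $X^{(2)}$ independently with probability $1/2$ — the same distribution that governed $X^{(1)}$.

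The argument is entirely routine; the only points requiring a little care are (i) absorbing the gap between $\E[Y] \le u/2$ and the threshold $(1+\epsilon)u/2$ into a clean $\epsilon u/2$ deviation bound, which is precisely where the bound $\Sigma(I) \le u$ is used, and (ii) keeping the exponent of the failure probability a tunable $\Theta(\beta)$, since the surrounding analysis must union-bound this event over all $O(u)$ choices of $x$ and over all (polynomially many in $t$) recursive calls — choosing $\beta$ large enough makes $t^{-\Omega(\beta)}$ beat any fixed polynomial in $t$. I do not anticipate any genuine obstacle here.
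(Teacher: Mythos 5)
Your proof is correct. You take essentially the same route as the paper — a concentration bound exploiting that every element of $I$ is small (at most $K := u/(\beta\log^3 t)$), which lets you bound $\sum_r y_r^2 \le K\cdot\Sigma(I) \le Ku$ and hence get a $t^{-\Theta(\beta)}$ tail for a deviation of $\epsilon u/2$. The only genuine difference is the choice of concentration inequality: the paper invokes Bernstein's inequality (which produces two exponential terms, $e^{-c\lambda/K}$ and $e^{-c\lambda^2/\sigma^2}$, the second of which dominates since $\sigma^2 \ge \lambda K$ in this regime), while you use Hoeffding's inequality directly. In this parameter regime the two give the same bound up to constants, and Hoeffding is arguably slightly cleaner here since it produces a single term; the paper's choice of Bernstein is slightly more general but not actually needed. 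Both proofs hinge on the identical combinatorial facts: $\Sigma(I)\le x\le u$ and the elementwise bound from the definition of $X^{(S)}$, together with the tunability of $\beta$ to beat any fixed polynomial in $t$ for the subsequent union bound. No gap.
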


\begin{proof}

This claim easily follows by concentration bounds for bounded random variables. We shall use Bernstein's inequality which postulates that for a collection $\mathcal{C}$ of random variables $\{Z_e\}_{e \in \mathcal{C}}$ such that all $Z_e \in [0,K]$, it holds that

	\[	\mathbb{P} \left\{ \left|\sum_e Z_e - \mathbb{E}\sum_e Z_e\right| \geq \lambda \right\} \leq e^{-\frac{c\lambda}{K}} + e^{-\frac{c \lambda^2}{\sigma^2}},		\]

where $\sigma^2 = \sum_e \mathbb{E} \left\{ (Z_e - \mathbb{E}Z_e)^2 \right\}, \lambda \geq 0$ and $c$ is some absolute constant.

We apply the inequality the collection $C:= I$ of independent random variables $\{Z_e\}_{e \in I}=\{ e \cdot \mathbbm{1}_{X^{(1)}}(e) \}_{e \in I}$ and $\lambda = \epsilon \frac{u}{2}$. In words, $Z_e$ is $e$ with probability $1/2$, and $0$ otherwise. We have 
\begin{enumerate}
\item $\mathbb{E}\left[ \sum_e Z_e \right] \leq \frac{x}{2} \leq \frac{u}{2}$.
\item $K = \frac{u}{ \beta \cdot \log^3 t} $ by definition of $X^{(S)}$, and 
\item $\sigma^2 \leq \frac{Kx}{2}$ since  
\[ \sigma^2 \leq \sum_{e \in I} e^2 \cdot \mathbb{E} \left\{ \mathbbm{1}_{X^{(1)}}(e)\right\}  \leq  K \sum_{e \in I} e \cdot \mathbb{E} \left\{ \mathbbm{1}_{X^{(1)}}(e)\right\} = \frac{1}{2} K \sum_{e \in I}  \frac{e}{2} \leq \frac{Kx}{2}.\]
\end{enumerate}

We thus obtain 
	\[	\mathbb{P} \left\{ \Sigma( I \cap X^{(1)})  \geq \epsilon \frac{u}{2} \right\} \leq e^{-\frac{c \beta \cdot \epsilon \log^3 t }{4} }  + e^{-\frac{c \beta \cdot \epsilon^2 u^2 \cdot \log^3 t }{2 u x} } \leq 1/\poly(t),	\]

as long as $\beta$ is sufficiently large compared to $c$. This finishes the proof of the claim.

\end{proof}

Equipped with the above claim, we can now prove correctness of the reduction. Fix $x \in \mathcal{S}(X,t)$ and let $I = I_{X}(x)$. Let also $I_0 = I \cap (X\setminus X^{(S)}), I_1 = I \cap X^{(1)}, I_2 = I \cap X^{(2)}$. Set also $ x = y+ z+w$, where $y = \Sigma(I_0) , z =\Sigma(I_1), w = \Sigma(I_2)$. It holds that $y$ will be returned by Algorithm~\ref{alg:general_algo} with probability $1-1/\poly(t)$ due to the call in Line~\ref{lin:large_call}. If the conclusion of Claim \ref{claim:bernstein} holds for $I$, then this means that both $\Sigma(I \cap X^{(1)})$ and $\Sigma(I \cap X^{(2)})$ are at most $(1+\epsilon)\frac{u}{2}$ and hence $z \in \mathcal{S}(X^{(1)}, (1+\epsilon)\frac{u}{2})$, $w\in \mathcal{S}(X^{(2)}, (1+\epsilon)\frac{u}{2})$. Thus, if the recursive calls in Lines~\ref{lin:general_lin1} and~\ref{lin:general_lin2} as well as the call~\ref{lin:large_call} succeed, then $x = y+z+w$ will be inserted to the output. This means that the correctness of the algorithm is guaranteed on the conclusion of Claim~\ref{claim:bernstein} holding in all recursive calls and on every call to \textsc{SubsetSumforLargeElements} being correct. Since in each recursive call $t$ does not change and remains the same, each recursive splitting succeeds with probability $1-1/\poly(t)$, while every call to $\textsc{SubsetSumforLargeElements}$ succeeds also with probability $1-1/\poly(t)$. This allows for a union-bound over all splitting and all calls to $\textsc{SubsetSumforLargeElements}$.

\paragraph{Proof of Desired Running Time.}

Due to the splitting lemma \ref{lem:splitting_lemma}, the fact that $\epsilon = \frac{1}{\log t } \leq \frac{1}{\log u} \leq \frac{1}{4}$ at all times, and a straightforward induction, we have that the total output size for all problems in the $\ell$th level of the recursion tree during the execution of  $\textsc{SubsetSumReduction}(S,u,t)$ is upper bounded by
	\[	\frac{|\mathcal{S}(X,t)|}{(1-2\epsilon - 2\mu)^\ell} = \tilde{O}( |\mathcal{S}(X,t)| ),	\]


since $\ell \leq \log n \leq \log t$ and $\epsilon = \frac{1}{\log t}, \mu = \frac{1}{\beta \log^3 t}$. Thus, over all recursion levels the total output size is still $\tilde{O}( |\mathcal{S}(X,t)| )$.   This shows that if we had the ideal $(0,0)$-effective algorithm, we would obtain a \textsc{SubsetSum} running in time $\tilde{O}(\mathcal{S}(X,t))$. A similar analysis yields the desired reduction when we plug in a $(\alpha,\zeta)$-effective algorithm.

\paragraph{Obtaining Theorem~\ref{thm:subsetsum_result}.} We plug in the $\tilde{O}(\out^{4/3})$-time algorithm for prefix-restricted sumset computation (Theorem~\ref{thm:technical_core}). 
\paragraph{Obtaining Theorem~\ref{thm:add_relaxed_upper_bound}.} We plug in the algorithm guaranteed by the first part of Theorem~\ref{thm:add_relaxed_upper_bound} and Observation~\ref{obs:sumset_from_covering}.

\section{Acknowledgements}

We are grateful to Shachar Lovett for the resolution of an Additive Combinatorics question in an early stage of this work, which gave the core idea for Theorem \ref{thm:covering_lb}, and for allowing us to include his construction in this paper.

\section{Conclusion and Future Work}

We initiated a line of research which strives for a \textsc{SubsetSum} algorithm that computes the set $\mathcal{S}(X,t)$, consisting of all subsets sums of $X$ below $t$, in near-linear output-sensitive time $\tOh(|\mathcal{S}(X,t)|)$. Our approach lead us to studying a new type of convolution problem: In top-$k$-convolution the task is to compute the lowest $k$ monomials in the product of two sparse polynomials. Many open problems are spawned by our work; here we present questions that are of particular interest to us.

\begin{question}
Understand our notion of \emph{covering} for prefix-restricted sumset computation, either by constructing a better covering or by proving a higher lower bound. Specifically, for non-rectangular coverings so far we have no superlinear lower bounds.
\end{question}

\begin{question}
Design any non-trivial algorithm that is not based on coverings, and thus exploits the additive structure in a different way.
\end{question}

\begin{question}
Are covering algorithms universal? More precisely, can we transform any algorithm into a covering algorithm, with a reasonable blow-up in the running time?
\end{question}

\begin{question}
So far we have no algorithm that always beats Bellman's algorithm with running time $\Oh(n \cdot |\mathcal{S}(X,t)|)$. Specifically, can we solve \textsc{SubsetSum} in time $O(n^{1-\varepsilon} \cdot |\mathcal{S}(X,t)|)$ for any $\varepsilon > 0$? 
A possible approach to this question is to design faster algorithms for prefix-restricted sumset computation in the special situation where $A = \mathcal{S}(X^{(1)},t), B = \mathcal{S}(X^{(2)},t)$.
Do these sets offer exploitable structure, e.g., giving rise to more sophisticated sumset estimates? 
\end{question}

\addcontentsline{toc}{section}{References}
\bibliographystyle{plain}
\bibliography{ref}






\end{document}